\documentclass[11pt]{article}

\usepackage{amsmath,amssymb,amsthm}
\usepackage[authoryear,round,sort]{natbib}
\setcitestyle{aysep={}}
\usepackage{hyperref}
\usepackage{graphicx}
\usepackage[font=small,skip=4pt]{caption}
\usepackage{authblk}
\usepackage[letterpaper,margin=1in]{geometry}
\newcommand{\Pp}{\mathbb{P}}
\newcommand{\E}{\mathbb{E}}
\newcommand{\Var}{\mathrm{Var}}
\newcommand{\Cov}{\mathrm{Cov}}
\newcommand{\Corr}{\mathrm{Corr}}

\theoremstyle{plain}
\newtheorem{theorem}{Theorem}
\newtheorem{corollary}{Corollary}
\theoremstyle{remark}

\title{Majority Correctness in Social Networks:\\
From Well-Mixed Electorates to Complex Networks}

\author[1,3]{Dan Braha\thanks{Corresponding author. Email: \href{mailto:braha@necsi.edu}{braha@necsi.edu}}}
\author[2,3]{Marcus A. M. de Aguiar}

\affil[1]{University of Massachusetts, Dartmouth, Massachusetts, United States of America}
\affil[2]{Gleb Wataghin Institute of Physics, State University of Campinas (UNICAMP), Campinas, Brazil}
\affil[3]{New England Complex Systems Institute, Cambridge, Massachusetts, United States of America}

\date{}

\begin{document}

\begin{titlepage}
\thispagestyle{empty}
\maketitle

\begin{abstract}
\noindent We study majority correctness when voting is preceded by sustained social interaction on a social network. Motivated by the Condorcet Jury Theorem, we consider a binary choice with an objectively correct alternative, where uninformed voters revise their vote intentions through repeated interaction in the presence of competing committed leaders (zealots). In this zealot--contrarian voter model, voters may either imitate or oppose the views they encounter. For fully mixed electorates, we characterize the long-run distribution of votes and the correlation structure induced among voters, and we show, under a standard approximation for random networks, that Erd\H{o}s--R\'enyi networks exhibit the same majority-correctness behavior after an appropriate rescaling of leader influence. Building on these results, we establish a finite-electorate Condorcet-type guarantee: when post-deliberation individual correctness exceeds random choice, a strict majority is more likely to select the correct alternative than a randomly chosen voter. At the same time, we identify an aggregation failure: social interaction can reduce majority accuracy relative to a no-deliberation benchmark in which voters respond only to zealots. As the electorate size tends to infinity, this finite-electorate advantage disappears unless social updating is purely conformist, revealing a tipping point at full conformity: any persistent contrarian updating drives both individual and majority correctness to the random choice level of one half. Simulations on scale-free, ring, and small-world networks further show that topology matters because it shapes the vote correlations generated by social influence: hub-dominated structures generate stronger positive correlations and lower majority accuracy, whereas spatially structured networks generate weaker correlations, preserve a larger effective number of independent judgments, and improve majority accuracy.
\end{abstract}

\medskip
\noindent\textbf{Keywords:} Condorcet Jury Theorem; majority correctness; social influence; correlated voting; social networks; zealot voter model

\medskip
\noindent\textbf{JEL Classification:} D71; D72; D83; D85; C63

\end{titlepage}
\setcounter{page}{2}

\section{Introduction}
\label{intro}

The Condorcet Jury Theorem (CJT) provides a natural benchmark for thinking about when voting yields accurate collective decisions. In its classical form, each voter independently selects the objectively correct alternative with probability $p>0.5$. Under these assumptions, simple majority rule is more likely to choose the correct alternative than any individual voter, and the probability of a correct collective decision increases with electorate size, approaching one as the electorate becomes large \citep{condorcet1785}. For this reason, the CJT has long served as an epistemic justification for democratic decision-making: when voters are individually competent and their judgments are independent, majority rule can track truth more reliably than individual judgment \citep{listgoodin2001,estlund2008}.

Modern work on voting and public opinion, however, emphasizes two features of real electorates that depart from these idealizations. First, voter competence is heterogeneous: individuals differ substantially in how well informed they are about the factual merits of competing alternatives. Second, opinions are not formed in isolation. Voters are exposed to common signals from elites, media, and social ties, and they deliberate, communicate, and influence one another in ways that create dependence across beliefs and votes \citep{zaller1992}. A substantial literature therefore examines how heterogeneity in competence and dependence among voters can weaken, modify, or even overturn the classical CJT conclusions (see Section~\ref{literat}). 

These observations motivate a shift from one-shot aggregation to the communication dynamics that precede a vote. Elections and referendums are typically preceded by sustained exposure to messages, cues, and social discussion. Even when voters begin with limited substantive knowledge, they may learn indirectly through elite discourse and interpersonal communication. The central question is therefore dynamic: under what conditions does opinion formation increase the accuracy of the eventual collective decision, and under what conditions does it instead generate fragility through herding and path dependence?

The ``Condorcet-like'' process studied here makes this shift explicit while remaining deliberately parsimonious. Consider $n$ free voters facing a binary choice between two alternatives, one of which is objectively correct, namely alternative~$1$. A small set of \emph{committed leaders} holds fixed vote intentions throughout the campaign; for example, there may be five leaders, three favoring alternative~$1$ and two favoring alternative~$2$. The remaining individuals, the \emph{free voters}, begin without reliable private information. The population then enters a prolonged campaign phase in which free voters repeatedly revise their vote intentions through interpersonal exposure, so that dependence in votes arises endogenously through social interaction. In the fully mixed baseline, represented by a complete interaction graph, a free voter is selected uniformly at random at each time step. That voter consults another individual chosen uniformly at random from the rest of the population, whether a free voter or a leader, and then updates by either adopting that individual’s current vote or switching to the opposite vote. We refer to this dynamics as the \emph{zealot--contrarian voter model}. In networked versions, the same rule applies, but interactions among free voters are constrained by an underlying graph: each free voter consults only its graph neighbors, while all free voters remain connected to the same fixed set of committed leaders. 

This ``who talks to whom'' structure shapes both how influence spreads and how dependence among voters develops. Degree heterogeneity, clustering, community structure, and path length can affect both the diffusion of opinions and the concentration of influence. After a sufficiently long horizon that brings the system close to stationarity, on the order of $n^{2}$ update attempts in the fully mixed idealization, a referendum is held. This long horizon is not merely a technical convenience. It captures the idea that publics typically have time to observe, discuss, and update before a decision is taken, an intuition that also motivates deliberative democracy proposals \citep{fishkin1991}. From a network perspective, it separates transient dynamics from the long-run regime, allowing accuracy to be assessed in or near the stationary distribution induced by the committed leaders on a given topology. The model does not assume that discussion improves judgment by default. Rather, it provides a clean baseline for studying how sustained exposure to persistent committed voices, together with endogenous social influence, interacts with network structure to shape collective accuracy. In this way, it offers a simple bridge between jury theorem arguments about tracking the truth and computational studies of influence, correlation, and opinion dynamics on networks. 

This setup brings together three literatures that are often studied separately. Epistemic approaches ask when democratic decisions are likely to be correct \citep{condorcet1785,listgoodin2001,estlund2008}. In parallel, research on mass opinion formation emphasizes elite cues, interpersonal diffusion, and the endogenous emergence of correlated beliefs \citep{zaller1992,lenz2012}. A third perspective, from network science and dynamical processes on networks, studies how interaction topology, meaning who is connected to whom and how heterogeneous, clustered, or modular the connections are, shapes the spread of opinions and the resulting long-run collective outcome. The present model links these perspectives by shifting the source of accuracy. The relevant mechanism is no longer a large number of independent voters, each slightly more likely than not to be correct, but the long-run influence of a small set of fixed, competing committed leaders acting on an initially uninformed public as opinions diffuse through the social network. The referendum therefore does not aggregate private signals; it reflects population-level influence dynamics, and its accuracy can depend strongly on network topology.

The updating rule is intentionally behaviorally minimal. It does not model deliberation as the exchange and evaluation of reasons or evidence, and it does not assume Bayesian inference. Instead, it compresses pre-vote communication into a simple heuristic: when a voter updates, the voter either imitates the opinion of a randomly encountered individual or, with some probability, adopts the opposite opinion; otherwise the voter retains its current view. This mechanism is closely related to the voter model introduced by \citet{clifford_sudbury1973} and further developed in foundational work by \citet{holley_liggett1975}. It is also complementary in spirit to classic social learning models such as \citep{degroot1974}, but replaces continuous belief averaging with a discrete imitation rule and focuses on the resulting stationary distribution rather than on convergence to consensus. In voter model terms, the committed leaders are \emph{zealots}, exogenous and unchanging sources that continually inject fixed opinions into the dynamics of the free voters. Unlike the voter model without zealots, which admits absorbing consensus states, the presence of zealots prevents absorption and typically yields a nondegenerate stationary distribution over vote counts, not necessarily symmetric, together with a characteristic relaxation time to that regime. Voter models with zealots have therefore been used to study when small committed groups can shift population-level outcomes, and how the stationary vote share distribution depends on zealot influence and interaction structure \citep{mobilia2003,chinellato2007,xie2011,chinellato2015}. In political terms, zealots can represent unwavering elites, activists, or committed leaders whose positions remain stable while the broader public adapts through repeated exposure.

Under this framing, the substantive questions are sharp and Condorcet in spirit. Does pre-vote social interaction under the influence of committed leaders increase the probability that a strict majority selects the correct alternative beyond the stationary correctness of a randomly chosen free voter? If so, under what conditions, if any, does majority correctness approach certainty as the electorate grows? These questions naturally lead to benchmark comparisons. In particular, how does majority correctness under the interaction dynamics compare with the counterfactual in which free voters do not influence one another and respond only to the committed leaders, so that votes reflect independent exposure to the same leader environment rather than dependence generated by social interaction? For networked electorates, an equally central issue is structural: how does collective accuracy depend on the topology of social influence? The same set of committed leaders can produce different outcomes on Erd\H{o}s--R\'enyi, scale-free, small-world, and ring networks because these architectures differ in clustering, path length, and the concentration of influence. More broadly, the aim is to understand how accuracy changes as we move away from the CJT idealization of independent signals toward the empirically more typical setting of socially influenced opinions. By modeling the pre-referendum phase explicitly, the process reflects a basic political science reality: collective choices are often the endpoint of campaigns, conversations, and cue-driven learning, rather than a one-shot aggregation of independent private judgments. 

\section{Related Literature}
\label{literat}

The ``Condorcet-like'' process studied in this paper treats the eventual vote as the endpoint of communication dynamics on a social network under external influence from zealots. This framework departs from the classical jury theorem in two natural ways: voters may differ in their effective competence, and their opinions and votes may be statistically dependent. We therefore organize the literature review around three themes that are central to the present analysis: heterogeneity in voter competence, dependence in opinions and votes, and models of voting shaped by pre-vote communication and social network dynamics.

Condorcet's jury theorem (CJT), first articulated in 1785, remains a foundational result in social choice theory \citep{black1958, kazmann1973, grofman1975, mccannon2015}. More broadly, it has long served as a central formal expression of the ``wisdom of crowds,'' influencing work in fields ranging from philosophy and political science to law, economics, and finance \citep{mccannon2015}. Its appeal lies in the idea that, under suitable conditions, collective decisions can outperform individual judgments and become more reliable in larger groups \citep{mccannon2015}.

In its classical form, the CJT rests on several strong assumptions: homogeneous voter competence with $p>0.5$, meaning that every voter is equally more likely than not to choose the correct alternative; statistical independence, so that voting errors are uncorrelated; and simple majority rule applied to these independent judgments, with no pre-vote communication that could induce dependence \citep{kazmann1973, grofman1975, mccannon2015}. Under these conditions, the probability that the majority is correct exceeds that of any individual voter, the nonasymptotic part of the theorem, and rises with group size, converging to $1$ as the number of voters tends to infinity, the asymptotic part. Real electorates, however, rarely satisfy all of these assumptions simultaneously. A large literature has therefore examined how heterogeneity in competence and dependence among votes affect this conclusion, identifying settings in which collective accuracy remains high as well as settings in which it can fail.
 
One major line of research relaxes the assumption of homogeneous competence. Condorcet's original theorem assumes identical voters, but real electorates differ in ability. Early extensions showed that its asymptotic conclusion can remain valid under heterogeneous success probabilities, provided voters are, on average, better than random. More precisely, under independence and mild regularity conditions, majority accuracy still converges to $1$ as group size grows whenever mean competence exceeds $0.5$ \citep{boland1989, owen1989, berend1998}. Finite electorate results are more delicate. If additional voters are substantially less competent, the dilution of average ability can outweigh the benefit of larger numbers, causing majority accuracy to decline initially as the group expands \citep{karotkin2003}. This raises the possibility of an ``optimal'' committee size in some settings, beyond which adding voters reduces performance \citep{karotkin2003}. Nevertheless, majority voting often remains robust: under many competence profiles, the majority's accuracy exceeds the average voter's accuracy and can sometimes even exceed that of the most competent individual in the group \citep{grofman1983thirteen, berend1998, boland1989, benyashar2000nonasymptotic}.

Perhaps the most difficult CJT assumption to relax, and arguably the most consequential in real-world settings, is the independence of voters' information and votes. Once voters' information or choices are correlated, the familiar CJT probability calculations no longer apply in the same way, and the epistemic gains from large electorates can be substantially weakened \citep{nitzan1982optimal, boland1989modelling, ladha1992condorcet, berg1993condorcet, ladha1993condorcet, kaniovski2010aggregation, pivato2017epistemic}. Dependence can arise through shared information sources, common cues, mutual influence, or the contagion of opinions generated by deliberation. One simple and important mechanism is common influence, such as an opinion leader or an external source like advertising or a promotional campaign: many voters respond to the same message, so their decisions move together rather than reflecting independent information. \citet{boland1989modelling} formalize this idea and show that even a single highly influential source can impose an upper bound on group accuracy, no matter how large the electorate becomes. In the limiting case, if many voters simply follow the same source, the collective decision effectively reproduces that source, and enlarging the electorate contributes little additional information. Along similar lines, \citet{dietrich2004model} and \citet{dietrich2013epistemic, dietrich2013independent} show that when voters rely on only a small number of imperfect shared information sources, the asymptotic CJT can fail: enlarging the electorate does not make the collective decision more reliable than those underlying sources. 

More generally, as positive dependence strengthens, the informational advantage of majority rule diminishes. Negative dependence, by contrast, can improve aggregation because voters are more likely to make different mistakes, allowing one voter's error to be offset by another's correct choice. \citet{ladha1992condorcet} makes this point precise in an extension of the CJT with correlated votes, deriving conditions, in terms of average competence and overall dependence, under which majority rule still outperforms the average voter, and showing how negative dependence across groups can increase majority accuracy even when some voters are poorly informed. A related conclusion appears in \citet{berg1993condorcet}, where dependence is modeled using a P\'olya--Eggenberger framework. In that setting, strong positive correlation can prevent majority accuracy from approaching certainty as the electorate grows, whereas weaker or negative dependence can improve the probability of a correct majority. In a complementary approach, \citet{ladha1993condorcet} models dependence through a shared latent informational component. Assuming exchangeable, symmetrically dependent votes, and applying de Finetti's theorem, he represents votes as conditionally i.i.d.\ Bernoulli variables given a random common parameter, which can be interpreted as a shared information source or as a common component in the way evidence is processed. Within this framework, he shows that majority rule can still outperform an individual voter under broad conditions, although the advantage declines as correlation increases. \citet{berend2007monotonicity} provide general conditions under which the nonasymptotic CJT guarantee holds for committees of correlated voters, and \citet{peleg2012extending} extend this line of work by deriving broad necessary and sufficient conditions under which correlated electorates satisfy the CJT. 

Later work shows that the effect of correlation cannot be summarized by a single number without additional structural assumptions. \citet{kaniovski2010aggregation} emphasizes an ``aggregation'' problem: specifying individual competence and pairwise correlation does not uniquely determine the joint distribution of votes, and different dependence structures consistent with the same pairwise correlations can produce very different majority accuracies. He derives two contrasting jury theorem results for homogeneous electorates based on different resolutions of this aggregation problem, and develops a bounds approach that computes the best- and worst-case majority accuracy compatible with a given competence and dependence specification. Complementing this, \citet{pivato2017epistemic} develops a general large population theory of epistemic democracy with correlated voters, showing that truth-tracking can still emerge under correlation when the average covariance between voters becomes small as the population grows. He further shows that this condition is compatible with dependence generated by social networks, for example through mainly local correlations, and with DeGroot-type deliberation when influence is sufficiently diffuse. Overall, these results clarify both why dependence can undermine CJT-style guarantees and when, despite correlation, large electorates can remain reliably accurate. 

Another crucial CJT assumption is that voters form judgments in isolation, without strategic behavior or pre-vote communication. In practice, however, voters observe others' choices, exchange opinions, and sometimes share, withhold, or strategically disclose evidence. These interactions also induce dependence, but through mechanisms that differ from those discussed above. A useful way to organize this literature is around four representative strands discussed below: sequential observational learning (cascades), repeated exchange on networks (na\"ive and Bayesian learning), evidence-based deliberation and disclosure protocols, and reduced-form local-update opinion dynamics (majority dynamics and voter-type rules). The last of these provides the conceptual lens most closely related to our model. These four strands can be compared along two dimensions: first, what is communicated, such as actions, stated beliefs, or underlying evidence; and second, how that information is processed, whether through Bayesian or strategic inference or through reduced-form heuristic updating.

\noindent\textbf{(i)} \emph{Sequential observational learning} models pre-vote communication through the observation of others' actions. A central example is herding through informational cascades. In sequential settings, early actions can shape later choices, and \citet{bikhchandani1992theory} show that once an initial sequence of decisions points in one direction, later voters may rationally disregard their own signals and follow the apparent consensus, potentially locking the group into an incorrect outcome; see also \citep{banerjee1992simple}. This highlights an important limitation of the CJT intuition: large electorates need not aggregate information well when individuals learn from others' actions, because observed choices generate dependence even in the absence of deception or a single common source. Subsequent theoretical, experimental, and empirical work documents cascade and herding dynamics in settings ranging from laboratory markets and financial trading to elections and polling environments \citep{bikhchandani2024information,shachat2022informational, granzier2023coordination,harmon2015anticipating, braha_deaguiar2017}. 

\noindent\textbf{(ii)} \emph{Repeated belief exchange on networks} treats deliberation as iterative opinion exchange over a social graph under either non-Bayesian (na\"ive) or Bayesian learning. Two broad frameworks have been studied. In the non-Bayesian, or na\"ive, approach, often represented by the DeGroot averaging process \citep{degroot1974}, individuals repeatedly take weighted averages of their neighbors' opinions or beliefs and may converge to a consensus. 
A central question is whether that consensus tracks the true state. \citet{golub2010naive} show that if long-run influence is sufficiently diffuse, so that no individual or small group remains disproportionately influential, then as society grows, the consensus is correct with high probability. In fully Bayesian models of network learning, agents update posteriors based on neighbors' reports; \citet{acemoglu2011bayesian} show that even under Bayesian rationality some observation structures lead to learning failures, whereas others support asymptotic learning. More broadly, these results highlight a recurring theme: communication can spread useful information, but it can also create dependence that undermines information aggregation.

\noindent\textbf{(iii)} \emph{Evidence-based deliberation and disclosure} focuses on the exchange of reasons or private evidence, and is commonly organized into normative-democratic, social-choice-theoretic, and game-theoretic perspectives. A distinct literature asks when \emph{deliberation itself} is epistemically beneficial, emphasizing the exchange of evidence rather than merely the exchange of opinions. 
One strand, rooted in normative democratic theory, argues that inclusive discussion can improve truth-tracking by pooling dispersed information and enabling mutual criticism, while also stressing that these epistemic gains depend on institutional and informational conditions \citep{estlund2018epistemic}. A second strand, in social choice theory, develops formal ``wisdom of deliberating crowds'' results. For example, \citet{dietrichspiekermann2025} propose a probabilistic model of evidence-based opinion formation and derive deliberation-specific jury theorem results, clarifying mechanisms through which deliberation can improve truth-tracking while also identifying regimes in which it can reduce majority competence, especially through overcounting widely shared evidence. A third strand, in game theory, models deliberation as strategic or semi-strategic communication and disclosure. In particular, \citet{dingpivato2021} study round-by-round disclosure of private evidence by Bayesian agents who attempt to persuade others toward their current beliefs, and show that full information pooling arises only under specific structural conditions, including the presence of ``neutral'' agents and features of the decision environment. Relative to these evidence-centered models, our framework deliberately abstracts from reason-giving and evidence transmission in order to isolate a minimal baseline channel of social influence. 

\noindent\textbf{(iv)} \emph{Reduced-form local-update opinion dynamics} replace explicit belief or evidence exchange with simple neighbor-based rules, such as majority dynamics. In \emph{majority dynamics}, individuals repeatedly update by adopting the local majority among their neighbors before a final population-wide vote. \citet{mossel2014majority} study this process as a networked analogue of Condorcet-type aggregation: agents begin with i.i.d.\ private opinions that are slightly biased toward the correct alternative, interaction then induces correlation, and the question is whether the final vote selects the correct outcome with probability tending to one as the population grows. They show that the answer depends sharply on network structure. On sufficiently symmetric graphs, efficient aggregation persists. However, they also construct graph families for which interaction blocks aggregation, so that the probability of an incorrect outcome remains bounded away from zero. For expander graphs, they further show that if the initial bias is sufficiently strong relative to the degree and expansion properties of the graph, repeated majority updates can amplify that advantage and lead to eventual unanimity on the correct alternative. More broadly, these models suggest that collective accuracy depends not only on how votes are counted, but also on who influences whom before the vote. 

Our model differs from majority dynamics in two important respects: it uses an update rule based on imitation and incorporates a persistent exogenous influence in the form of committed leaders. This feature connects it naturally to the large opinion dynamics literature on zealotry, especially in statistical physics, where voter updating is studied under persistent external influence from \emph{committed} or \emph{stubborn} individuals, often called zealots or inflexibles, whose states do not change, or effectively do not change. Recent surveys distinguish early ``flexible'' formulations, such as agents who revert toward a preferred opinion at an idiosyncratic rate, from the now standard inflexible zealot idealization; they also distinguish zealotry in which all zealots support the same state from mixed zealotry, in which multiple zealot camps compete \citep{starnini2025opinion}.  In the two-state voter model with mixed zealotry on well-mixed populations, the presence of even a small number of zealots eliminates absorbing consensus and yields a nondegenerate stationary distribution \citep{chinellato2007, chinellato2015}. For equal zealot numbers, the stationary vote count distribution is approximately Gaussian with a width scaling like the inverse square root of the \emph{number of zealots}, rather than the electorate size, implying that a small committed set can prevent strong concentration even in large electorates \citep{chinellato2015,starnini2025opinion}. Related results extend to networks and to nonlinear or noisy voter variants, where mixed zealotry can generate qualitative changes in the stationary distribution, such as transitions between bimodal and unimodal regimes as zealotry density increases, while one-sided zealotry can drive eventual consensus but substantially affect convergence times \citep{starnini2025opinion}. Our contribution belongs to this zealot-driven class, but reframes the outcome metric in explicitly Condorcet-like terms: rather than asking only how zealots shape polarization or vote shares, we ask how majority accuracy changes, and on which topologies, as one moves from the CJT idealization of independent signals to the long-run distribution generated by competing committed camps under socially influenced opinion dynamics. 

Collectively, these literatures highlight both the promise and the limits of the classical CJT. Crowds are not automatically wise. When voters rely on common cues or learn by imitation, the electorate behaves less like a collection of independent individuals with distinct information and more like a single noisy or biased source, weakening the error-canceling benefit of independence. When influence flows through networks with highly uneven connectivity or concentrated power, communication can amplify incorrect information no less than correct information. The broader lesson is that majority accuracy is context-dependent, shaped by both the informational environment and the social structure through which influence propagates.

\section{Contributions and Roadmap}
\label{Roadmap}

Despite substantial progress in relaxing the assumptions of the CJT, tractable results that explicitly incorporate \emph{networked social influence}, while still yielding sharp characterizations of collective accuracy and its dependence on network structure, remain relatively rare. This paper addresses that gap by linking a Condorcet-type correctness criterion to a parsimonious dynamic model of pre-vote communication: a zealot--contrarian voter process in which otherwise uninformed voters update their vote intentions through imitation or counter-imitation in the presence of competing committed leaders.

Our first contribution is analytical. For a fully connected society, represented by a complete graph, we derive the stationary vote count distribution induced by the dynamics and obtain closed-form expressions for the key quantities governing majority accuracy, including the stationary mean and the correlation structure among free voters. We then extend the analysis to randomly connected electorates modeled as Erd\H{o}s--R\'enyi networks, in which each individual interacts with only a subset of the population. Using a mean-field approximation, we show that the stationary vote count distribution on such random networks coincides with that of the fully connected model after an appropriate rescaling of the parameters. Our second contribution concerns collective accuracy. Building on these distributional results, we establish explicit Condorcet-like guarantees by deriving conditions under which the nonasymptotic majority-correctness advantage holds. In the large electorate regime, we obtain a weak asymptotic guarantee for a fully conformist society. By contrast, for societies with even slight contrarian updating, we show that strict majority correctness converges to $0.5$, revealing an aggregation failure driven by social influence. We also identify parameter regimes in which social interaction lowers strict majority correctness relative to a natural ``no-deliberation'' benchmark that removes interpersonal influence among free voters, thereby establishing a second form of aggregation failure driven by social influence. For Erd\H{o}s--R\'enyi electorates, we further show, under the mean-field approximation validated numerically, that strict majority correctness coincides with that of the fully connected model after the same appropriate rescaling of parameters. 

Finally, we assess strict majority correctness across a broader range of network topologies, including scale-free, ring, and small-world networks, by simulation. The results reinforce the paper’s central mechanism. Sustained pre-vote interaction affects collective accuracy mainly through the dependence structure it generates among free voters. Relative to the Erd\H{o}s--R\'enyi benchmark, scale-free networks produce stronger positive voter--voter correlations and lower strict majority correctness. A natural interpretation is that network heterogeneity changes the channels through which committed-leader influence diffuses. Because highly connected voters are sampled, and hence imitated, disproportionately often, they can emerge as secondary opinion leaders, serving as locally prominent intermediaries in the spread of influence.  This hub-driven common source effect is especially consequential in the high-copying, strongly conformist regime, where repeated imitation generates stronger herding among free-voter states; as contrarian updating becomes more common, it disrupts alignment and weakens that dependence. In our setting, where the majority is taken over free voters and candidate~1 is favored by the committed environment, the stronger dependence induced by scale-free topology can reduce strict majority correctness by increasing the variance of the stationary free-vote count and thereby raising the probability that the majority outcome deviates from its mean. Ring networks and low-rewiring small-world networks show the opposite pattern. Their more local interaction structure weakens long-range coordination, lowers pairwise correlations, and yields higher strict majority correctness. Within the small-world family, this advantage is strongest at low rewiring and narrows as the rewiring probability increases, bringing the network closer to an Erd\H{o}s--R\'enyi graph. More broadly, the simulations numerically preserve the two qualitative comparisons established earlier: across all four topologies, a strict majority remains more accurate than a single post-deliberation free voter, but less accurate than the corresponding no-deliberation benchmark. Overall, these findings show that the same committed-leader environment can yield either higher or lower strict majority correctness depending on how network structure, together with the balance between conformist and contrarian updating, shapes vote correlations and thus the amount of effectively independent information that survives until the vote.

This paper is organized as follows. Section~\ref{voter} introduces the modeling framework used throughout the paper by presenting a zealot--contrarian voter model and characterizing its stationary behavior in fully mixed electorates for both the binary and $m$-alternative cases. Since our Condorcet-type correctness criterion is binary, namely whether the correct alternative~1 attains a strict majority, we develop the $m$-alternative extension only to the extent needed to derive the marginal distribution of the vote count for the correct alternative. We then consider the two-alternative model on randomly connected populations represented by Erd\H{o}s--R\'enyi networks. Building on this foundation, Section~\ref{majority} studies strict majority correctness in both fully mixed and randomly connected populations and establishes Condorcet-like inequalities for finite electorates as well as in the large electorate limit. Section~\ref{simulations} reports simulation results that assess and compare strict majority correctness across a broader range of topologies, including scale-free, small-world, and ring networks. Finally, Section~\ref{discussion} discusses the implications for deliberation-based voting systems and outlines directions for future research.

\section{The zealot--contrarian voter model on fully mixed populations}
\label{voter}

This section sets up the dynamical model of pre-vote social interaction used throughout the paper.
The goal is to obtain an explicit stationary law for the vote count process induced by repeated pairwise influence in the
presence of committed leaders (zealots) and anticonformist responses (contrarian copying). In later sections,
this stationary law will serve as the main input for our analysis of strict majority correctness on networks.

\subsection{Two alternatives}
\label{voter-two}

Consider a population of $n+\alpha_1+\alpha_2$ individuals choosing between two alternatives, labeled $1$ and $2$, in a collective decision (e.g., an election). Following the Condorcet Jury Theorem convention, we interpret alternative~$1$ as the objectively correct alternative. Only $n$ individuals are \emph{free voters} (undecided), while the remaining $\alpha_1$ and $\alpha_2$ are \emph{zealots} (committed influencers) supporting alternatives $1$ and $2$, respectively. Zealots never change their states, whereas free voters update their \emph{vote intentions} through social interaction; for brevity we refer to a free voter's current intention as its \emph{vote}. In network language, we refer to free voters as \emph{free nodes} and zealots as \emph{frozen nodes}. Although $\alpha_1$ and $\alpha_2$ are introduced as counts of zealots, all results below extend to arbitrary real $\alpha_1,\alpha_2>0$, in which case they are naturally interpreted as \emph{influence weights} (effective zealot counts) representing exogenous pressure in favor of each alternative. Each node has a \emph{state} (vote) in $\{1,2\}$.
Because the population is fully mixed (complete interaction graph), the system state at time $t$ is completely determined by
the number of free voters supporting alternative~$1$. Let $X_t\in\{0,1,\dots,n\}$ denote this number. Thus $X_t=k$ means that
$k$ free voters currently support alternative~$1$ and $n-k$ support alternative~$2$.

We use the following update rule.
At each time step, a free node is selected uniformly at random. With probability $\ell\in[0,1)$ (\emph{retention/inertia}) it keeps its current vote. With probability $1-\ell$ (\emph{social update}) it draws an \emph{influence source} as follows: each of the other $n-1$ free voters is selected with weight $1$, and two exogenous sources favoring alternatives $1$ and $2$ are selected with weights $\alpha_1$ and $\alpha_2$, respectively. When $\alpha_1,\alpha_2$ are integers, this is equivalent to sampling uniformly from the remaining $n-1+\alpha_1+\alpha_2$ individuals (free or frozen) in an augmented population that contains $\alpha_1$ and $\alpha_2$ frozen nodes. For non-integer $\alpha_i$, it is simply a weighted-sampling representation of external influence. If the sampled source supports alternative $s\in\{1,2\}$, then the updating voter's new vote is
\begin{equation*}
\text{new vote}=\begin{cases}
\text{copies vote $s$}, & \shortstack[l]{\text{w.p. } $r$\\\text{(conformist imitation)}},\\[0.25em]
\text{switches to vote $3-s$}, & \shortstack[l]{\text{w.p. } $1-r$\\\text{(contrarian response)}}.
\end{cases}
\end{equation*}
The parameter $r\in[0,1]$ tunes the society from fully contrarian ($r=0$) to purely imitative ($r=1$), with $r=\tfrac12$
corresponding to an unbiased ``coin-flip'' response to sampled votes.

Since only one free voter is updated at each step, if $X_t=k$ then $X_{t+1}\in\{k-1,k,k+1\}$. Writing
$D:=n-1+\alpha_1+\alpha_2$, the one-step transition probabilities are
\begin{align}
P_{k,k+1}
&=(1-\ell)\,\frac{n-k}{n}\cdot
\frac{r(k+\alpha_1)+(1-r)(n-k-1+\alpha_2)}{D}, \label{eq:P_up_contrarian}\\
P_{k,k-1}
&=(1-\ell)\,\frac{k}{n}\cdot
\frac{r(n-k+\alpha_2)+(1-r)(k-1+\alpha_1)}{D}, \label{eq:P_down_contrarian}
\end{align}
and $P_{k,k}=1-P_{k,k+1}-P_{k,k-1}$ (with the natural boundary conventions). For example, \eqref{eq:P_up_contrarian}
is the probability that the updated voter is currently in state~$2$, $(n-k)/n$, times the probability of a social update,
$(1-\ell)$, times the probability that the update results in adopting state~$1$: either by imitating a sampled $1$-node
(with probability $r(k+\alpha_1)/D$) or by responding contrarily to a sampled $2$-node (with probability $(1-r)(n-k-1+\alpha_2)/D$).

For $\alpha_1,\alpha_2>0$ and $\ell<1$, we have $P_{k,k+1}>0$ for $k<n$ and $P_{k,k-1}>0$ for $k>0$, so the chain has no absorbing states.
Since $P_{k,k}>0$ for all $k=0,1,\dots,n$, the chain is aperiodic. Moreover, every state communicates with every other through $\pm1$ moves
(irreducibility), and therefore the chain admits a unique stationary distribution.

We next establish the stationary distribution for the binary zealot--contrarian voter model on fully mixed populations.

\begin{theorem}
\label{thm:stationary-binary-contrarian}
Let $n\in\mathbb{N}$, $\alpha_1,\alpha_2>0$, $\ell\in[0,1)$, and $r\in[0,1]$. Under the above dynamics,
$\{X_t\}$ is an irreducible and aperiodic birth--death chain on $\{0,1,\dots,n\}$ with a unique stationary distribution
$\pi(k)=\mathbb{P}(X=k)$, independent of $\ell$.

\smallskip
\noindent\emph{(i) The unbiased case $r=\tfrac12$.}
In stationarity,
\[
X\sim \mathrm{Binomial}\!\left(n,\tfrac12\right),
\qquad
\pi(k)=2^{-n}\binom{n}{k},\qquad k=0,1,\dots,n.
\]

\smallskip
\noindent\emph{(ii) The biased case $r\neq\tfrac12$.}
Let $a:=2r-1$ and define
\begin{equation}
\label{eq:def-theta-beta_maintext}
\theta:=\frac{r\alpha_1+(1-r)(n-1+\alpha_2)}{a},
\qquad
\beta:=\frac{(1-r)\alpha_1+r(n-1+\alpha_2)}{a}-n+1.
\end{equation}
Then the unique stationary distribution is
\begin{equation}
\label{eq:pi-binary-contrarian}
\pi(k)=\binom{n}{k}\,\frac{(\theta)_k\,(\beta)_{n-k}}{(\theta+\beta)_n},
\qquad k=0,1,\dots,n,
\end{equation}
where $(x)_m:=x(x+1)\cdots(x+m-1)$ is the Pochhammer (rising) factorial, with $(x)_0:=1$. Equivalently, using $(x)_m=m!\binom{x+m-1}{m}$,
\[
\pi(k)=
\frac{\displaystyle \binom{\theta+k-1}{k}\,\binom{\beta+n-k-1}{n-k}}
{\displaystyle \binom{\theta+\beta+n-1}{n}},
\qquad k=0,1,\dots,n,
\]
with generalized binomial coefficients extended to real parameters via the gamma function.

In particular, when $r>\tfrac12$ (conformist dominant), the parameters $\theta$ and $\beta$ are positive and \eqref{eq:pi-binary-contrarian} is a genuine Beta--Binomial distribution. When $r<\tfrac12$ (contrarian dominant), the same closed form remains a valid stationary count distribution, but it is not a genuine Beta--Binomial distribution. For $r=\tfrac12$, the stationary distribution is $\mathrm{Binomial}(n,\tfrac12)$ as in part~(i).
\end{theorem}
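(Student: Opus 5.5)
Because $\{X_t\}$ moves only by $\pm1$ or stays put, it is a birth--death chain, and the plan rests on detailed balance. We have already shown that the chain is irreducible and aperiodic, so its stationary law is unique. It therefore suffices to show that the proposed $\pi$ satisfies
\[
\pi(k)\,P_{k,k+1}=\pi(k+1)\,P_{k+1,k},\qquad k=0,\dots,n-1,
\]
and sums to one. The factor $(1-\ell)/(nD)$ appears in both $P_{k,k+1}$ and $P_{k+1,k}$, so it cancels from the ratio $\pi(k+1)/\pi(k)=P_{k,k+1}/P_{k+1,k}$. This cancellation is exactly why $\pi$ does not depend on $\ell$. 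Reading off \eqref{eq:P_up_contrarian}--\eqref{eq:P_down_contrarian} with $k\mapsto k+1$ in the down-rate, the ratio is
\[
\frac{\pi(k+1)}{\pi(k)}=\frac{(n-k)\,[\,r(k+\alpha_1)+(1-r)(n-k-1+\alpha_2)\,]}{(k+1)\,[\,r(n-k-1+\alpha_2)+(1-r)(k+\alpha_1)\,]}.
\]

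Next we rewrite each bracket as an affine function of $k$ with slope $\pm a$, where $a=2r-1$. The numerator bracket is
\[
rk-(1-r)k + r\alpha_1+(1-r)(n-1+\alpha_2)=a(k+\theta),
\]
using the definition of $\theta$ in \eqref{eq:def-theta-beta_maintext}. The denominator bracket is $-ak+(1-r)\alpha_1+r(n-1+\alpha_2)$. Writing $m=n-k-1$, so that $-k=m-n+1$, this becomes $a\bigl(m+\beta\bigr)=a(n-k-1+\beta)$ with $\beta$ as defined; that definition is chosen precisely so that this identity holds. Hence for $r\neq\tfrac12$,
\[
\frac{\pi(k+1)}{\pi(k)}=\frac{n-k}{k+1}\cdot\frac{\theta+k}{\beta+n-k-1}.
\]
Applying the same ratio test to the candidate $\binom nk(\theta)_k(\beta)_{n-k}$ gives the identical expression: $\binom{n}{k+1}/\binom nk=(n-k)/(k+1)$, $(\theta)_{k+1}/(\theta)_k=\theta+k$, and $(\beta)_{n-k-1}/(\beta)_{n-k}=1/(\beta+n-k-1)$. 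Iterating from $k=0$ shows that $\pi(k)\propto\binom nk(\theta)_k(\beta)_{n-k}$.

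For normalization we use the Chu--Vandermonde identity for rising factorials, $\sum_k\binom nk(\theta)_k(\beta)_{n-k}=(\theta+\beta)_n$. This is a polynomial identity in $(\theta,\beta)$, so it holds for any real values, including negative ones. That gives \eqref{eq:pi-binary-contrarian}, and the generalized-binomial form follows from $(x)_m=m!\binom{x+m-1}{m}$.

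For $r=\tfrac12$ we have $a=0$, and both brackets equal $\tfrac12(n-1+\alpha_1+\alpha_2)$. The ratio is then $(n-k)/(k+1)$, which gives $\mathrm{Binomial}(n,\tfrac12)$ and proves part (i).

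The main obstacle is the contrarian-dominant case $r<\tfrac12$. Here $\theta$ and $\beta$ are negative, so we must check that the formula is still a genuine probability vector. The convenient route is to check that the transition brackets themselves are positive. The numerator bracket equals $a(\theta+k)$ and is strictly positive for $0\le k\le n-1$ because $\alpha_i>0$. Since $a<0$, this forces $\theta+k<0$ on that range. Likewise, the denominator bracket $a(\beta+n-k-1)>0$ forces $\beta+j<0$ for $0\le j\le n-1$. Consequently $(\theta)_k$ has sign $(-1)^k$ and $(\beta)_{n-k}$ has sign $(-1)^{n-k}$, so every term $\binom nk(\theta)_k(\beta)_{n-k}$ has the common sign $(-1)^n$. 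The normalizer $(\theta+\beta)_n$ equals their sum and is nonzero with the same sign, so every $\pi(k)>0$.

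For $r>\tfrac12$, both $\theta>0$ and $\beta>0$ follow directly from $\alpha_i>0$. The formula is then the Beta--Binomial$(n;\theta,\beta)$ probability mass function. It fails to be one when $r<\tfrac12$ simply because no Beta$(\theta,\beta)$ mixing law exists with negative parameters.
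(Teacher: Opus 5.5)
Your proposal is correct and follows essentially the same route as the paper: detailed balance for the birth--death chain, rewriting the two brackets as $a(k+\theta)$ and $a(\beta+n-k-1)$, telescoping to $\binom{n}{k}(\theta)_k(\beta)_{n-k}$, and normalizing with the Pochhammer Vandermonde identity. Your cancellation of $(1-\ell)$ in the ratio is equivalent to the paper's $P=\ell I+(1-\ell)Q$ decomposition, and your sign argument for $r<\tfrac12$ (every term has sign $(-1)^n$, so $(\beta)_n$ and $(\theta+\beta)_n$ are nonzero and each $\pi(k)>0$) makes explicit a point the paper's proof leaves implicit here and only uses later, in the proof of Theorem~\ref{thm:majority-complete-finite}.
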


\begin{proof}
A key simplification is that the stationary law is independent of the inertia parameter $\ell$.
Let $\pi_t(k)=\mathbb{P}(X_t=k)$ and write $\pi_t$ for the row vector with components $\pi_t(0),\dots,\pi_t(n)$.
From the update rule,
\[
\pi_{t+1}=\ell\pi_t+(1-\ell)\pi_t Q=\pi_t[\ell I+(1-\ell)Q]\equiv \pi_t P,
\]
where $Q$ is the \emph{social-update} kernel (set $\ell=0$ in \eqref{eq:P_up_contrarian}--\eqref{eq:P_down_contrarian}).
Since $P=\ell I+(1-\ell)Q$, any stationary distribution of $Q$ is also stationary for $P$, and hence the stationary law does not depend on $\ell$.
Thus it suffices to determine the stationary distribution of $Q$.

Since $Q$ is a birth--death kernel, stationarity is equivalent to detailed balance:
\[
\pi(k)\,Q_{k,k+1}=\pi(k+1)\,Q_{k+1,k}\qquad (k=0,1,\dots,n-1).
\]
Using \eqref{eq:P_up_contrarian}--\eqref{eq:P_down_contrarian} with $\ell=0$ yields
\begin{equation}
\label{eq:ratio_general_maintext}
\frac{\pi(k+1)}{\pi(k)}
=\frac{Q_{k,k+1}}{Q_{k+1,k}}
=\frac{n-k}{k+1}\cdot
\frac{r(k+\alpha_1)+(1-r)(n-k-1+\alpha_2)}
{r(n-k-1+\alpha_2)+(1-r)(k+\alpha_1)}.
\end{equation}

If $r=\tfrac12$, the fraction in \eqref{eq:ratio_general_maintext} equals $1$, so $\pi(k+1)/\pi(k)=(n-k)/(k+1)$ and hence
$\pi(k)\propto \binom{n}{k}$, which normalizes to $\pi(k)=2^{-n}\binom{n}{k}$.

Assume $r\neq\tfrac12$, recalling that $a=2r-1$. With $\theta,\beta$ as in \eqref{eq:def-theta-beta_maintext},
a direct rearrangement gives
\begin{align*}
r(k+\alpha_1)+(1-r)(n-k-1+\alpha_2) &= a\,(k+\theta),\\
r(n-k-1+\alpha_2)+(1-r)(k+\alpha_1) &= a\,(\beta+n-k-1).
\end{align*}
Substituting into \eqref{eq:ratio_general_maintext} yields
\[
\frac{\pi(k+1)}{\pi(k)}
=\frac{n-k}{k+1}\cdot\frac{k+\theta}{\beta+n-k-1}.
\]
Solving this first-order recurrence yields, for $k\ge 1$,
\[
\pi(k)=\pi(0)\prod_{i=0}^{k-1}\frac{\pi(i+1)}{\pi(i)}
=\pi(0)\prod_{i=0}^{k-1}\frac{n-i}{i+1}\cdot\prod_{i=0}^{k-1}\frac{i+\theta}{\beta+n-i-1}.
\]
The first product equals $\binom{n}{k}$, and the second can be written in rising factorial form as
\[
\prod_{i=0}^{k-1}\frac{i+\theta}{\beta+n-i-1}
=\frac{(\theta)_k}{(\beta+n-k)_k}
=(\theta)_k\,\frac{(\beta)_{n-k}}{(\beta)_n},
\]
where the last identity uses $(\beta)_n=(\beta)_{n-k}(\beta+n-k)_k$. Hence
\[
\pi(k)=\pi(0)\binom{n}{k}\,(\theta)_k\,\frac{(\beta)_{n-k}}{(\beta)_n}.
\]
Finally, normalizing via Vandermonde's identity in Pochhammer form,
\[
\sum_{k=0}^{n}\binom{n}{k}\,(\theta)_k\,(\beta)_{n-k}=(\theta+\beta)_n,
\]
yields $\pi(0)=(\beta)_n/(\theta+\beta)_n$ and hence \eqref{eq:pi-binary-contrarian}.
\end{proof}

The stationary law \eqref{eq:pi-binary-contrarian} summarizes the long-run distribution of vote counts after sustained social influence in a fully mixed electorate with committed leaders. In the conformist dominant regime $r>\tfrac12$, $\theta,\beta>0$ and one may equivalently represent the stationary count as $X\mid\Theta\sim\mathrm{Bin}(n,\Theta)$ with $\Theta\sim\mathrm{Beta}(\theta,\beta)$.
In contrast, when $r<\tfrac12$ the model exhibits negative pairwise correlations (underdispersion), so $X$ cannot arise from any mixture of independent Bernoulli votes. As shown in Section~\ref{voter-moments}, \eqref{eq:pi-binary-contrarian} determines both the individual success probability of a free voter (i.e., the probability of voting for the correct alternative~$1$ after deliberation) and the dependence structure among free voters' votes. In particular, the parameter $r$ controls the sign and strength of pairwise correlations: conformist societies ($r>\tfrac12$) generate positive correlations among free voters, contrarian societies ($r<\tfrac12$) generate negative correlations, and $r=\tfrac12$ yields independence in stationarity. These dependence effects will be central when we compare strict majority correctness across interaction networks.


\subsection{Effective competence and pairwise correlation}
\label{voter-moments}

For Condorcet-type questions, two features of the stationary vote profile are especially relevant:
the \emph{individual success probability} of a free voter---i.e., the probability of voting for the correct alternative after the deliberation process---and the \emph{dependence} across voters.
In our setting, both are available in closed form from the stationary law of the vote count $X$ in Theorem~\ref{thm:stationary-binary-contrarian}.
Let $X$ denote the number of free voters who vote for the correct alternative~$1$ in stationarity. Define $p:=\E[X]/n$. Equivalently, $p$ is the marginal probability that a uniformly chosen free voter votes for the correct alternative.
We refer to $p$ as the free voter's ``effective competence'' induced by sustained interaction, while the stationary pairwise correlation captures how social influence amplifies or counteracts aggregation through majority rule.

\medskip

\noindent The next corollary gives closed forms for both the effective competence $p$ and the stationary pairwise correlation between distinct free voters.

\begin{corollary}
\label{cor:moments-corr}
Under the assumptions of Theorem~\ref{thm:stationary-binary-contrarian}, define
\begin{align*}
Y_i &:= \mathbf{1}\{\text{free voter $i$ votes for the correct alternative $1$ in stationarity}\},\\
X &:= \sum_{i=1}^n Y_i.
\end{align*}

\smallskip
\noindent\emph{(i) Effective competence.} In stationarity, the marginal success probability $p:=\Pp(Y_i=1)$ satisfies
\begin{equation}
\label{eq:p-mean}
p=\frac{r\alpha_1+(1-r)(n-1+\alpha_2)}{\alpha_1+\alpha_2+2(1-r)(n-1)}.
\end{equation}
Equivalently, $\E[X]=np$.

\smallskip
\noindent\emph{(ii) Pairwise correlation.}
For any two distinct free voters $i\neq j$,
\begin{equation}
\label{eq:rho}
\Corr(Y_i,Y_j)=
\begin{cases}
0, & r=\tfrac12,\\[6pt]
\dfrac{2r-1}{\alpha_1+\alpha_2+2n-3-2r(n-2)}, & r\neq\tfrac12.
\end{cases}
\end{equation}
In particular, for $n\ge 2$ and $\alpha_1,\alpha_2>0$, the correlation is positive iff $r>\tfrac12$.
\end{corollary}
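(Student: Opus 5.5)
The plan is to reduce both claims to the first two factorial moments of the stationary count $X$ given by Theorem~\ref{thm:stationary-binary-contrarian}, and then use exchangeability to turn these into statements about individual voters.

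\textbf{Step 1: exchangeability.} I will lift the dynamics to the full configuration chain on $\{1,2\}^n$. At each step a uniformly chosen free voter updates according to \eqref{eq:P_up_contrarian}--\eqref{eq:P_down_contrarian}, and this chain commutes with every permutation of the free voters. Because $\alpha_1,\alpha_2>0$ and $\ell<1$, the configuration chain is irreducible and aperiodic by the same argument as for $X_t$, so it has a unique stationary law. Permuting voters maps that law to another stationary law, so uniqueness forces it to be exchangeable. Its push-forward to $X=\sum_i Y_i$ is then the $\pi$ of Theorem~\ref{thm:stationary-binary-contrarian}. Exchangeability gives
\[
\Pp(Y_i=1)=\E[X]/n, \qquad \Pp(Y_i=Y_j=1)=\E[X(X-1)]/(n(n-1)) \quad (i\neq j).
\]

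\textbf{Step 2: factorial moments.} For $r\neq\tfrac12$ I write $s:=\theta+\beta$. The identity $k\binom{n}{k}=n\binom{n-1}{k-1}$ and the Pochhammer Vandermonde identity (already used to normalize $\pi$) give
\[
\E[X]=\frac{n\theta}{s}, \qquad \E[X(X-1)]=\frac{n(n-1)\theta(\theta+1)}{s(s+1)}.
\]
These are purely algebraic identities. They therefore hold also for $r<\tfrac12$, where $\pi$ is not a genuine Beta--Binomial law, provided the denominators are nonzero; this is checked in Step 3. From \eqref{eq:def-theta-beta_maintext},
\[
a s=\alpha_1+\alpha_2+(n-1)(1-a)=\alpha_1+\alpha_2+2(1-r)(n-1).
\]
Hence $p=a\theta/(as)$, which is exactly \eqref{eq:p-mean}. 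For $r=\tfrac12$, $X$ is $\mathrm{Binomial}(n,\tfrac12)$, so $p=\tfrac12$, and \eqref{eq:p-mean} also evaluates to $\tfrac12$ there.

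\textbf{Step 3: correlation.} The covariance is
\[
\Cov(Y_i,Y_j)=\frac{\theta(\theta+1)}{s(s+1)}-\frac{\theta^2}{s^2}=\frac{\theta\beta}{s^2(s+1)},
\]
and $\Var(Y_i)=p(1-p)=\theta\beta/s^2$. Dividing gives $\Corr=1/(s+1)=a/(as+a)$. Expanding $as+a$ gives the denominator $\alpha_1+\alpha_2+2n-3-2r(n-2)$, which is \eqref{eq:rho}. For $r=\tfrac12$, the binomial factorial moment is $\E[X(X-1)]=n(n-1)/4$, so the covariance vanishes.

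\textbf{Sign claim and the main obstacle.} The denominator $\alpha_1+\alpha_2+2n-3-2r(n-2)$ is affine in $r$. It equals $\alpha_1+\alpha_2+2n-3>0$ at $r=0$ (using $n\ge2$) and $\alpha_1+\alpha_2+1>0$ at $r=1$, so it is positive on $[0,1]$. Hence the correlation has the sign of $2r-1$.

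The main obstacle is the contrarian regime, where the Beta-mixture interpretation is unavailable. There I must check that the algebraic manipulations are legitimate, namely that $s$, $s+1$ and $\theta\beta$ are nonzero. I will take these from the identity for $as$ and the positivity just shown, noting $as+a>0$. I will also confirm $0<p<1$ directly from \eqref{eq:p-mean}: its numerator and the complementary numerator $(1-r)\alpha_1+r(n-1+\alpha_2)$ are both positive, so the variance is nonzero.
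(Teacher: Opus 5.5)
Your proposal is correct and follows essentially the paper's own route: exchangeability, then the Beta--Binomial factorial moments $\E[X]=n\theta/(\theta+\beta)$ and $\E[X(X-1)]=n(n-1)\theta(\theta+1)/((\theta+\beta)(\theta+\beta+1))$, hence $\Corr(Y_i,Y_j)=1/(\theta+\beta+1)$, then algebra; along the way you usefully justify exchangeability through the configuration chain and justify the moment formulas in the $r<\tfrac12$ regime. One small slip: over the common denominator the numerator of $1-p$ is $(1-r)(n-1+\alpha_1)+r\alpha_2$ (which equals $a\beta$), not $(1-r)\alpha_1+r(n-1+\alpha_2)$, but it is also positive, so your conclusion $0<p<1$ is unaffected.
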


\begin{proof}
If $r=\tfrac12$, Theorem~\ref{thm:stationary-binary-contrarian}(i) gives $X\sim\mathrm{Binomial}(n,\tfrac12)$, so the votes
$\{Y_i\}$ are i.i.d.\ Bernoulli$(\tfrac12)$ and hence $\Corr(Y_i,Y_j)=0$.

Assume $r\neq\tfrac12$. Because the interaction graph is complete, the stationary law of $(Y_1,\dots,Y_n)$ is exchangeable.
Let $p:=\Pp(Y_i=1)=\E[Y_i]$ denote the \emph{effective competence} of a free voter, and let $q:=\E[Y_iY_j]$ for $i\neq j$. Then
\begin{equation}
\label{eq:exchangeability_identities}
\E[X]=\sum_{i=1}^n \E[Y_i]=np,
\qquad
\E[X(X-1)]=\sum_{i\neq j}\E[Y_iY_j]=n(n-1)q.
\end{equation}
The first identity expresses the mean vote count $\E[X]$ in terms of the effective competence $p$, while the second links the pairwise joint success probability $q$ to $\E[X(X-1)]$.

Under Theorem~\ref{thm:stationary-binary-contrarian}(ii), $X$ has the Beta--Binomial form \eqref{eq:pi-binary-contrarian} with parameters $(\theta,\beta)$, so its
mean vote count $\E[X]$ and second falling factorial moment $\E[X(X-1)]$ are
\begin{equation}
\label{eq:factorial_moments_BB_main}
\E[X]=\frac{n\theta}{\theta+\beta},
\qquad
\E[X(X-1)]=\frac{n(n-1)\theta(\theta+1)}{(\theta+\beta)(\theta+\beta+1)}.
\end{equation}
Combining \eqref{eq:exchangeability_identities}--\eqref{eq:factorial_moments_BB_main} yields the effective competence $p$ and the joint success probability $q$:
\[
p=\frac{\theta}{\theta+\beta},
\qquad
q=\frac{\theta(\theta+1)}{(\theta+\beta)(\theta+\beta+1)}.
\]
Since $Y_i$ is Bernoulli$(p)$, $\Var(Y_i)=p(1-p)$. Moreover,
\[
\Cov(Y_i,Y_j)=\E[Y_iY_j]-\E[Y_i]\E[Y_j]=q-p^2.
\]
Therefore
\[
\Corr(Y_i,Y_j)=\frac{q-p^2}{p(1-p)}=\frac{1}{\theta+\beta+1}.
\]
Finally, using \eqref{eq:def-theta-beta_maintext} to express $\theta+\beta+1$ and the effective competence $p=\theta/(\theta+\beta)$ in terms of $(r,\alpha_1,\alpha_2,n)$, we obtain after some algebra \eqref{eq:rho} and \eqref{eq:p-mean}.
\end{proof}

\subsection{Multiple alternatives and reduction to the correct alternative (\texorpdfstring{$m\ge 2$}{m>=2})}
\label{voter-m}

While the strict majority correctness results developed later are framed as a binary criterion (the electorate either selects the correct alternative~$1$ or it does not), it is sometimes natural to allow for more than two competing alternatives during deliberation. Importantly, the event that alternative~$1$ wins a strict majority among free voters depends only on the stationary count $X_1$ of free votes for alternative~$1$, regardless of how the remaining votes are split among alternatives $2,\dots,m$. The multialternative formulation below therefore serves primarily as a bookkeeping device: it yields a closed-form joint stationary law and, as an immediate consequence, a one-dimensional marginal distribution for $X_1$ that is sufficient for computing strict majority correctness of the correct alternative.

Consider a fully mixed electorate with $n$ free voters and $\alpha_i>0$ zealots
committed to alternative $i\in\{1,\dots,m\}$. As before, the parameters $\alpha_i$ need not be integers; for non-integer $\alpha_i$ they can be interpreted as influence weights (effective zealot strengths) entering the weighted-sampling rule below. Let $\alpha_0:=\sum_{i=1}^m \alpha_i$ and let
\[
X_t=(X_{t,1},\dots,X_{t,m})\in \mathcal{S}_n:=\Big\{k\in\mathbb{Z}_{\ge0}^m:\ \sum_{i=1}^m k_i=n\Big\}
\]
denote the vector of free-voter counts for each alternative at time $t$.

The update rule is the natural multistate analogue. At each time step, a free voter is chosen uniformly. With probability $\ell$
it retains its current choice. Otherwise it samples a second node uniformly from the remaining $n-1+\alpha_0$ individuals.
If the sampled node supports alternative $s$, then the updating voter imitates $s$ with probability $r$, and with probability $1-r$
it adopts a \emph{different} alternative chosen uniformly from the remaining $m-1$ alternatives. (For $m=2$ this reduces to the binary flip rule.)

\noindent The following theorem gives the stationary law of the $m$-alternative zealot--contrarian voter model.

\begin{theorem}
\label{thm:stationary-m-contrarian}
Let $m\ge 2$, $n\in\mathbb{N}$, and $\alpha_i>0$ for $i=1,\dots,m$. Under the above dynamics, $\{X_t\}$ is an irreducible,
aperiodic Markov chain on $\mathcal{S}_n$ with a unique stationary distribution independent of $\ell$.

\smallskip
\noindent\emph{(i) The unbiased case $r=\tfrac1m$.}
In stationarity,
\[
\pi(k_1,\dots,k_m)=\frac{n!}{k_1!\cdots k_m!}\left(\frac1m\right)^n,
\qquad (k_1,\dots,k_m)\in\mathcal{S}_n,
\]
i.e.\ \(X\sim \mathrm{Multinomial}\big(n;\tfrac1m,\dots,\tfrac1m\big)\).

\smallskip
\noindent\emph{(ii) The biased case $r\neq\tfrac1m$.}
Let $T:=n-1+\alpha_0$ and define
\[
\delta:=\frac{(1-r)T}{mr-1},
\qquad
\widetilde{\alpha}_i:=\alpha_i+\delta,
\qquad
\widetilde{\alpha}_0:=\sum_{i=1}^m \widetilde{\alpha}_i=\alpha_0+m\delta.
\]
Then the unique stationary distribution has the Dirichlet--multinomial form:
\begin{equation}
\label{eq:pi-m-contrarian}
\pi(k_1,\dots,k_m)=
\frac{\displaystyle\prod_{i=1}^m \binom{\widetilde{\alpha}_i+k_i-1}{k_i}}
{\displaystyle \binom{\widetilde{\alpha}_0+n-1}{n}},
\qquad (k_1,\dots,k_m)\in\mathcal{S}_n,
\end{equation}
where generalized binomial coefficients are understood in the gamma function sense. Equivalently,
\[
\pi(k_1,\dots,k_m)
=\frac{n!}{k_1!\cdots k_m!}\,
\frac{(\widetilde{\alpha}_1)_{k_1}\cdots(\widetilde{\alpha}_m)_{k_m}}{(\widetilde{\alpha}_0)_{n}}.
\]
\end{theorem}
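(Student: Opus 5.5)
The plan mirrors the proof of Theorem~\ref{thm:stationary-binary-contrarian}. First, $\ell$ can be removed. Write $P=\ell I+(1-\ell)Q$, where $Q$ is the social-update kernel, so it suffices to find the stationary law of $Q$. Irreducibility holds because every transition $k\to k-e_i+e_j$ with $k_i>0$ has positive probability: when a voter in state $i$ samples a zealot of alternative $j$ (weight $\alpha_j>0$), it adopts $j$ with probability $r$. If $r=0$, it can instead sample a zealot of some third alternative and then pick $j$ with probability $1/(m-1)$. For $m=2$ and $r=0$, it samples an $\alpha_i$ zealot and flips. Aperiodicity follows from self-loops, for instance a voter copying a sampled voter of its own state, or from $\ell>0$; one small case check is needed when $r$ is extreme. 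Uniqueness is then immediate on the finite space $\mathcal{S}_n$.

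The key step is to compute the rate of the move $k\to k-e_i+e_j$ for $i\neq j$. A voter in state $i$ is selected with probability $k_i/n$. It then moves to $j$ either by imitating a $j$-source, with weight $r(k_j+\alpha_j)$, or by contrarian response to any source $s\neq j$, with weight $\frac{1-r}{m-1}\big(T-(k_j+\alpha_j)\big)$. Here the source count excludes the updating voter itself, which is in state $i\neq j$. This gives
\[
Q(k,k-e_i+e_j)=\frac{k_i}{nT}\Big[r(k_j+\alpha_j)+\tfrac{1-r}{m-1}\big(T-k_j-\alpha_j\big)\Big]=\frac{k_i}{nT}\cdot\frac{mr-1}{m-1}\,(k_j+\widetilde{\alpha}_j),
\]
where $\widetilde{\alpha}_j=\alpha_j+\delta$ with $\delta=(1-r)T/(mr-1)$. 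The essential observation is that the total $\sum_s(k_s+\alpha_s)$ seen by a voter in state $i$ equals $T$, which is independent of $k$ and of $i$; this is what makes the contrarian part an affine shift of $\alpha_j$. For $r=1/m$ the bracket equals $T/m$, and the rate is $k_i/(nm)$.

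Next I verify detailed balance, $\pi(k)Q(k,k')=\pi(k')Q(k',k)$ for $k'=k-e_i+e_j$. The reverse move has rate proportional to $(k_j+1)(k_i-1+\widetilde{\alpha}_i)$, with the same constant $\frac{mr-1}{(m-1)nT}$. The condition therefore reduces to
\[
\frac{\pi(k')}{\pi(k)}=\frac{k_i\,(k_j+\widetilde{\alpha}_j)}{(k_j+1)(k_i-1+\widetilde{\alpha}_i)}.
\]
The candidate \eqref{eq:pi-m-contrarian}, written as $\frac{n!}{\prod k_s!}\prod_s(\widetilde{\alpha}_s)_{k_s}/(\widetilde{\alpha}_0)_n$, satisfies exactly this ratio, because $(\widetilde{\alpha}_j)_{k_j+1}/(\widetilde{\alpha}_j)_{k_j}=k_j+\widetilde{\alpha}_j$ and similarly for the $i$ factor. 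In the unbiased case the ratio $k_i/(k_j+1)$ is matched by the multinomial. Reversibility then gives stationarity, and normalization follows from the multivariate Chu--Vandermonde identity $\sum_{k\in\mathcal{S}_n}\frac{n!}{\prod k_s!}\prod(\widetilde{\alpha}_s)_{k_s}=(\widetilde{\alpha}_0)_n$, obtained by iterating the binary identity used earlier or by comparing coefficients of $\prod(1-x)^{-\widetilde{\alpha}_s}$.

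The main obstacle I expect is the rate computation itself: the contrarian response is uniform over the alternatives other than the sampled one, and the updating voter must be excluded from the source pool. Once the bracket is recognized as affine in $k_j$ with a $k$-independent constant, everything else is routine. Some care is also needed when $r<1/m$, where $\widetilde{\alpha}_i$ may be negative. There I would check that every factor $k_j+\widetilde{\alpha}_j$ appearing for $k_j\le n-1$ is a positive multiple of an actual transition weight divided by $(mr-1)/(m-1)<0$. The products $(\widetilde{\alpha}_s)_{k_s}$ and $(\widetilde{\alpha}_0)_n$ are then nonzero, with signs that cancel consistently, which I would confirm by noting that the ratio formula from detailed balance involves only positive transition probabilities.
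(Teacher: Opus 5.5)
Your proposal is correct and follows essentially the same route as the paper's appendix proof: drop $\ell$ via $P=\ell I+(1-\ell)Q$, use the fact that the total source weight seen by an updating voter is always $T$ to write the move rate as $\frac{k_i}{nT}\cdot\frac{mr-1}{m-1}(k_j+\widetilde{\alpha}_j)$, verify detailed balance against the Pochhammer product form, and normalize with the multivariate Vandermonde identity. The differences are minor and in your favor. You treat the unbiased case $r=\tfrac1m$ with the same detailed-balance check (ratio $k_i/(k_j+1)$) instead of the paper's rather loose symmetry argument. You also check irreducibility, aperiodicity, and the sign consistency of the Pochhammer factors when $mr<1$ explicitly, whereas the paper only asserts them.
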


\smallskip
\noindent The joint stationary law \eqref{eq:pi-m-contrarian} characterizes the stationary probability distribution of the full post-deliberation vote share vector when there are $m$ alternatives. When $mr>1$ (so $\delta>0$ and $\widetilde{\alpha}_i>0$), \eqref{eq:pi-m-contrarian} is a standard Dirichlet--multinomial distribution, with the usual Dirichlet mixture interpretation. When $mr<1$, the same closed form remains a valid stationary distribution, although some $\widetilde{\alpha}_i$ may be nonpositive and the mixture interpretation no longer applies. A proof of Theorem~\ref{thm:stationary-m-contrarian} is given in Appendix~\ref{app:proof-thm2}.

\smallskip
In this paper, however, our Condorcet-like correctness criterion is binary: the probability that a strict majority of free voters selects the correct alternative~$1$ is $\Pp\!\left(X_1 \ge (n+1)/2\right)$, which depends on the multistate model only through the one-dimensional marginal law of $X_1$. Accordingly, the only ingredient from the multistate model used in our analysis is the stationary marginal distribution of $X_1$; the full multistate characterization in \eqref{eq:pi-m-contrarian} is included for completeness and for potential use with other aggregation rules. The following corollary gives this marginal. It has the same Beta--Binomial \emph{form} as in the binary case (Theorem~\ref{thm:stationary-binary-contrarian}), after aggregating alternatives $2,\dots,m$ into ``other''.

\begin{corollary}
\label{cor:bb-marginal-correct}
Under the assumptions of Theorem~\ref{thm:stationary-m-contrarian}, let $X_1$ denote the number of free voters supporting the
correct alternative~$1$ in stationarity. Then
\[
\Pp(X_1=k)=\binom{n}{k}\,\frac{(\widetilde{\alpha}_1)_k\,\big(\widetilde{\alpha}_0-\widetilde{\alpha}_1\big)_{n-k}}{(\widetilde{\alpha}_0)_n},
\qquad k=0,1,\dots,n,
\]
which is only a genuine Beta--Binomial distribution when $mr>1$ (so $\widetilde{\alpha}_1>0$ and $\widetilde{\alpha}_0-\widetilde{\alpha}_1>0$), and otherwise remains a valid stationary distribution (without a Beta mixture interpretation).
\end{corollary}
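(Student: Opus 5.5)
We obtain the marginal of $X_1$ by summing the joint law \eqref{eq:pi-m-contrarian} of Theorem~\ref{thm:stationary-m-contrarian} over all ways of splitting the remaining $n-k$ votes among alternatives $2,\dots,m$. First consider the biased case $r\neq\tfrac1m$. Fix $k\in\{0,\dots,n\}$ and write the joint law in Pochhammer form. Factor the multinomial coefficient as
\[
\frac{n!}{k!\,k_2!\cdots k_m!}=\binom{n}{k}\frac{(n-k)!}{k_2!\cdots k_m!}.
\]
This gives
\[
\Pp(X_1=k)=\binom{n}{k}\frac{(\widetilde{\alpha}_1)_k}{(\widetilde{\alpha}_0)_n}\sum_{k_2+\cdots+k_m=n-k}\frac{(n-k)!}{k_2!\cdots k_m!}\prod_{i=2}^m(\widetilde{\alpha}_i)_{k_i}.
\]

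The key step is to evaluate this inner sum using the multivariate Vandermonde identity for rising factorials,
\[
\sum_{j_1+\cdots+j_s=N}\frac{N!}{j_1!\cdots j_s!}\prod_{i=1}^s(x_i)_{j_i}=(x_1+\cdots+x_s)_N.
\]
This identity holds for arbitrary real $x_i$, not only positive ones. It follows by induction on $s$ from the two-variable identity $\sum_j\binom{N}{j}(x)_j(y)_{N-j}=(x+y)_N$, which the proof of Theorem~\ref{thm:stationary-binary-contrarian} already used. Alternatively, both sides are polynomials in the $x_i$, so it suffices to compare coefficients of $t^N/N!$ in $\prod_i(1-t)^{-x_i}=(1-t)^{-\sum x_i}$. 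Applying it with $s=m-1$, $N=n-k$ and $x_i=\widetilde{\alpha}_i$ turns the inner sum into $(\widetilde{\alpha}_0-\widetilde{\alpha}_1)_{n-k}$, which is exactly the claimed formula.

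The unbiased case $r=\tfrac1m$ needs separate treatment, because $\delta$ is undefined there. From the multinomial law, the marginal is $X_1\sim\mathrm{Binomial}(n,\tfrac1m)$. Any reading of the displayed formula must be understood as a limit $\delta\to\pm\infty$, in which $(\widetilde{\alpha}_1)_k\sim\delta^k$ and the other factors scale likewise. These limits recover $\binom{n}{k}m^{-n}(m-1)^{n-k}$, so the statement is consistent. The main text implicitly restricts to $r\neq\tfrac1m$, and we would note the binomial limit only as a remark.

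For the final qualifier, observe the signs of the parameters. When $mr>1$ we have $\delta>0$, so $\widetilde{\alpha}_1>0$ and $\widetilde{\alpha}_0-\widetilde{\alpha}_1=\sum_{i\ge2}\widetilde{\alpha}_i>0$. The formula is then the Beta--Binomial$(n,\widetilde{\alpha}_1,\widetilde{\alpha}_0-\widetilde{\alpha}_1)$ pmf, with the mixture representation $\Theta\sim\mathrm{Beta}$ and $X_1\mid\Theta\sim\mathrm{Bin}(n,\Theta)$. When $mr<1$ the expression is still the marginal of a valid stationary law, hence a probability distribution, even though the parameters may be nonpositive. No step is a serious obstacle. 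The only care needed is to justify the Vandermonde identity for possibly negative parameters, and the polynomial-identity argument handles that.
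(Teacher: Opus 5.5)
Your proposal is correct and takes the same route as the paper. The paper gives no separate proof; it only remarks that the marginal follows by aggregating alternatives $2,\dots,m$ into a single ``other'' category. That is exactly the summation over $(k_2,\dots,k_m)$ you carry out, using the same multivariate Vandermonde identity that the proof of Theorem~\ref{thm:stationary-m-contrarian} uses for normalization. Your $\delta\to\pm\infty$ treatment of $r=\tfrac1m$ is a harmless addition. Two small points would tighten the qualifier: at $r=1$ you have $\delta=0$ rather than $\delta>0$, but $\widetilde{\alpha}_i=\alpha_i>0$ still. And for $mr<1$ one has $\widetilde{\alpha}_0=\bigl[(m-1)\alpha_0+m(1-r)(n-1)\bigr]/(mr-1)<0$, so at least one of $\widetilde{\alpha}_1$ and $\widetilde{\alpha}_0-\widetilde{\alpha}_1$ is negative, which justifies the ``only when $mr>1$'' clause rather than merely ``may be nonpositive''.
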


\subsection{The zealot--contrarian voter model on Erd\H{o}s--R\'enyi networks}
\label{voter-er}

All results so far assumed a well-mixed population, so that at each pre-vote interaction step a free voter draws its influence source uniformly from the rest of the electorate.
We now relax this assumption by allowing interactions among free voters to be \emph{local}, while keeping zealot influence \emph{global}.
Specifically, the free--free connections form an Erd\H{o}s--R\'enyi (ER) graph $G(n,p)$ on the $n$ free voters: each pair of free voters is linked independently with probability $p$, so the expected free degree is $d=p(n-1)$. We treat the free--free edge set as fixed throughout the dynamics. Each free voter is additionally connected to all zealots (equivalently, zealots act as global influence sources available to every free voter).
As in Section~\ref{voter-two}, the parameters $\alpha_1$ and $\alpha_2$ may be non-integers and are then interpreted as influence weights entering the (weighted) neighbor-sampling step.

We focus on the binary case $m=2$, which is the setting relevant for our strict majority correctness criterion. As in Section~\ref{voter-two}, let $X_t\in\{0,1,\dots,n\}$ denote the number of free voters currently supporting the correct alternative~$1$ (so $n-X_t$ support alternative~$2$). At each time step, a free node is selected uniformly at random. With probability $\ell\in[0,1)$ it retains its current vote, and with probability $1-\ell$ it samples \emph{uniformly from its neighbors} (free or zealot). If the sampled neighbor supports alternative $s\in\{1,2\}$, then the updating voter
copies $s$ with probability $r$ and switches to $3-s$ with probability $1-r$.

Fix $k$ and condition on $X_t=k$; for readability, we suppress this conditioning throughout. Under an ER mean-field approximation, we replace the composition of a focal voter's free-neighbor set by its expectation. Thus, for a focal voter in state~$2$, the expected fraction of free neighbors in state~$1$ is approximately $k/(n-1)$, whereas for a focal voter already in state~$1$ the corresponding fraction is approximately $(k-1)/(n-1)$, since the focal voter itself is excluded. Because the actual neighbor draw includes both free neighbors and globally connected zealots, it is convenient to introduce explicit sampling probabilities. In particular,
$\Pp\bigl(2\ \text{samples}\ 1\bigr)$ denotes the probability that a focal free voter currently holding alternative~$2$ samples a neighbor (free or zealot) holding alternative~$1$, and
$\Pp\bigl(1\ \text{samples}\ 1\bigr)$ denotes the analogous probability when the focal voter currently holds alternative~$1$. Under the ER mean-field approximation, these sampling probabilities are approximated by
\begin{equation}
\label{eq:er-s1}
\begin{aligned}
\Pp\bigl(2\ \text{samples}\ 1\bigr)
&\approx \frac{d\,\frac{k}{n-1}+\alpha_1}{d+\alpha_1+\alpha_2},\\
\Pp\bigl(1\ \text{samples}\ 1\bigr)
&\approx \frac{d\,\frac{k-1}{n-1}+\alpha_1}{d+\alpha_1+\alpha_2}.
\end{aligned}
\end{equation}
It is convenient to rewrite \eqref{eq:er-s1} in the complete graph form by introducing the \emph{effective} zealot parameters
\begin{equation}
\label{eq:alpha-eff}
\alpha_1^{\mathrm{eff}}:=\frac{n-1}{d}\,\alpha_1,
\qquad
\alpha_2^{\mathrm{eff}}:=\frac{n-1}{d}\,\alpha_2,
\qquad
\alpha_0^{\mathrm{eff}}:=\alpha_1^{\mathrm{eff}}+\alpha_2^{\mathrm{eff}}.
\end{equation}
Then \eqref{eq:er-s1} becomes
\begin{equation}
\label{eq:er-s1-eff}
\begin{aligned}
\Pp\bigl(2\ \text{samples}\ 1\bigr)
&\approx \frac{k+\alpha_1^{\mathrm{eff}}}{(n-1)+\alpha_0^{\mathrm{eff}}},\\
\Pp\bigl(1\ \text{samples}\ 1\bigr)
&\approx \frac{k-1+\alpha_1^{\mathrm{eff}}}{(n-1)+\alpha_0^{\mathrm{eff}}}.
\end{aligned}
\end{equation}

Given a sampled neighbor, the contrarian channel implies
\[
\Pp(\text{adopt $1$}\mid \text{sample has $1$})=r,
\qquad
\Pp(\text{adopt $1$}\mid \text{sample has $2$})=1-r.
\]
Hence, the probability that a focal voter in state~$2$ \emph{adopts} alternative~$1$ (i.e., updates from $2$ to $1$) is
\begin{equation}
\label{eq:er-u1}
\begin{aligned}
\Pp\bigl(2\ \text{adopts}\ 1\bigr)
&:= r\,\Pp\bigl(2\ \text{samples}\ 1\bigr) \\
&\quad +(1-r)\Bigl(1-\Pp\bigl(2\ \text{samples}\ 1\bigr)\Bigr) \\
&= (1-r)+(2r-1)\,\Pp\bigl(2\ \text{samples}\ 1\bigr).
\end{aligned}
\end{equation}
while the probability that a focal voter in state~$1$ switches to alternative~$2$ (i.e., updates from $1$ to $2$) is
\begin{equation}
\label{eq:er-u2}
\begin{aligned}
\Pp\bigl(1\ \text{adopts}\ 2\bigr)
&:= r\Bigl(1-\Pp\bigl(1\ \text{samples}\ 1\bigr)\Bigr) \\
&\quad +(1-r)\,\Pp\bigl(1\ \text{samples}\ 1\bigr) \\
&= r-(2r-1)\,\Pp\bigl(1\ \text{samples}\ 1\bigr).
\end{aligned}
\end{equation}
Let $Q^{\mathrm{ER}}$ denote the social-update kernel (set $\ell=0$). Since one free node is updated per step, the induced ER mean-field chain for $X_t$ is birth--death with
\begin{equation}
\label{eq:er-Q-bd}
\begin{aligned}
Q^{\mathrm{ER}}_{k,k+1} &= \frac{n-k}{n}\,\Pp\bigl(2\ \text{adopts}\ 1\bigr),\\
Q^{\mathrm{ER}}_{k,k-1} &= \frac{k}{n}\,\Pp\bigl(1\ \text{adopts}\ 2\bigr),\\
Q^{\mathrm{ER}}_{k,k}   &= 1-Q^{\mathrm{ER}}_{k,k+1}-Q^{\mathrm{ER}}_{k,k-1}.
\end{aligned}
\end{equation}
Substituting \eqref{eq:er-s1-eff} into \eqref{eq:er-u1}--\eqref{eq:er-u2} yields the explicit complete graph form
\begin{align}
\label{eq:er-Q-closed-up}
Q^{\mathrm{ER}}_{k,k+1}
&\approx \frac{n-k}{n}\cdot
\frac{r(k+\alpha_1^{\mathrm{eff}})+(1-r)(n-k-1+\alpha_2^{\mathrm{eff}})}
{(n-1)+\alpha_0^{\mathrm{eff}}},
\\
\label{eq:er-Q-closed-down}
Q^{\mathrm{ER}}_{k,k-1}
&\approx \frac{k}{n}\cdot
\frac{r(n-k+\alpha_2^{\mathrm{eff}})+(1-r)(k-1+\alpha_1^{\mathrm{eff}})}
{(n-1)+\alpha_0^{\mathrm{eff}}}.
\end{align}
Comparing \eqref{eq:er-Q-closed-up}--\eqref{eq:er-Q-closed-down} with \eqref{eq:P_up_contrarian}--\eqref{eq:P_down_contrarian} shows that, under the ER mean-field closure, the macroscopic count dynamics on an ER graph coincide algebraically with the complete graph dynamics after the replacements
\[
(\alpha_1,\alpha_2)\mapsto(\alpha_1^{\mathrm{eff}},\alpha_2^{\mathrm{eff}}).
\]
Since the full-step kernel satisfies $P^{\mathrm{ER}}=\ell I+(1-\ell)Q^{\mathrm{ER}}$, the stationary distribution is again independent of $\ell$.

Equations \eqref{eq:er-Q-closed-up}--\eqref{eq:er-Q-closed-down} show that, under the ER mean-field closure, the induced birth--death dynamics for $X_t$ on an ER graph have the same algebraic form as the fully mixed dynamics \eqref{eq:P_up_contrarian}--\eqref{eq:P_down_contrarian} after the parameter substitution
\[
(\alpha_1,\alpha_2)\mapsto(\alpha_1^{\mathrm{eff}},\alpha_2^{\mathrm{eff}}),
\qquad
(\alpha_1^{\mathrm{eff}},\alpha_2^{\mathrm{eff}})\ \text{given by \eqref{eq:alpha-eff}.}
\]
Consequently, all stationary quantities from the fully mixed model carry over to the ER case by the same substitution. In particular, the stationary distribution of $X$ under this approximation is given by \eqref{eq:pi-binary-contrarian} with $(\alpha_1,\alpha_2)$ replaced by $(\alpha_1^{\mathrm{eff}},\alpha_2^{\mathrm{eff}})$ (and reduces to $\mathrm{Binomial}(n,\tfrac12)$ when $r=\tfrac12$). For $r\neq\tfrac12$, with $a=2r-1$ as above, and
\[
\begin{aligned}
\theta^{\mathrm{ER}} &:=\frac{r\alpha_1^{\mathrm{eff}}+(1-r)(n-1+\alpha_2^{\mathrm{eff}})}{a},\\
\beta^{\mathrm{ER}}  &:=\frac{(1-r)\alpha_1^{\mathrm{eff}}+r(n-1+\alpha_2^{\mathrm{eff}})}{a}-n+1,
\end{aligned}
\]
we obtain
\[
\begin{aligned}
\Pp(X=k) &\approx \binom{n}{k}\,
\frac{(\theta^{\mathrm{ER}})_k\,(\beta^{\mathrm{ER}})_{n-k}}
{(\theta^{\mathrm{ER}}+\beta^{\mathrm{ER}})_n},
\qquad k=0,1,\dots,n.
\end{aligned}
\]
Likewise, the pairwise correlation between two distinct free voters in the ER mean-field approximation is obtained by substituting $(\alpha_1^{\mathrm{eff}},\alpha_2^{\mathrm{eff}})$ into the fully mixed expression \eqref{eq:rho}; in particular, for $r\neq\tfrac12$,
\begin{equation}
\label{corr_er}
\Corr(Y_i,Y_j)\ \approx\ \frac{2r-1}{\alpha_1^{\mathrm{eff}}+\alpha_2^{\mathrm{eff}}+2n-3-2r(n-2)}.
\end{equation}
Equation \eqref{corr_er} provides a simple analytic prediction for ER networks, and in Section~\ref{simulations} we assess its accuracy against simulations.

The mapping \eqref{eq:alpha-eff} shows that, relative to the complete graph case ($d=n-1$), global zealot influence is amplified by a factor of $(n-1)/d$ when free--free interactions are sparse. As with any mean-field reduction, the ER approximation \eqref{eq:er-s1} averages over degree heterogeneity and local clustering, and is therefore most reliable for sufficiently dense ER graphs. Performance can degrade on heterogeneous networks, where degree fluctuations alter influence, and on strongly structured topologies (such as rings), where microstates with the same aggregate count $X_t$ need not evolve similarly. Section~\ref{simulations} quantifies these deviations.

\section{Strict majority correctness}
\label{majority}

\subsection{A finite electorate Condorcet-type inequality on fully mixed populations}
\label{sec:majority-complete-finite}

Classical versions of the CJT compare the probability that a majority reaches the correct decision
with the correctness probability of a single voter under independent Bernoulli voting. In the present model, by contrast,
votes are generated endogenously by a pre-vote social influence process with zealots and contrarian responses, so the votes
of free voters are generally dependent in stationarity. The natural Condorcet-type benchmark is therefore the stationary
marginal correctness of a single free voter after the campaign phase, not an exogenous i.i.d.\ competence parameter.
If $X$ denotes the stationary number of free voters supporting the correct alternative~$1$, then this benchmark is
\[
p=\frac{\E[X]}{n},
\]
which is exactly the effective competence in \eqref{eq:p-mean}. The corresponding strict majority success probability is
\[
M_n:=\Pp\!\left(X\ge \frac{n+1}{2}\right)
\]
for odd $n$.

The question is then directly in the spirit of epistemic social choice: whenever a randomly chosen free voter is more likely
than not to support the correct alternative after deliberation, does majority rule improve on that post-deliberation benchmark?

Before turning to the theorem, Fig.~\ref{fig:Mn_minus_p_vs_r_combined} provides a numerical illustration. For odd $n$, it plots $M_n-p$ as a function of $r$ for $n\in\{5,11,31,51\}$ and for three representative
pairs $(\alpha_1,\alpha_2)$. These parameter choices are not intended as an exhaustive exploration; rather, they were selected to span low, intermediate, and stronger zealot influence levels, guided by earlier results for the special case $r=1$, where mixed zealotry was shown to generate qualitatively different stationary laws, including a flat distribution at $\alpha_1=\alpha_2=1$ and increasingly concentrated unimodal behavior for larger values \citep{chinellato2007,chinellato2015}. Accordingly, $(\alpha_1,\alpha_2)=(0.75,0.3)$ corresponds to a low-influence regime below those thresholds, $(2.5,1)$ to an intermediate near-threshold regime, and $(5,2)$ to a stronger-influence regime.
Moreover, by \eqref{eq:p-mean},
\[
p-\frac12
=
\frac{(2r-1)(\alpha_1-\alpha_2)}
{2\bigl[\alpha_1+\alpha_2+2(1-r)(n-1)\bigr]},
\]
so for the cases shown, where $\alpha_1>\alpha_2$, we have $p>\tfrac12$ if and only if $r>\tfrac12$.
The plots therefore make the theoretical comparison visually transparent: across the plotted range,
$M_n-p$ is positive whenever $p>\tfrac12$ and negative whenever $p<\tfrac12$. They also suggest that,
over the parameter range shown, the majority advantage $M_n-p$ becomes larger as the zealot influence weights
increase. In contrast to the classical CJT, the majority advantage $M_n-p$ need not increase monotonically as the electorate size $n$ grows.

\begin{figure}[!htbp]
\centering
\setlength{\abovecaptionskip}{4pt}
\setlength{\belowcaptionskip}{0pt}

\begin{minipage}[t]{0.49\textwidth}
\centering
\includegraphics[width=\linewidth,trim=78 205 95 210,clip]{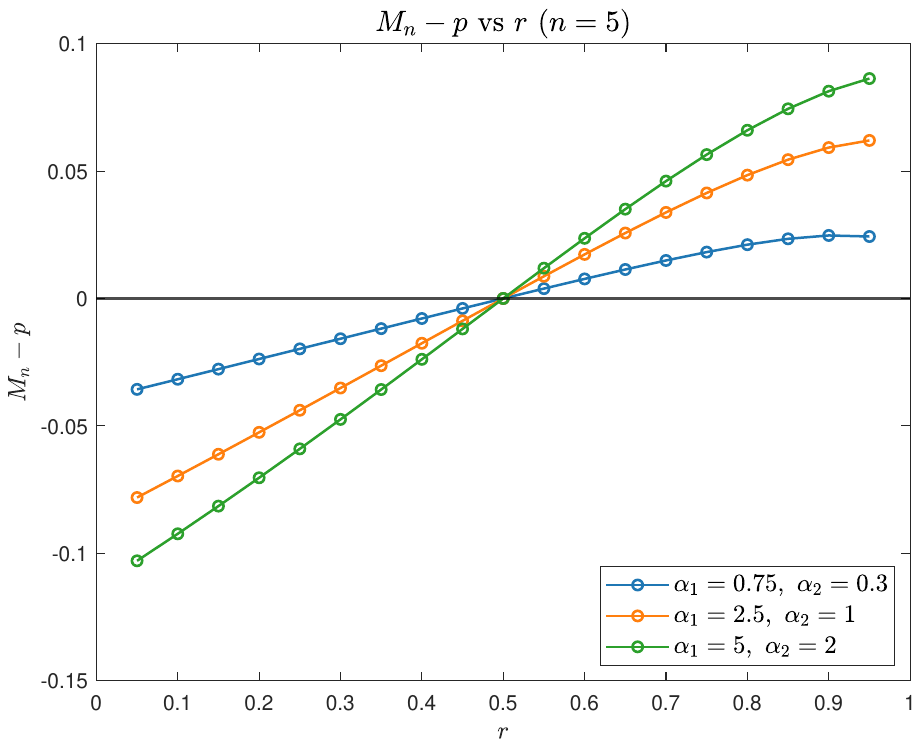}
\end{minipage}
\hfill
\begin{minipage}[t]{0.49\textwidth}
\centering
\includegraphics[width=\linewidth,trim=78 205 95 210,clip]{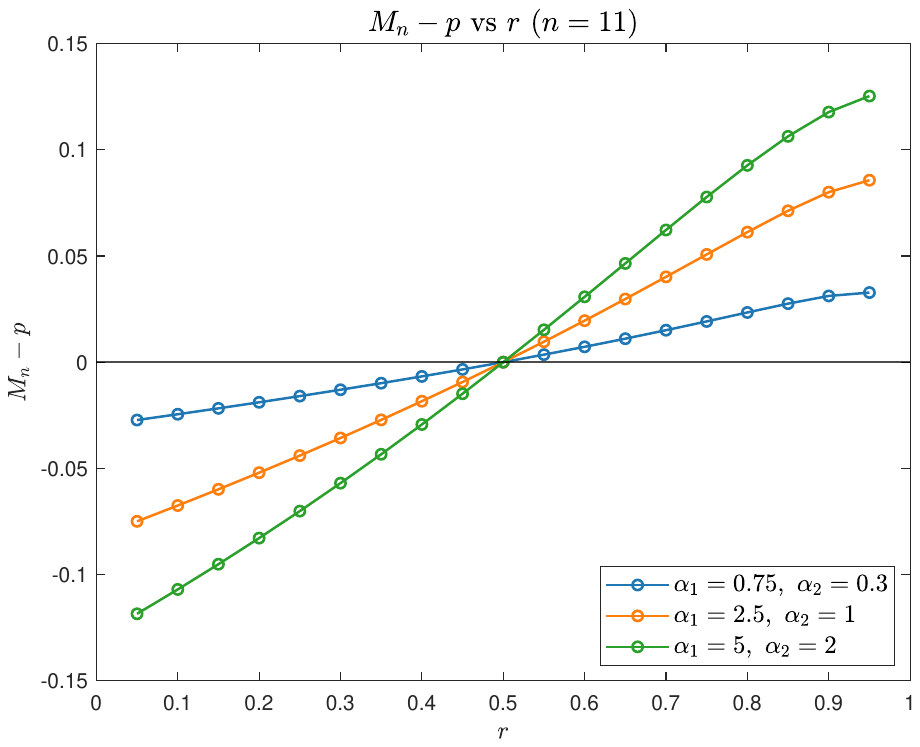}
\end{minipage}

\vspace{-0.35em}

\begin{minipage}[t]{0.49\textwidth}
\centering
\includegraphics[width=\linewidth,trim=78 205 95 210,clip]{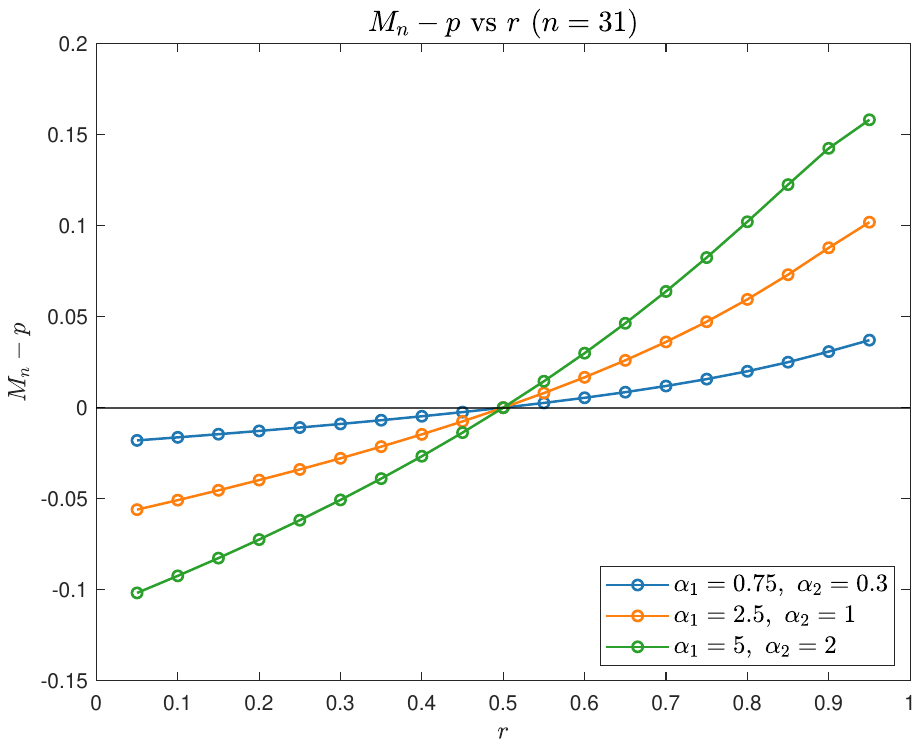}
\end{minipage}
\hfill
\begin{minipage}[t]{0.49\textwidth}
\centering
\includegraphics[width=\linewidth,trim=78 205 95 210,clip]{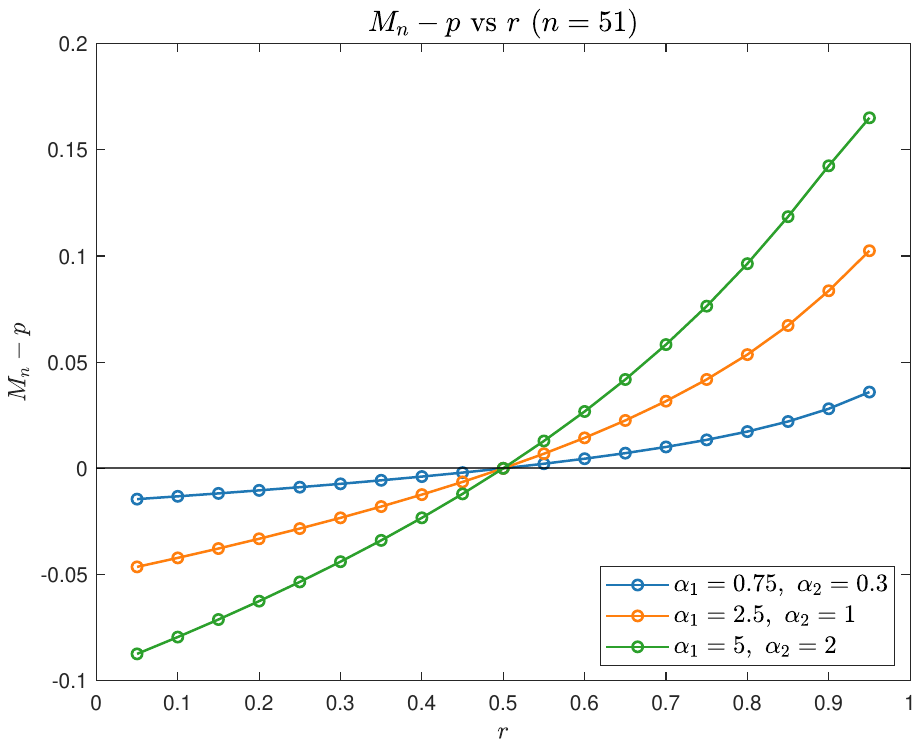}
\end{minipage}

\caption{Numerical illustration of the finite electorate Condorcet-type comparison on the complete graph.
Each panel plots $M_n-p$ against $r$, computed from the closed-form stationary law, for fixed odd
$n\in\{5,11,31,51\}$ and for the three parameter pairs
$(\alpha_1,\alpha_2)=(0.75,0.3)$, $(2.5,1)$, and $(5,2)$.
The parameter choices are intended only to sample low, intermediate, and stronger zealot influence regimes
motivated by the qualitative regime changes identified in earlier mixed-zealotry work \citep{chinellato2007, chinellato2015, braha_deaguiar2017}.
Because $\alpha_1>\alpha_2$, the sign of $p-\tfrac12$ coincides with the sign of $r-\tfrac12$,
so the right half of each panel corresponds to $p>\tfrac12$.
Across the plotted range, $M_n-p$ is positive when $p>\tfrac12$ and negative when $p<\tfrac12$.
The gap also increases with the influence weights. Unlike in the classical CJT, however, the majority advantage $M_n-p$ need not increase monotonically with electorate size.}
\label{fig:Mn_minus_p_vs_r_combined}
\end{figure}

The next theorem gives an affirmative answer for the complete graph baseline, that is, for fully mixed electorates. It is the
finite electorate analogue, within our zealot--contrarian model, of the nonasymptotic comparison in the classical CJT. The proof
works directly from the stationary distribution \eqref{eq:pi-binary-contrarian}, making explicit the parameter regime under which
the comparison holds. For completeness, Appendix~\ref{app:berg-majority-proof} presents a second proof obtained by reducing the
stationary count distribution to the P\'olya--Eggenberger family studied by \citet{berg1993condorcet}.

\begin{theorem}
\label{thm:majority-complete-finite}
Assume the binary fully mixed model of Theorem~\ref{thm:stationary-binary-contrarian}, and let $n=2s+1$ be odd.
Let
\[
p:=\frac{\E[X]}{n}
\]
denote the stationary correctness probability of a uniformly chosen free voter, and let
\[
M_n:=\Pp(X\ge s+1)
\]
denote the strict majority success probability for the correct alternative~$1$. If $p>\tfrac12$, then
\[
M_n>p.
\]
If $r=\tfrac12$, then $p=M_n=\tfrac12$.
\end{theorem}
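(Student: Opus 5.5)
The plan is to split off the trivial case and then write $M_n-p$ as a sum over the symmetric pairs $\{k,n-k\}$. Each paired term is controlled by the sign of $\pi(k)-\pi(n-k)$, and that sign can be read off the Pochhammer form \eqref{eq:pi-binary-contrarian}.

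\emph{Trivial case.} If $r=\tfrac12$, Theorem~\ref{thm:stationary-binary-contrarian}(i) gives $X\sim\mathrm{Binomial}(n,\tfrac12)$. So $p=\tfrac12$, and since $n$ is odd, symmetry gives $M_n=\tfrac12$. Now assume $p>\tfrac12$. By the identity for $p-\tfrac12$ displayed before the theorem, this forces $r\neq\tfrac12$ and $a(\alpha_1-\alpha_2)>0$ with $a=2r-1$. We may therefore use the closed form \eqref{eq:pi-binary-contrarian} with parameters $\theta,\beta$.

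\emph{Step 1 (pairing identity).} Write
\[
M_n-p=\sum_{k=0}^n \pi(k)\Bigl(\mathbf 1\{k\ge s+1\}-\tfrac kn\Bigr).
\]
Group each $k\ge s+1$ with its mirror $n-k\le s$. The pair contributes
\[
\pi(k)\Bigl(1-\tfrac kn\Bigr)-\pi(n-k)\,\tfrac{n-k}{n}=\tfrac{n-k}{n}\bigl(\pi(k)-\pi(n-k)\bigr).
\]
Hence
\[
M_n-p=\sum_{k=s+1}^{n}\tfrac{n-k}{n}\bigl(\pi(k)-\pi(n-k)\bigr).
\]
It therefore suffices to show $\pi(k)>\pi(n-k)$ for every $s+1\le k\le n-1$. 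The $k=n$ term has weight $0$, and the sum has at least one nonzero-weight term when $n\ge3$. For $n=1$ we have $M_1=p$, so I would note that the strict inequality needs $n\ge 3$.

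\emph{Step 2 (mirror ratio).} From \eqref{eq:pi-binary-contrarian}, with the $\binom nk$ factors cancelling,
\[
\frac{\pi(k)}{\pi(n-k)}=\frac{(\theta)_k\,(\beta)_{n-k}}{(\theta)_{n-k}\,(\beta)_k}=\prod_{j=n-k}^{k-1}\frac{\theta+j}{\beta+j}=\prod_{j=n-k}^{k-1}\frac{a(\theta+j)}{a(\beta+j)}.
\]
Multiplying through by $a$ is what lets both regimes $r\gtrless\tfrac12$ be handled at once.

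\emph{Step 3 (each factor exceeds one).} The rearrangement identities in the proof of Theorem~\ref{thm:stationary-binary-contrarian} show that $a(k+\theta)$ and $a(\beta+n-k-1)$ are the (positive) numerators of the up- and down-rates for $0\le k\le n-1$. Hence $a(\theta+j)>0$ and $a(\beta+j)>0$ for all $0\le j\le n-1$. A direct computation from \eqref{eq:def-theta-beta_maintext} gives
\[
a\theta-a\beta=a(\alpha_1-\alpha_2),
\]
so $\theta-\beta=\alpha_1-\alpha_2$. Therefore $a(\theta+j)-a(\beta+j)=a(\alpha_1-\alpha_2)>0$, so each factor exceeds $1$, the product exceeds $1$, and $M_n>p$.

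The step I expect to need the most care is Step 3 in the contrarian regime $r<\tfrac12$. There $\theta$ and $\beta$ may be negative and no Beta-mixture representation is available. The positivity of $a(\theta+j)$ and $a(\beta+j)$ must be checked across the full index range $j=n-k,\dots,k-1\subseteq\{0,\dots,n-1\}$ by tying them back to the transition-rate numerators, rather than by assuming $\theta,\beta>0$.
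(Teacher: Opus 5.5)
Your proof is correct and follows essentially the same route as the paper: the same mirror-pairing identity for $M_n-p$, the same product formula $\prod_j(\theta+j)/(\beta+j)$ for the reflection ratio, and positivity of $a(\theta+j)$ and $a(\beta+j)$ read off the transition-rate numerators. Your only change is to use $\theta-\beta=\alpha_1-\alpha_2$ together with $p>\tfrac12\iff a(\alpha_1-\alpha_2)>0$ in place of the paper's case split ($\theta>\beta>0$ versus $\theta<\beta<0$). Your caveat about $n=1$ is also correct and exposes a small gap in the stated theorem: there $M_1=\Pp(X=1)=p$ and the paper's identity collapses to an empty sum, so the strict inequality genuinely requires $n\ge 3$.
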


\begin{proof}
Write $n=2s+1$ and let $\pi(k)=\Pp(X=k)$ denote the stationary distribution of $X$ on $\{0,1,\dots,n\}$.
By definition,
\[
p=\frac{\E[X]}{n}=\sum_{k=0}^n \frac{k}{n}\,\pi(k),
\qquad
M_n=\Pp(X\ge s+1)=\sum_{k=s+1}^n \pi(k).
\]

We first note a general identity:
\begin{align}
 n\,(M_n-p)
 &= n\sum_{k=s+1}^n \pi(k)-\sum_{k=0}^n k\,\pi(k) \notag\\
 &= \sum_{k=s+1}^n (n-k)\,\pi(k)-\sum_{k=0}^{s} k\,\pi(k). \label{eq:Mn_minus_p_identity_1_main}
\end{align}
Changing variables $k\mapsto n-k$ in the first sum yields
\begin{equation}
\label{eq:Mn_minus_p_identity_2_main}
 n\,(M_n-p)
 =\sum_{k=0}^{s} k\bigl(\pi(n-k)-\pi(k)\bigr)
 =\sum_{k=1}^{s} k\bigl(\pi(n-k)-\pi(k)\bigr).
\end{equation}
Therefore it is enough to prove that
\[
\pi(n-k)>\pi(k)\qquad\text{for each }k=1,\dots,s.
\]

If $r=\tfrac12$, then Theorem~\ref{thm:stationary-binary-contrarian}(i) gives
$X\sim\mathrm{Binomial}(n,\tfrac12)$, so $p=\tfrac12$ and, by symmetry, $M_n=\tfrac12$.
Thus only the case $r\neq\tfrac12$ requires proof.

Assume henceforth that $r\neq\tfrac12$. By Theorem~\ref{thm:stationary-binary-contrarian}(ii),
\[
\pi(k)=\binom{n}{k}\frac{(\theta)_k(\beta)_{n-k}}{(\theta+\beta)_n},
\qquad k=0,1,\dots,n,
\]
with
\[
p=\frac{\theta}{\theta+\beta}.
\]
Fix $1\le k\le s$. Using $\binom{n}{n-k}=\binom{n}{k}$,
\begin{align}
\frac{\pi(n-k)}{\pi(k)}
&=\frac{(\theta)_{n-k}(\beta)_k}{(\theta)_k(\beta)_{n-k}} \notag\\
&=\frac{(\theta+k)_{n-2k}}{(\beta+k)_{n-2k}}
=\prod_{j=0}^{n-2k-1}\frac{\theta+k+j}{\beta+k+j}. \label{eq:reflection_ratio_main}
\end{align}
Because $n=2s+1$ and $k\le s$, we have $n-2k\ge 1$, so the product is nonempty.

To show that every factor in \eqref{eq:reflection_ratio_main} is strictly larger than $1$, recall that $a=2r-1$. From the proof of
Theorem~\ref{thm:stationary-binary-contrarian}, for every integer $t\in\{0,1,\dots,n-1\}$,
\[
a(\theta+t)=r(t+\alpha_1)+(1-r)(n-t-1+\alpha_2)>0,
\]
\[
a(\beta+t)=r(n-t-1+\alpha_2)+(1-r)(t+\alpha_1)>0.
\]
Hence $\theta+t$ and $\beta+t$ both have the same sign as $a$ for all $t=0,1,\dots,n-1$.
In particular, $\theta$ and $\beta$ have the same sign as $a$.
Since
\[
p=\frac{\theta}{\theta+\beta}>\frac12,
\]
and $\theta,\beta$ have the same sign, there are only two possibilities:
\[
a>0\quad\text{and}\quad \theta>\beta>0,
\qquad\text{or}\qquad
 a<0\quad\text{and}\quad \theta<\beta<0.
\]
Now, for fixed $k$ and $j=0,1,\dots,n-2k-1$, set $t=k+j$. Then
\[
t\in\{k,k+1,\dots,n-k-1\}\subset\{1,\dots,n-2\}\subset\{0,\dots,n-1\}.
\]
Therefore the sign property established above applies to every factor in \eqref{eq:reflection_ratio_main}. In the first case,
\[
\theta+t>\beta+t>0,
\]
so $(\theta+t)/(\beta+t)>1$. In the second case,
\[
\theta+t<\beta+t<0,
\]
so again $(\theta+t)/(\beta+t)>1$. Therefore every factor in \eqref{eq:reflection_ratio_main} exceeds $1$, and hence
\[
\pi(n-k)>\pi(k)\qquad\text{for all }k=1,\dots,s.
\]

Substituting this inequality into \eqref{eq:Mn_minus_p_identity_2_main} yields $M_n-p>0$, hence $M_n>p$.
\end{proof}

Theorem~\ref{thm:majority-complete-finite} establishes analytically what the right half of
Fig.~\ref{fig:Mn_minus_p_vs_r_combined} shows numerically: when $p>\tfrac12$, one has $M_n-p>0$. It shows that once the stationary dynamics make a uniformly chosen free voter more likely than not to support the correct alternative, majority rule strictly improves on that endogenous post-deliberation benchmark. The left half of Fig.~\ref{fig:Mn_minus_p_vs_r_combined} also shows,
numerically, the reverse ordering when $p<\tfrac12$. The comparison with the ``no-deliberation'' benchmark,
in which free voters do not influence one another and are connected only to zealots, is taken up next.

\subsection{Comparison with a no-deliberation benchmark}
\label{sec:majority-no-deliberation}

Theorem~\ref{thm:majority-complete-finite} is an \emph{internal} Condorcet-type comparison: it asks whether
majority rule improves on the correctness probability of a single free voter \emph{after} the social influence
process has already taken place. From the standpoint of the CJT, however, independence is itself a substantive
assumption, so it is also natural to compare the fully mixed electorate with a counterfactual benchmark in which
free voters never influence one another. This ``no-deliberation'' benchmark isolates the epistemic effect of
endogenous social influence. If deliberation were always beneficial for information aggregation, one would expect
the fully mixed electorate to be at least as accurate as this benchmark. The next theorem shows that the opposite
can occur.

In the no-deliberation benchmark, all free--free edges are removed, while each free voter remains connected to
all zealots. Let $X^{\mathrm{ND}}$ denote the stationary number of free voters supporting the correct
alternative~$1$. Since a free voter then samples only zealots, its stationary success probability is
\begin{equation}
\label{eq:p-ND}
p^{\mathrm{ND}}
=
\frac{r\alpha_1+(1-r)\alpha_2}{\alpha_1+\alpha_2}
=
\frac12+\frac{(2r-1)(\alpha_1-\alpha_2)}{2(\alpha_1+\alpha_2)},
\end{equation}
and the free voters are independent in stationarity. Hence
\begin{equation}
\label{eq:XND-law}
X^{\mathrm{ND}}\sim \mathrm{Binomial}\!\left(n,p^{\mathrm{ND}}\right).
\end{equation}
For odd $n=2s+1$, define the corresponding strict majority success probability by
\begin{equation}
\label{eq:MnND-def}
M_n^{\mathrm{ND}}
:=
\Pp\!\left(X^{\mathrm{ND}}\ge s+1\right).
\end{equation}

Before turning to the theorem, Fig.~\ref{fig:Mn_minus_MnND_vs_r_combined} provides a numerical illustration of this comparison with the counterfactual benchmark in which free voters never influence one another. For odd $n$, it plots $M_n-M_n^{\mathrm{ND}}$ as a function of $r$ for $n\in\{5,11,31,51\}$ and for the same three representative pairs $(\alpha_1,\alpha_2)$ used in Fig.~\ref{fig:Mn_minus_p_vs_r_combined}. Because $\alpha_1>\alpha_2$ in all three cases, we again have $p>\tfrac12$ if and only if $r>\tfrac12$. The figure therefore makes the comparison visually transparent: throughout the plotted range, $M_n-M_n^{\mathrm{ND}}<0$ whenever $p>\tfrac12$, while the reverse ordering $M_n-M_n^{\mathrm{ND}}>0$ is observed when $p<\tfrac12$. It is also noteworthy that, on the right half of the plots, the negative gap becomes more pronounced as $\alpha_1$ and $\alpha_2$ decrease, that is, when zealot strengths are weaker and social influence among free voters is relatively more dominant. Finally, over the values shown, the difference $M_n-M_n^{\mathrm{ND}}$ is neither increasing nor decreasing monotonically as the electorate size $n$ grows.

\begin{figure}[!htbp]
\centering
\setlength{\abovecaptionskip}{4pt}
\setlength{\belowcaptionskip}{0pt}

\begin{minipage}[t]{0.49\textwidth}
\centering
\includegraphics[width=\linewidth,trim=78 205 95 210,clip]{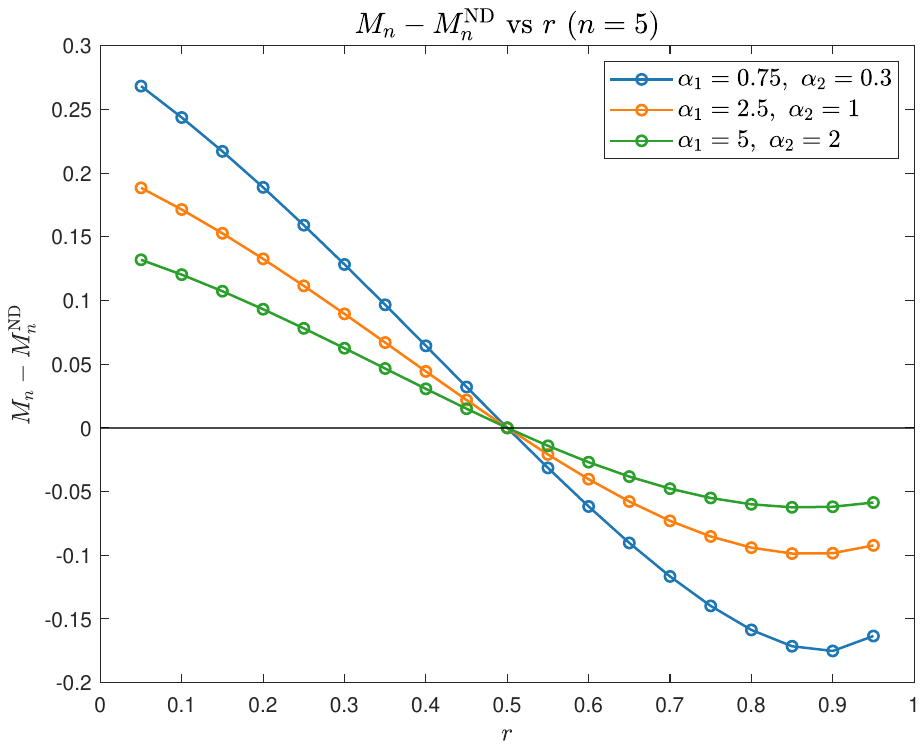}
\end{minipage}
\hfill
\begin{minipage}[t]{0.49\textwidth}
\centering
\includegraphics[width=\linewidth,trim=78 205 95 210,clip]{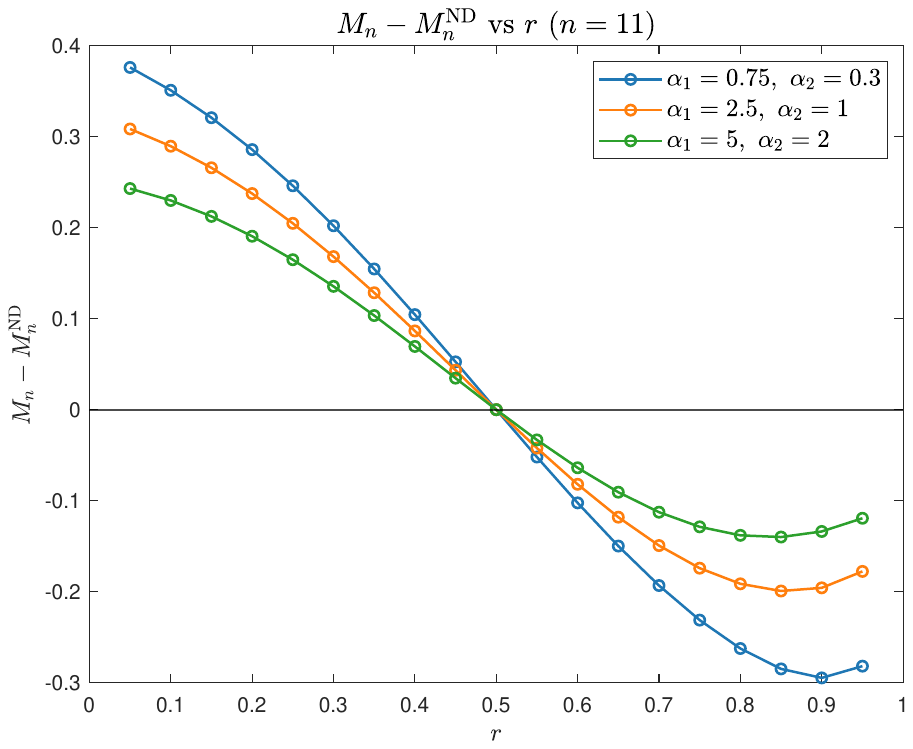}
\end{minipage}

\vspace{-0.35em}

\begin{minipage}[t]{0.49\textwidth}
\centering
\includegraphics[width=\linewidth,trim=78 205 95 210,clip]{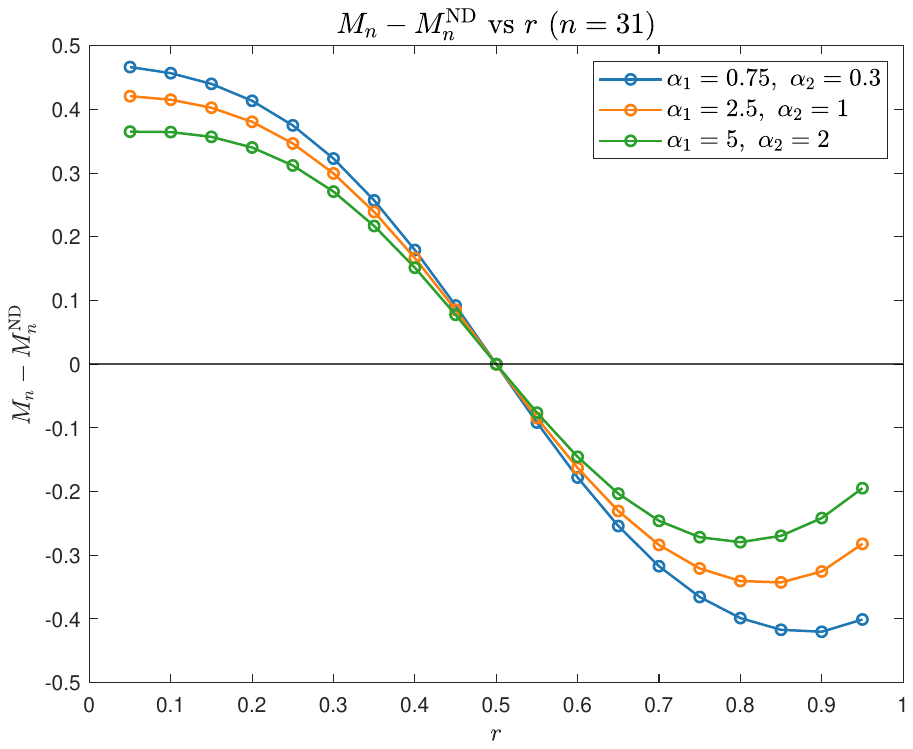}
\end{minipage}
\hfill
\begin{minipage}[t]{0.49\textwidth}
\centering
\includegraphics[width=\linewidth,trim=78 205 95 210,clip]{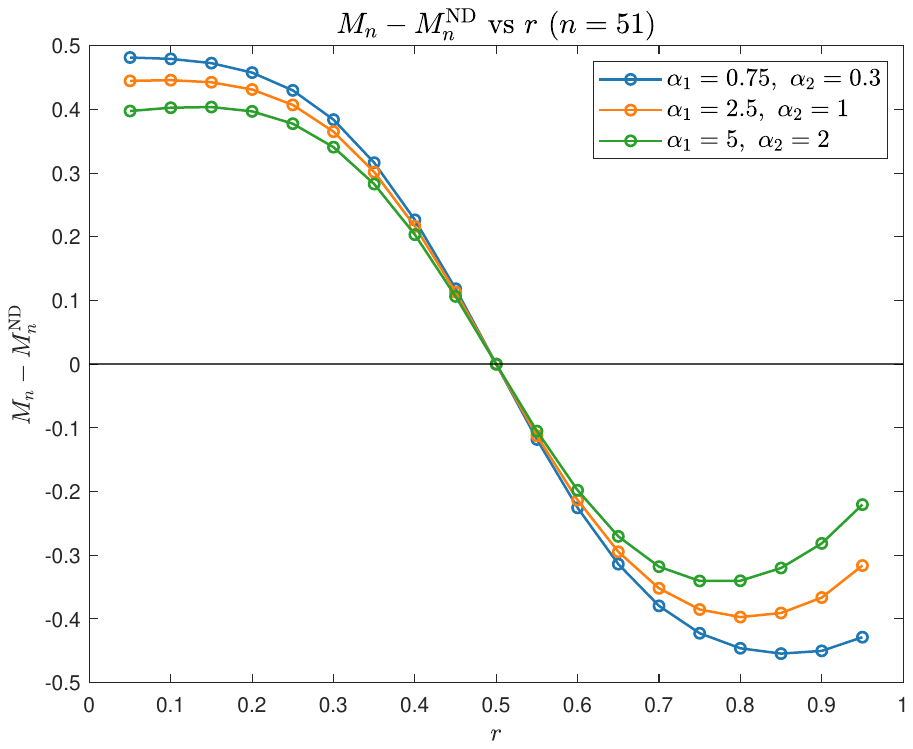}
\end{minipage}

\caption{Numerical illustration of the comparison with the no-deliberation benchmark on the complete graph.
Each panel plots $M_n-M_n^{\mathrm{ND}}$ against $r$, for fixed odd $n\in\{5,11,31,51\}$, using the same three parameter pairs as in Fig.~\ref{fig:Mn_minus_p_vs_r_combined}; see the discussion there for the rationale behind these choices.
Because $\alpha_1>\alpha_2$ in all three cases, the sign of $p-\tfrac12$ again coincides with the sign of $r-\tfrac12$, so the right half of each panel corresponds to $p>\tfrac12$.
Across all panels, $M_n-M_n^{\mathrm{ND}}$ is negative when $p>\tfrac12$, while the reverse ordering is observed when $p<\tfrac12$.
For $p>\tfrac12$, the negative gap becomes more pronounced as $\alpha_1$ and $\alpha_2$ decrease, that is, when zealot influence is weaker and free--free social influence is relatively more dominant.
Over the plotted values, the difference $M_n-M_n^{\mathrm{ND}}$ is neither increasing nor decreasing monotonically as the electorate size $n$ grows.}
\label{fig:Mn_minus_MnND_vs_r_combined}
\end{figure}

\begin{theorem}
\label{thm:majority-no-deliberation}
Assume the binary fully mixed model of Theorem~\ref{thm:stationary-binary-contrarian}, and let $n=2s+1$ be odd.
Let $M_n$ denote the strict majority success probability on the complete graph and let
$M_n^{\mathrm{ND}}$ be the no-deliberation benchmark defined in \eqref{eq:MnND-def}. If
$r>\tfrac12$ and $\alpha_1>\alpha_2>0$, then
\[
M_n<M_n^{\mathrm{ND}}.
\]
\end{theorem}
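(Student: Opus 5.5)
The plan is to use the Beta mixture representation available in the conformist regime. Since $r>\tfrac12$, Theorem~\ref{thm:stationary-binary-contrarian}(ii) gives $\theta,\beta>0$, so $X\mid\Theta\sim\mathrm{Bin}(n,\Theta)$ with $\Theta\sim\mathrm{Beta}(\theta,\beta)$. Setting $g(q):=\Pp(\mathrm{Bin}(n,q)\ge s+1)$, this gives
\[
M_n=\E[g(\Theta)],\qquad M_n^{\mathrm{ND}}=g(p^{\mathrm{ND}}),\qquad \E[\Theta]=p=\frac{\theta}{\theta+\beta}.
\]

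\textbf{Ordering of the means.} First I compare the means. By \eqref{eq:p-mean} and \eqref{eq:p-ND},
\[
p-\tfrac12=\frac{(2r-1)(\alpha_1-\alpha_2)}{2[\alpha_0+2(1-r)(n-1)]}
\quad\text{and}\quad
p^{\mathrm{ND}}-\tfrac12=\frac{(2r-1)(\alpha_1-\alpha_2)}{2\alpha_0},
\]
where $\alpha_0=\alpha_1+\alpha_2$. Hence $\tfrac12<p\le p^{\mathrm{ND}}=:p^*$, with equality only when $r=1$. Also $\theta>\beta$, because $p>\tfrac12$.

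\textbf{Shape of $g$.} Next I record the properties of $g$. We have $g'(q)=n\binom{2s}{s}q^s(1-q)^s$, so $g$ is increasing, and it is strictly concave on $[\tfrac12,1]$ because $q(1-q)$ decreases there. Moreover $g(q)+g(1-q)=1$, since $n$ is odd. Put $h:=g-\tfrac12$; then $h$ is odd about $\tfrac12$, with $h(\tfrac12)=0$. Let $f$ be the $\mathrm{Beta}(\theta,\beta)$ density and fold it about $\tfrac12$:
\[
\E[h(\Theta)]=\int_{1/2}^1 h(q)\,w(q)\,dq,\qquad w(q):=f(q)-f(1-q).
\]
Here $w\ge0$ on $[\tfrac12,1]$, since $f(q)/f(1-q)=(q/(1-q))^{\theta-\beta}\ge1$. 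Define $W:=\int_{1/2}^1 w=\Pp(\Theta>\tfrac12)-\Pp(\Theta<\tfrac12)\in(0,1)$, which is strictly below $1$ because $\Theta$ has full support. The same folding gives $\int_{1/2}^1(q-\tfrac12)\,w(q)\,dq=\E[\Theta]-\tfrac12=p-\tfrac12$.

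\textbf{Tangent-line bound.} This is the key step. By concavity of $h$ on $[\tfrac12,1]$, the tangent line at $p^*$ bounds $h$ from above: $h(q)\le h(p^*)+h'(p^*)(q-p^*)$. Integrating against $w\ge0$ yields
\[
M_n-\tfrac12\le h(p^*)\,W+h'(p^*)\Bigl[(p-\tfrac12)-(p^*-\tfrac12)W\Bigr].
\]
Thus it suffices to show
\[
h'(p^*)\bigl[(p-\tfrac12)-(p^*-\tfrac12)W\bigr]<h(p^*)(1-W).
\]
Strict concavity together with $h(\tfrac12)=0$ gives $h(p^*)>h'(p^*)(p^*-\tfrac12)$. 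Since $1-W>0$, the right-hand side therefore strictly exceeds $h'(p^*)(p^*-\tfrac12)(1-W)$. So the claim reduces to $p-\tfrac12\le p^*-\tfrac12$, which was established above. This handles the boundary case $r=1$ (where $p=p^*$) as well, because the strictness comes from concavity and $W<1$ rather than from the mean gap.

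\textbf{Main obstacle.} The step I expect to be most delicate is exactly this one. Plain Jensen fails because $g$ is S-shaped on $[0,1]$, and the point of folding about $\tfrac12$ is to reduce everything to the concave half $[\tfrac12,1]$. There the bookkeeping must keep both the total folded mass $W<1$ and the mean identity, so the signs and the strictness at $r=1$ need to be checked carefully.
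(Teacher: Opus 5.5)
Your proof is correct, and its core coincides with the paper's: the beta-mixture representation $M_n=\E[g_n(\Theta)]$, the fold of the $\mathrm{Beta}(\theta,\beta)$ density about $\tfrac12$ into a nonnegative weight $w$ on $(\tfrac12,1)$ with total mass $W<1$ (the paper's $C$) and first moment $\int_{1/2}^1(u-\tfrac12)w(u)\,du=p-\tfrac12$, and concavity of $h=g_n-\tfrac12$ on the right half.

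Where you differ is the finishing step. The paper normalizes $w$ into the density of an auxiliary variable $U$ and applies strict Jensen. It then combines $\E[U]\ge p$ with the nonincreasing secant slope $h(x)/(x-\tfrac12)$ to obtain the intermediate inequality $M_n<g_n(p)$. Only after that does it invoke monotonicity of $g_n$ and $p\le p^{\mathrm{ND}}$. You instead integrate a single supporting line at $p^{\mathrm{ND}}$ against the unnormalized $w$, which merges both steps into one.

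The paper's decomposition yields an interpretable intermediate result. It shows that positive dependence alone lowers accuracy relative to i.i.d.\ voters with the same competence $p$, separately from the pull of $p$ toward $\tfrac12$. Your route is shorter and needs neither the normalization nor the secant-slope lemma. It also draws strictness from concavity on $[\tfrac12,p^{\mathrm{ND}}]$ together with $W<1$, rather than from the mean gap, so $r=1$ is handled transparently. The paper, by contrast, asserts $\lambda\in(0,1)$ and $p<p^{\mathrm{ND}}$, both of which fail at $r=1$ (there $\lambda=1$); its conclusion survives only because its Jensen step is strict. Running your tangent argument at $p$ instead of $p^{\mathrm{ND}}$ also recovers the paper's intermediate inequality.

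One caveat applies to both proofs. Strict concavity of $g_n$ needs $s\ge1$. For $n=1$ one has $g_1(u)=u$, $p=p^{\mathrm{ND}}$ and $M_1=M_1^{\mathrm{ND}}$, so the strict inequality holds only for odd $n\ge3$.
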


\begin{proof}
Define
\begin{equation}
\label{eq:gn-def-main}
g_n(u):=\Pp\!\left(\mathrm{Binomial}(n,u)\ge s+1\right),
\qquad u\in[0,1].
\end{equation}
Then
\[
M_n^{\mathrm{ND}}=g_n\!\left(p^{\mathrm{ND}}\right).
\]
A direct differentiation gives
\begin{equation}
\label{eq:gn-derivatives-main}
g_n'(u)=n\binom{n-1}{s}u^{s}(1-u)^{s}>0,
\qquad
g_n''(u)=ns\binom{n-1}{s}u^{s-1}(1-u)^{s-1}(1-2u).
\end{equation}
Hence $g_n$ is strictly increasing on $(0,1)$ and strictly concave on $(\tfrac12,1)$.

By \eqref{eq:p-ND}, the assumptions $r>\tfrac12$ and $\alpha_1>\alpha_2$ imply
\[
p^{\mathrm{ND}}>\frac12.
\]
Moreover, combining \eqref{eq:p-mean} and \eqref{eq:p-ND}, the stationary marginal correctness on the complete graph can be written as
\begin{equation}
\label{eq:p-pND-convex-main}
p
=
\lambda\,p^{\mathrm{ND}}+(1-\lambda)\,\frac12,
\qquad
\lambda=
\frac{\alpha_1+\alpha_2}{\alpha_1+\alpha_2+2(1-r)(n-1)}\in(0,1).
\end{equation}
Therefore,
\begin{equation}
\label{eq:p-between-main}
\frac12<p<p^{\mathrm{ND}}.
\end{equation}

Next, since $r>\tfrac12$, the parameters in \eqref{eq:def-theta-beta_maintext} satisfy
\[
\theta=
\frac{r\alpha_1+(1-r)(n-1+\alpha_2)}{2r-1}>0,
\qquad
\beta=
\frac{(1-r)(n-1+\alpha_1)+r\alpha_2}{2r-1}>0.
\]
Thus \eqref{eq:pi-binary-contrarian} is an ordinary beta--binomial law, so it admits the standard beta-mixture representation: there exists
\[
\Theta\sim\mathrm{Beta}(\theta,\beta)
\]
such that
\[
X\mid \Theta \sim \mathrm{Binomial}(n,\Theta),
\qquad
\E[\Theta]=\frac{\theta}{\theta+\beta}=p.
\]
Consequently,
\begin{equation}
\label{eq:Mn-beta-mixture-main}
M_n=\E[g_n(\Theta)].
\end{equation}

We now show that
\begin{equation}
\label{eq:beta-mixture-upper-main}
\E[g_n(\Theta)]<g_n(\E[\Theta])=g_n(p).
\end{equation}
Because $g_n$ is concave only on $(\tfrac12,1)$, Jensen's inequality cannot be applied directly to $\Theta$. To isolate the part of the beta law that lies to the right of $\tfrac12$, define
\[
h_n(u):=g_n(u)-\frac12.
\]
Since $n$ is odd, binomial symmetry yields
\[
h_n(1-u)=-h_n(u).
\]

Let $f$ denote the density of $\Theta$. Splitting the integral for \(\E[h_n(\Theta)]\) at \(\tfrac12\) and making the change of variables \(v=1-u\) on \((0,\tfrac12)\) yields
\[
\E[h_n(\Theta)]
=
\int_{1/2}^{1} h_n(u)\bigl(f(u)-f(1-u)\bigr)\,du.
\]
Since $\E[\Theta]=p>\tfrac12$, we have $\theta>\beta$. Therefore, for every $u\in(\tfrac12,1)$,
\[
\frac{f(u)}{f(1-u)}
=
\left(\frac{u}{1-u}\right)^{\theta-\beta}>1,
\]
so
\[
w(u):=f(u)-f(1-u)>0,
\qquad \tfrac12<u<1.
\]
Set
\[
C:=\int_{1/2}^{1} w(u)\,du \in (0,1),
\qquad
q(u):=\frac{w(u)}{C},
\qquad \tfrac12<u<1.
\]
Then $q$ is a probability density on $(\tfrac12,1)$. Let $U$ be a random variable with density $q$. By construction,
\begin{equation}
\label{eq:EhU-main}
\E[h_n(\Theta)]=C\,\E[h_n(U)].
\end{equation}

We next compute the mean of $U$. Using the definition of $w$ and the change of variables $u\mapsto 1-u$ on $(0,\tfrac12)$,
\[
\int_{1/2}^{1}\Bigl(u-\frac12\Bigr)w(u)\,du
=
\int_0^1\Bigl(u-\frac12\Bigr)f(u)\,du
=
p-\frac12.
\]
Hence
\[
\E[U]-\frac12
=
\frac{1}{C}\int_{1/2}^{1}\Bigl(u-\frac12\Bigr)w(u)\,du
=
\frac{p-\frac12}{C},
\]
that is,
\[
\E[U]
=
\frac12+\frac{p-\frac12}{C}
\ge p,
\]
because $C\le 1$.

Now $h_n$ is strictly concave on $(\tfrac12,1)$ by \eqref{eq:gn-derivatives-main}, and $U$ is nondegenerate on that interval. Therefore Jensen's inequality is strict:
\[
\E[h_n(U)]< h_n(\E[U]).
\]
Also, since $h_n$ is concave on $(\tfrac12,1)$ and $h_n(\tfrac12)=0$, the secant slope \(\frac{h_n(x)}{x-\tfrac12}\) is nonincreasing for \(x\in(\tfrac12,1)\). Because $\E[U]\ge p>\tfrac12$, it follows that
\[
\frac{h_n(\E[U])}{\E[U]-\tfrac12}
\le
\frac{h_n(p)}{p-\tfrac12},
\]
and hence
\[
h_n(\E[U])
\le
\frac{\E[U]-\tfrac12}{p-\tfrac12}\,h_n(p)
=
\frac{h_n(p)}{C}.
\]
Combining this with \eqref{eq:EhU-main}, we obtain
\[
\E[h_n(\Theta)]
=
C\,\E[h_n(U)]
<
C\,h_n(\E[U])
\le
h_n(p).
\]
Therefore,
\[
M_n-\frac12
=
\E[h_n(\Theta)]
<
h_n(p)
=
g_n(p)-\frac12,
\]
which proves \eqref{eq:beta-mixture-upper-main}.

Finally, \eqref{eq:p-between-main} and the strict monotonicity of $g_n$ yield
\[
M_n<g_n(p)<g_n\!\left(p^{\mathrm{ND}}\right)=M_n^{\mathrm{ND}}.
\]
\end{proof}

Theorem~\ref{thm:majority-no-deliberation} sharpens the message of Section~\ref{sec:majority-complete-finite}. Although majority rule still improves on the \emph{post-deliberation} benchmark $p$ once $p>\tfrac12$, the deliberation process itself can reduce collective accuracy relative to an otherwise identical electorate with no endogenous interaction among free voters. More precisely, the theorem establishes analytically what the right half of Fig.~\ref{fig:Mn_minus_MnND_vs_r_combined} suggests numerically: when $p>\tfrac12$, one has $M_n-M_n^{\mathrm{ND}}<0$. The left half of Fig.~\ref{fig:Mn_minus_MnND_vs_r_combined} likewise suggests, numerically, the reverse ordering when $p<\tfrac12$. In that sense, the result identifies a failure of epistemic social aggregation caused by social influence: free--free interaction creates dependence and pulls the stationary marginal competence toward $1/2$, and the resulting loss can outweigh the within-model majority advantage established in Theorem~\ref{thm:majority-complete-finite}.

\subsection{Large electorate asymptotics on fully mixed populations}
\label{sec:majority-large-n}

Sections~\ref{sec:majority-complete-finite} and \ref{sec:majority-no-deliberation} address the \emph{nonasymptotic} side of the CJT: for a fixed odd electorate, does majority rule improve on a single free voter, and how does it compare with an independent no-deliberation benchmark? We now turn to the asymptotic side and ask what happens to strict majority correctness as the number of free voters grows. In this subsection, to make the dependence on electorate size explicit, let $X_n$ denote the stationary number of free voters supporting the correct alternative~$1$ in an electorate of size $n$, let
\[
p_n:=\frac{\E[X_n]}{n},
\]
and, for odd $n$, let
\[
M_n:=\Pp\!\left(X_n\ge \frac{n+1}{2}\right).
\]
As shown below, the large electorate answer depends sharply on whether the society is fully conformist ($r=1$) or contains even a small contrarian component ($\tfrac12<r<1$). We first consider the fully conformist case. The next theorem identifies the exact large-$n$ limit and shows that it is \emph{strictly larger} than the stationary correctness probability of a single free voter.

\begin{theorem}
\label{thm:majority-large-n-r1}
Assume the binary fully mixed model with $r=1$ and $\alpha_1>\alpha_2>0$. Then
\[
p_n=\frac{\alpha_1}{\alpha_1+\alpha_2}
\qquad\text{for every }n,
\]
and, along odd $n$,
\[
M_n\longrightarrow \Pp\!\left(\Theta>\frac12\right),
\qquad
\Theta\sim \mathrm{Beta}(\alpha_1,\alpha_2).
\]
Moreover,
\[
\Pp\!\left(\Theta>\frac12\right)
>
\frac{\alpha_1}{\alpha_1+\alpha_2}
=
p_n
>
\frac12.
\]
\end{theorem}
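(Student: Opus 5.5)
The plan is to specialize Theorem~\ref{thm:stationary-binary-contrarian}(ii) to $r=1$. Then $a=1$, and the parameters become $\theta=\alpha_1$ and $\beta=(n-1+\alpha_2)-n+1=\alpha_2$. These no longer depend on $n$, so $X_n$ is exactly $\mathrm{BetaBinomial}(n;\alpha_1,\alpha_2)$. Corollary~\ref{cor:moments-corr}(i) then gives $p_n=\theta/(\theta+\beta)=\alpha_1/(\alpha_1+\alpha_2)$ for every $n$, and this exceeds $\tfrac12$ because $\alpha_1>\alpha_2$.

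For the limit I will use the beta-mixture representation: take $\Theta\sim\mathrm{Beta}(\alpha_1,\alpha_2)$ and $X_n\mid\Theta\sim\mathrm{Binomial}(n,\Theta)$. This gives
\[
M_n=\E\bigl[g_n(\Theta)\bigr],\qquad g_n(u)=\Pp\bigl(\mathrm{Binomial}(n,u)\ge (n+1)/2\bigr).
\]
By the weak law of large numbers (or Chebyshev), $g_n(u)\to 1$ for $u>\tfrac12$ and $g_n(u)\to 0$ for $u<\tfrac12$. The point $u=\tfrac12$ has Lebesgue measure zero, so it carries no Beta mass. Since $0\le g_n\le 1$, dominated convergence yields $M_n\to\Pp(\Theta>\tfrac12)$ along odd $n$.

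The main step is the strict inequality $\Pp(\Theta>\tfrac12)>\E[\Theta]$. I would reuse the symmetrization trick from the proof of Theorem~\ref{thm:majority-no-deliberation}. Let $f$ be the Beta density and set $w(u)=f(u)-f(1-u)$ for $u\in(\tfrac12,1)$. Since $\alpha_1>\alpha_2$, we have $f(u)/f(1-u)=(u/(1-u))^{\alpha_1-\alpha_2}>1$, so $w>0$. Folding the integrals at $\tfrac12$ gives
\[
\Pp(\Theta>\tfrac12)-\tfrac12=\tfrac12\int_{1/2}^1 w(u)\,du,
\qquad
\E[\Theta]-\tfrac12=\int_{1/2}^1\bigl(u-\tfrac12\bigr)\,w(u)\,du.
\]
On $(\tfrac12,1)$ we have $u-\tfrac12<\tfrac12$, and $w>0$ on a set of positive measure. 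The second quantity is therefore strictly smaller than the first, which proves $\Pp(\Theta>\tfrac12)>\E[\Theta]=p_n$.

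The only delicate points are getting the folding identities right and justifying the limit at $u=\tfrac12$. Both are routine.
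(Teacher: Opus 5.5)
Your proposal is correct and follows essentially the same route as the paper: you set $\theta=\alpha_1$, $\beta=\alpha_2$, use the beta-mixture representation with the law of large numbers and dominated convergence for the limit, and prove the strict inequality by folding the Beta density at $\tfrac12$. Subtracting your two folded identities gives exactly the paper's single identity $\Pp(\Theta>\tfrac12)-\E[\Theta]=\int_{1/2}^1(1-u)\bigl(f(u)-f(1-u)\bigr)\,du$.
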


\begin{proof}
When $r=1$, Theorem~\ref{thm:stationary-binary-contrarian} gives $\theta=\alpha_1$ and $\beta=\alpha_2$, so the stationary distribution is the ordinary beta--binomial law
\[
X_n\mid \Theta \sim \mathrm{Binomial}(n,\Theta),
\qquad
\Theta\sim \mathrm{Beta}(\alpha_1,\alpha_2).
\]
In particular,
\[
p_n=\frac{\E[X_n]}{n}=\E[\Theta]=\frac{\alpha_1}{\alpha_1+\alpha_2}.
\]

Fix $\Theta=u$. By the law of large numbers, $\mathrm{Binomial}(n,u)/n\to u$ in probability, so
\[
\Pp\!\left(X_n\ge \frac{n+1}{2}\,\middle|\,\Theta=u\right)\longrightarrow \mathbf{1}_{\{u>1/2\}}
\qquad (u\neq\tfrac12).
\]
Since the pointwise limit above is established only for $u\neq\tfrac12$, we need it to hold for $\Theta$-almost every value of $u$. Because $\Theta$ has a continuous density, $\Pp(\Theta=\tfrac12)=0$, so the exceptional point carries no mass. Dominated convergence therefore yields
\[
M_n=\E\!\left[\Pp\!\left(X_n\ge \frac{n+1}{2}\,\middle|\,\Theta\right)\right]
\longrightarrow
\Pp\!\left(\Theta>\frac12\right).
\]

It remains to compare this limit with $p_n$. Let
\[
f(u)=\frac{u^{\alpha_1-1}(1-u)^{\alpha_2-1}}{B(\alpha_1,\alpha_2)},
\qquad 0<u<1,
\]
denote the beta density. Then
\begin{align*}
\Pp\!\left(\Theta>\frac12\right)-\E[\Theta]
&=
\int_{1/2}^{1} f(u)\,du-\int_{0}^{1}u\,f(u)\,du \\
&=
\int_{1/2}^{1} (1-u)\bigl(f(u)-f(1-u)\bigr)\,du.
\end{align*}
For $u\in(\tfrac12,1)$,
\[
\frac{f(u)}{f(1-u)}
=
\left(\frac{u}{1-u}\right)^{\alpha_1-\alpha_2}
>1
\]
because $\alpha_1>\alpha_2$. Hence $f(u)-f(1-u)>0$ on $(\tfrac12,1)$, so the last integral is strictly positive. Therefore
\[
\Pp\!\left(\Theta>\frac12\right)>\E[\Theta]=\frac{\alpha_1}{\alpha_1+\alpha_2},
\]
which is the desired inequality.
\end{proof}

Thus, in a fully conformist society, weakly informative signals are amplified rather than washed out: the large electorate majority remains more accurate than any single free voter, although unlike the classical CJT it does not converge to certainty. The picture changes completely once $r$ drops below~$1$, as the next result shows.

\begin{theorem}
\label{thm:majority-large-n-mixed}
Fix $\alpha_1,\alpha_2>0$ and $r\in(\tfrac12,1)$. Then
\[
\frac{X_n-n/2}{\sqrt n}
\xrightarrow{d}
\mathcal{N}\!\left(0,\frac{1}{8(1-r)}\right).
\]
Consequently, along odd $n$,
\[
M_n\longrightarrow \frac12.
\]
If in addition $\alpha_1>\alpha_2$, then $p_n>\tfrac12$ for every $n$, but
\[
p_n\longrightarrow \frac12
\qquad\text{and}\qquad
M_n\longrightarrow \frac12.
\]
\end{theorem}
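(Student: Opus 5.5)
The plan is to work directly from the beta--binomial law of Theorem~\ref{thm:stationary-binary-contrarian}(ii), with $r\in(\tfrac12,1)$ fixed and $a=2r-1>0$. The key observation is how $\theta$ and $\beta$ scale with $n$. From \eqref{eq:def-theta-beta_maintext},
\[
\theta=\frac{(1-r)n}{a}+O(1),\qquad \beta=\frac{(1-r)n}{a}+O(1).
\]
More precisely, $\theta-\beta=\alpha_1-\alpha_2$ exactly, as a direct computation shows. Both parameters therefore grow linearly in $n$ with a common leading coefficient $c:=(1-r)/a$, and their difference stays bounded.

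For the central limit statement I use the mixture representation $X_n\mid\Theta_n\sim\mathrm{Bin}(n,\Theta_n)$ with $\Theta_n\sim\mathrm{Beta}(\theta,\beta)$, and decompose
\[
\frac{X_n-n/2}{\sqrt n}=\frac{X_n-n\Theta_n}{\sqrt n}+\sqrt n\Bigl(\Theta_n-\tfrac12\Bigr).
\]

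For the second term, I use the fact that a Beta law with both parameters tending to infinity is asymptotically normal. Its mean is $\theta/(\theta+\beta)=\tfrac12+O(1/n)$. Its variance is
\[
\frac{\theta\beta}{(\theta+\beta)^2(\theta+\beta+1)}\sim\frac{1}{4\cdot 2cn},
\]
so $\sqrt n(\Theta_n-\tfrac12)\Rightarrow\mathcal N(0,1/(8c))=\mathcal N\bigl(0,a/(8(1-r))\bigr)$. This can be justified cleanly via the Gamma-ratio representation $\Theta_n=G_1/(G_1+G_2)$ together with the CLT and the delta method.

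For the first term, conditionally on $\Theta_n$ it is approximately $\mathcal N(0,\Theta_n(1-\Theta_n))$, which tends to $\mathcal N(0,\tfrac14)$. I would make this rigorous using characteristic functions, conditioning on $\Theta_n$ and using $\Theta_n\to\tfrac12$ in probability.

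The two pieces are asymptotically independent Gaussians, so the variances add:
\[
\frac14+\frac{2r-1}{8(1-r)}=\frac{2(1-r)+2r-1}{8(1-r)}=\frac{1}{8(1-r)}.
\]
This matches the claim. As a cross-check, I would also compute $\Var(X_n)$ exactly from the beta--binomial formula and confirm that it is $\sim n/(8(1-r))$.

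The main obstacle is the rigorous joint convergence; the characteristic-function conditioning step handles it.

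The remaining conclusions follow quickly. Since the limit distribution is continuous and symmetric about $0$, and $(n+1)/2-n/2=\tfrac12$ becomes $o(\sqrt n)$ after scaling, we get $M_n=\Pp\bigl((X_n-n/2)/\sqrt n\ge 1/(2\sqrt n)\bigr)\to\tfrac12$. For $p_n$, the displayed formula for $p-\tfrac12$ after Figure~\ref{fig:Mn_minus_p_vs_r_combined} gives
\[
p_n-\tfrac12=\frac{(2r-1)(\alpha_1-\alpha_2)}{2[\alpha_1+\alpha_2+2(1-r)(n-1)]}.
\]
This is positive when $\alpha_1>\alpha_2$ and of order $1/n$, hence tends to $0$.
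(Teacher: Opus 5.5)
Your proposal is correct and follows essentially the same route as the paper's proof in Appendix~\ref{app:large-n-majority-proof}. Both arguments use the beta-mixture decomposition into $\sqrt n(\Theta_n-\tfrac12)$ plus the conditional binomial fluctuation, asymptotic normality of the beta mixing variable with variance $(2r-1)/(8(1-r))$, joint convergence by conditioning the characteristic function on $\Theta_n$, and addition of the two variances. Your Gamma-ratio/delta-method argument makes the beta CLT more explicit than the paper, which simply cites it as standard. In the conditioning step, note as the paper does that the de Moivre--Laplace approximation must hold uniformly for $\Theta_n\in[\delta,1-\delta]$, and that the complement of this event has vanishing probability.
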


\begin{proof}[Proof sketch]
The full proof is given in Appendix~\ref{app:large-n-majority-proof}; here we outline only the main steps. For $r\in(\tfrac12,1)$, Theorem~\ref{thm:stationary-binary-contrarian} gives a beta--binomial stationary law with parameters
\[
\theta_n=\frac{r\alpha_1+(1-r)(n-1+\alpha_2)}{2r-1},
\qquad
\beta_n=\frac{(1-r)\alpha_1+r(n-1+\alpha_2)}{2r-1}-n+1.
\]
These satisfy
\[
\theta_n+\beta_n=\frac{\alpha_1+\alpha_2+2(1-r)(n-1)}{2r-1},
\qquad
\theta_n-\beta_n=\alpha_1-\alpha_2.
\]
Hence both $\theta_n$ and $\beta_n$ grow linearly with $n$, while their difference stays fixed. Equivalently, if
\[
P_n\sim \mathrm{Beta}(\theta_n,\beta_n),
\qquad
X_n\mid P_n\sim \mathrm{Binomial}(n,P_n),
\]
then $P_n$ concentrates around $1/2$ with fluctuations of order $n^{-1/2}$. Writing
\[
\frac{X_n-n/2}{\sqrt n}
=
\sqrt n\!\left(P_n-\frac12\right)
+
\frac{X_n-nP_n}{\sqrt n},
\]
the first term converges to a centered normal law with variance $(2r-1)/[8(1-r)]$, while the second converges to an asymptotically independent centered normal law with variance $1/4$. Their sum therefore converges to
\[
\mathcal{N}\!\left(0,\frac{1}{8(1-r)}\right),
\]
which yields $M_n\to\tfrac12$ because the limiting distribution is continuous and centered.

Finally, when $\alpha_1>\alpha_2$, Equation \eqref{eq:p-mean} gives
\[
p_n-\frac12
=
\frac{(2r-1)(\alpha_1-\alpha_2)}
{2\bigl[\alpha_1+\alpha_2+2(1-r)(n-1)\bigr]}
>0,
\]
so $p_n>\tfrac12$ for every $n$, but also $p_n\to\tfrac12$ as $n\to\infty$.
\end{proof}

Theorem~\ref{thm:majority-large-n-mixed} is striking from a CJT perspective. For every finite odd electorate with $\alpha_1>\alpha_2$ and $r>\tfrac12$, Theorem~\ref{thm:majority-complete-finite} shows that majority rule still improves on the endogenous post-deliberation benchmark $p_n$. Yet if $r<1$ is held fixed and $n$ grows, this advantage disappears: both $p_n$ and $M_n$ collapse to $1/2$. In particular, the classical no-deliberation benchmark behaves very differently. When $\alpha_1>\alpha_2$ and $r>\tfrac12$, \eqref{eq:p-ND} gives $p^{\mathrm{ND}}>\tfrac12$, so by the classical CJT the corresponding independent benchmark satisfies $M_n^{\mathrm{ND}}\to1$. Thus endogenous social influence can destroy asymptotic wisdom completely: instead of converging to certainty, strict majority correctness converges to random guessing. Together, Theorems~\ref{thm:majority-large-n-r1} and \ref{thm:majority-large-n-mixed} identify a tipping point at $r=1$: perfect conformism retains a nontrivial asymptotic accuracy advantage, whereas any fixed amount of contrarian updating collapses that advantage in the large electorate limit.

\section{Simulations}
\label{simulations}

In this section we use simulations to study how strict majority correctness depends on network topology beyond the complete graph and Erd\H{o}s--R\'enyi networks treated analytically in the previous sections. Our goals are threefold. First, we compare strict majority correctness across several sparse and structured topologies, namely Erd\H{o}s--R\'enyi (ER), scale free (SF), ring, and small world (SW) networks. Second, on these networks, we examine the robustness of the two Condorcet-type results established in Sections~\ref{sec:majority-complete-finite} and \ref{sec:majority-no-deliberation} by comparing strict majority correctness both with post-deliberation marginal correctness and with the no-deliberation benchmark. Third, we assess the accuracy of the ER mean-field approximation developed in Section~\ref{voter-er}, both for strict majority correctness and for pairwise voter correlation.

We organize the simulation results as follows. Figs.~3--5 compare the simulated strict majority correctness $M_n^G$ across the four network families, where $G$ denotes the underlying topology. Fig.~6 then examines $M_n^G$ more broadly within the SW family as the rewiring probability varies. Finally, Figs.~7 and~8 return to the two Condorcet-type comparisons from Sections~\ref{sec:majority-complete-finite} and \ref{sec:majority-no-deliberation} and study how they behave on ER, SF, ring, and SW networks.

Throughout this section, the simulations use $n=501$ free voters and retention probability $\ell=0$. For the ER networks, free--free edges are generated independently with probability $\frac{d}{n-1}$, where the expected degree of free voters is $d=8$. For the SF case we use Barab\'asi--Albert networks \citep{albert2002statistical} with seed size $m_0=9$ and attachment parameter $m=4$, yielding mean degree approximately $2m=8$. For the ring case we use a deterministic ring lattice with average free voter degree $8$, in which each free voter is connected to neighbors at offsets $\pm1,\ldots,\pm4$ with periodic boundary conditions. For the SW case we construct the networks from the same ring lattice with average degree $8$ by rewiring each clockwise edge independently to a uniformly chosen non neighbor, while keeping the graph undirected and simple \citep{watts1998collective}. Thus all four topologies are compared at essentially the same average connectivity.

For ER, SF, and SW, we simulate $1000$ fixed graph realizations for each value of the conformity rate $r$. For the ring network, which is deterministic, we instead average over $1000$ simulation runs on the same graph. In all cases, the process is run for a burn in period of 1000 Monte Carlo sweeps (each consisting of $n$ update attempts), followed by 1000 stationary samples separated by 10 sweeps, and the reported quantities are averaged over realizations.

To compare sparse networks with the complete graph theory, we use the rescaling described in Section~\ref{voter-er}. Specifically, if $(\alpha_1,\alpha_2)$ denotes the complete graph parameter pair, then the corresponding network simulations use
\[
\alpha_i^{\mathrm{G}}=\alpha_i\,\frac{d}{n-1},
\qquad i=1,2,
\]
with $d=8$. This is equivalent to comparing the network simulations with the complete graph formulas under the effective parameter mapping \eqref{eq:alpha-eff}. For the pairwise correlation plots shown in Figs.~3--5, the simulated correlation is estimated from the stationary samples of $X_t$ using the identities from Section~\ref{voter-moments}. Let
\[
\widehat p^{\,G}=\frac{\overline X}{n},
\qquad
\widehat q^{\,G}=\frac{\overline{X(X-1)}}{n(n-1)},
\]
where the bars denote averages over stationary samples and $G$ labels the underlying network topology. Then the simulated correlation for a randomly chosen pair of free voters is estimated by
\[
\widehat{\mathrm{Corr}}^{\,G}
=
\frac{\widehat q^{\,G}-\bigl(\widehat p^{\,G}\bigr)^2}{\widehat p^{\,G}\bigl(1-\widehat p^{\,G}\bigr)}.
\]
In Figs.~3--5, we compare these estimates for all four network topologies with the ER mean-field counterpart \eqref{corr_er}.

We begin with Figs.~3--5, which compare the simulated strict majority correctness across the four network families. In Figs.~\ref{fig:ER_SF_Mn_corr_vs_r}--\ref{fig:ER_SW_Mn_corr_vs_r}, the three rows correspond throughout to the parameter pairs $(\alpha_1,\alpha_2)=(0.75,0.3)$, $(2.5,1)$, and $(5,2)$. We first compare ER and SF networks in Fig.~\ref{fig:ER_SF_Mn_corr_vs_r}. In each row, the middle panel plots the simulated strict majority correctness for the ER and SF networks together with the complete graph benchmark, while the right panel plots the corresponding pairwise voter--voter correlation. Fig.~\ref{fig:ER_SF_Mn_corr_vs_r} shows three robust qualitative patterns. First, for all three parameter pairs, the ER curves lie very close to the complete graph benchmark, providing numerical support for the ER mean-field reduction developed in Section~\ref{voter-er}. Second, across the full range of $r$ shown in the figure, the SF topology yields lower strict majority correctness than the ER topology. Third, the correlation panels explain this gap: the SF networks systematically generate stronger positive voter--voter correlations than the ER networks. Collectively, these plots support the interpretation that topology affects collective accuracy primarily through the dependence structure it induces among free voters. Relative to ER, the heterogeneous hub structure of the SF network creates a stronger common source effect and amplifies alignment among free voters' opinions. In the present setting, where alternative~1 is favored by the committed environment and majority accuracy is evaluated over the free voters, this stronger positive dependence reduces the effective amount of independent information aggregated by majority rule and lowers strict majority correctness.

\begin{figure}[!htbp]
\centering
\captionsetup{width=0.97\textwidth}

\noindent
\begin{minipage}[c]{0.10\textwidth}
\centering
\scriptsize
\makebox[\linewidth][c]{%
\begin{tabular}{@{}r@{}l@{}}
$\alpha_1$&$=0.75$\\[-0.05em]
$\alpha_2$&$=0.3$
\end{tabular}}
\end{minipage}\hspace{0.01\textwidth}%
\begin{minipage}[c]{0.43\textwidth}
\centering
\includegraphics[width=\linewidth]{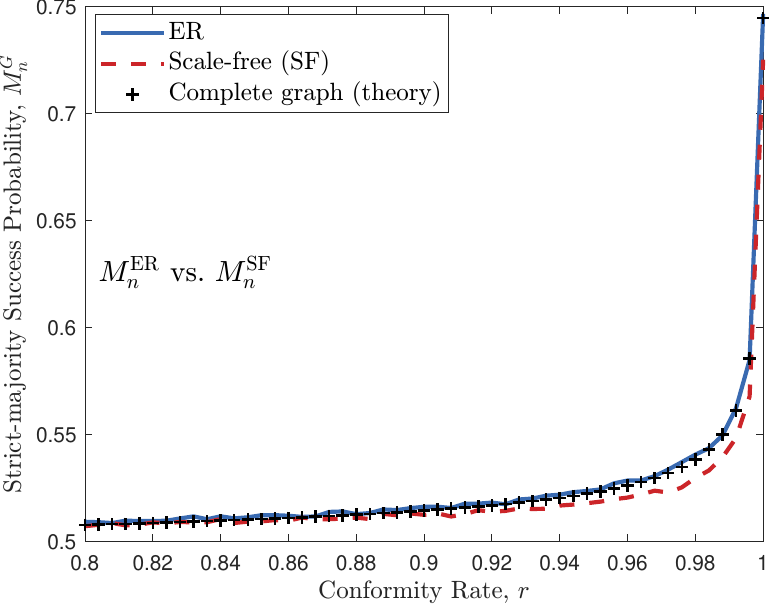}
\end{minipage}\hspace{0.01\textwidth}%
\begin{minipage}[c]{0.43\textwidth}
\centering
\includegraphics[width=\linewidth]{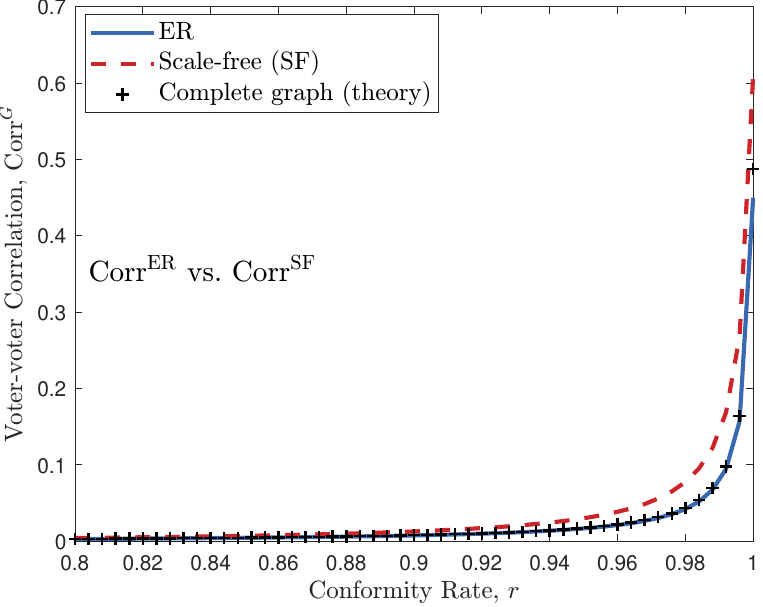}
\end{minipage}

\vspace{0.35em}

\noindent
\begin{minipage}[c]{0.10\textwidth}
\centering
\scriptsize
\makebox[\linewidth][c]{%
\begin{tabular}{@{}r@{}l@{}}
$\alpha_1$&$=2.5$\\[-0.05em]
$\alpha_2$&$=1$
\end{tabular}}
\end{minipage}\hspace{0.01\textwidth}%
\begin{minipage}[c]{0.43\textwidth}
\centering
\includegraphics[width=\linewidth]{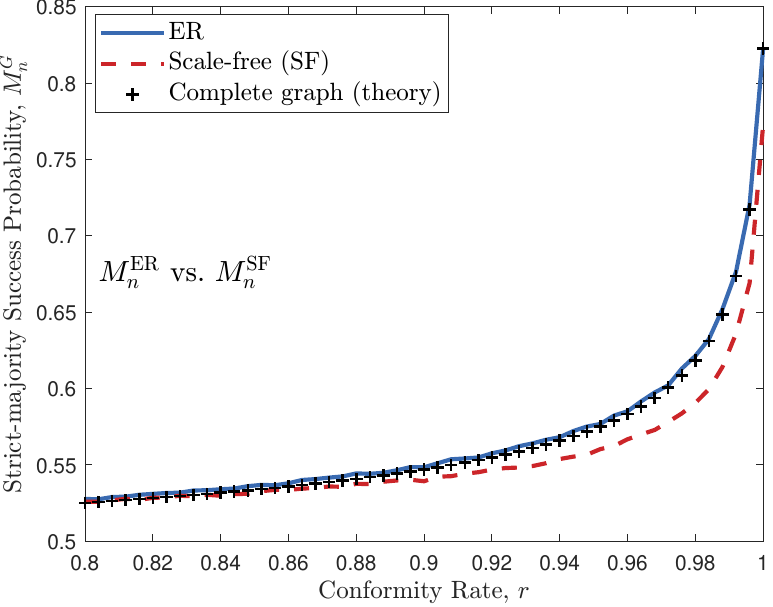}
\end{minipage}\hspace{0.01\textwidth}%
\begin{minipage}[c]{0.43\textwidth}
\centering
\includegraphics[width=\linewidth]{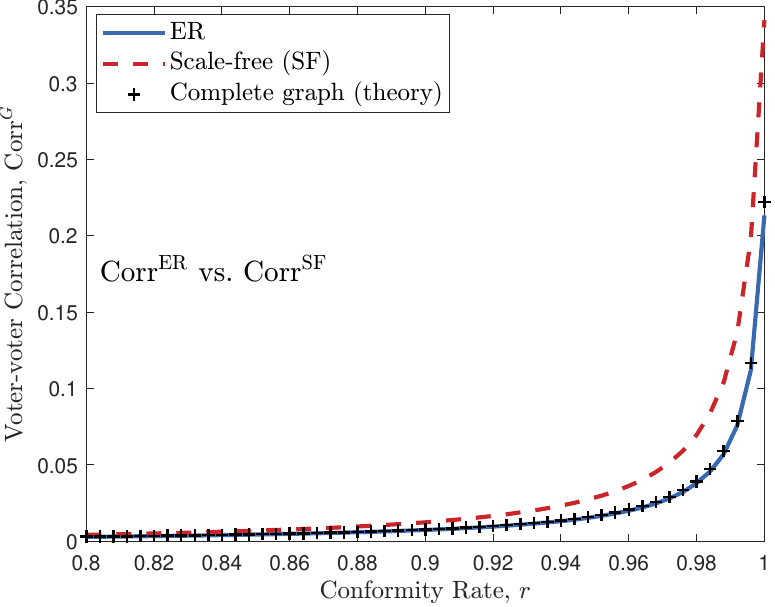}
\end{minipage}

\vspace{0.35em}

\noindent
\begin{minipage}[c]{0.10\textwidth}
\centering
\scriptsize
\makebox[\linewidth][c]{%
\begin{tabular}{@{}r@{}l@{}}
$\alpha_1$&$=5$\\[-0.05em]
$\alpha_2$&$=2$
\end{tabular}}
\end{minipage}\hspace{0.01\textwidth}%
\begin{minipage}[c]{0.43\textwidth}
\centering
\includegraphics[width=\linewidth]{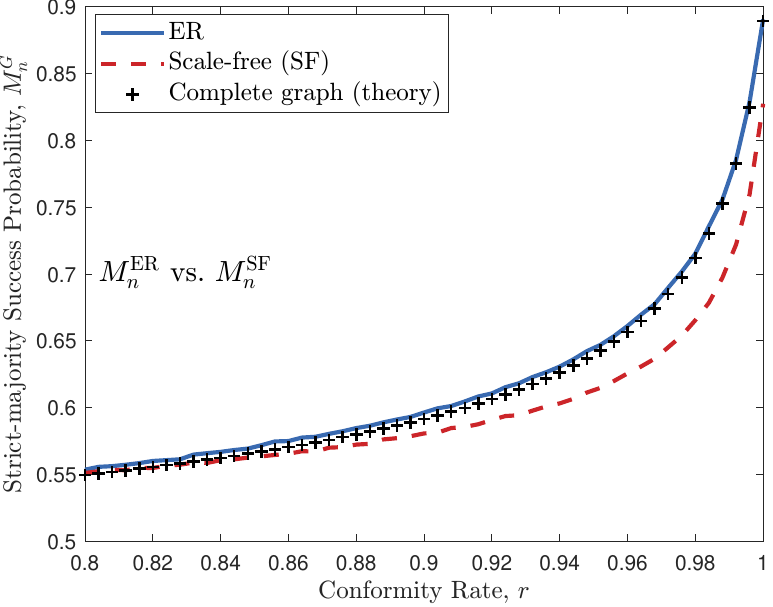}
\end{minipage}\hspace{0.01\textwidth}%
\begin{minipage}[c]{0.43\textwidth}
\centering
\includegraphics[width=\linewidth]{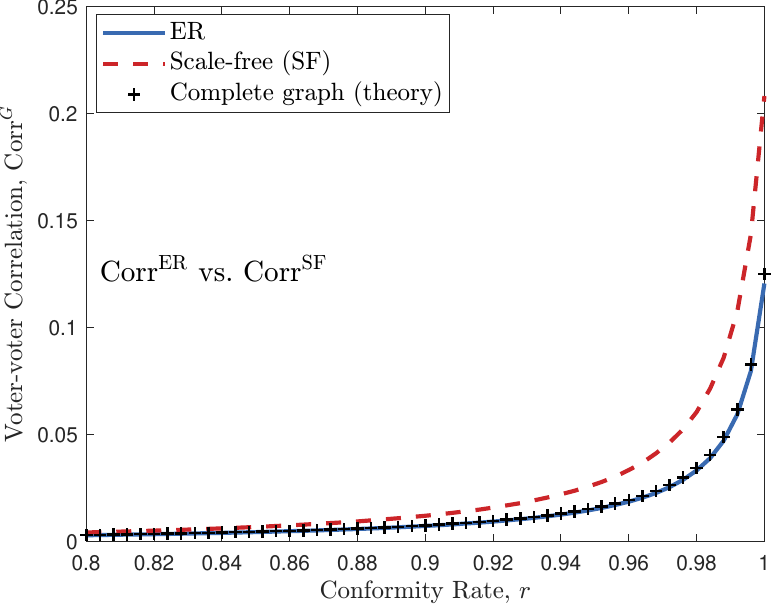}
\end{minipage}

\caption{Comparison of Erd\H{o}s--R\'enyi (ER) and scale free (SF) networks for the zealot--contrarian voter model, with $n=501$ free voters and conformity rate $r\in[0.8,1]$, where topology-dependent differences are most pronounced. The three rows correspond to the parameter pairs indicated in the left column. In each row, the middle panel shows the simulated strict majority correctness $M_n^G$ for ER and SF networks together with the complete graph benchmark, and the right panel shows the corresponding pairwise voter--voter correlation. Zealot strengths in the network simulations are rescaled according to the ER mean-field mapping from Section~\ref{voter-er}. Across all three parameter pairs, the ER results closely track the complete graph theory, whereas the SF topology yields lower strict majority correctness and higher positive voter--voter correlation.}
\label{fig:ER_SF_Mn_corr_vs_r}
\end{figure}

We next compare ER and ring networks, shown in Fig.~\ref{fig:ER_RING_Mn_corr_vs_r}. The row structure is the same as in Fig.~\ref{fig:ER_SF_Mn_corr_vs_r}, with strict majority correctness in the middle panels and the corresponding pairwise voter--voter correlation in the right panels. Fig.~\ref{fig:ER_RING_Mn_corr_vs_r} shows the reverse ordering from Fig.~\ref{fig:ER_SF_Mn_corr_vs_r}. For all three parameter pairs, the ring topology yields higher strict majority correctness than ER throughout the plotted range. At the same time, the correlation panels show that ring networks systematically generate lower pairwise voter--voter correlations than ER. The complete graph benchmark again remains closely aligned with the ER simulation curves, consistent with the ER mean-field reduction developed in Section~\ref{voter-er}. The ring topology departs from this ER-like behavior in the direction of weaker dependence and better majority performance. Relative to ER, the local structure of the ring network suppresses long range coordination and weakens the common source effects induced by social influence. In the present setting, this weaker dependence leaves more effective information for majority aggregation and therefore raises strict majority correctness.

\begin{figure}[!htbp]
\centering
\captionsetup{width=0.97\textwidth}

\noindent
\begin{minipage}[c]{0.10\textwidth}
\centering
\scriptsize
\makebox[\linewidth][c]{%
\begin{tabular}{@{}r@{}l@{}}
$\alpha_1$&$=0.75$\\[-0.05em]
$\alpha_2$&$=0.3$
\end{tabular}}
\end{minipage}\hspace{0.01\textwidth}%
\begin{minipage}[c]{0.43\textwidth}
\centering
\includegraphics[width=\linewidth]{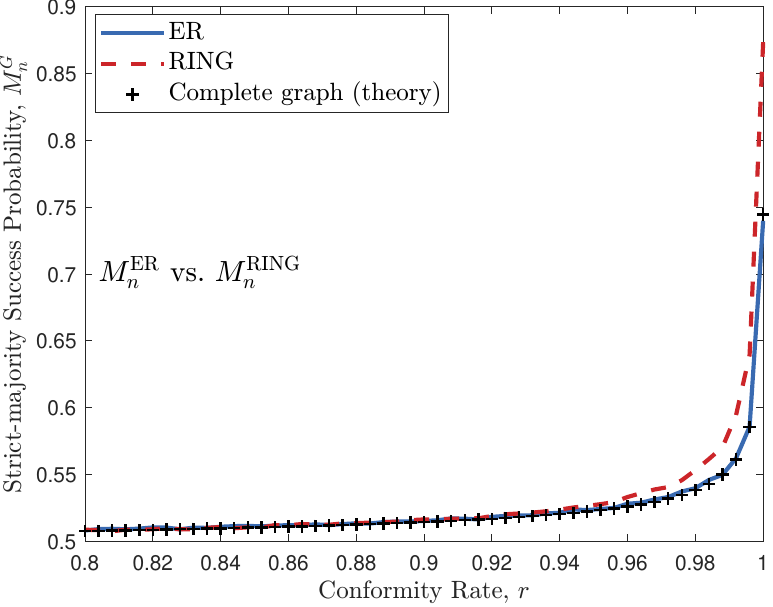}
\end{minipage}\hspace{0.01\textwidth}%
\begin{minipage}[c]{0.43\textwidth}
\centering
\includegraphics[width=\linewidth]{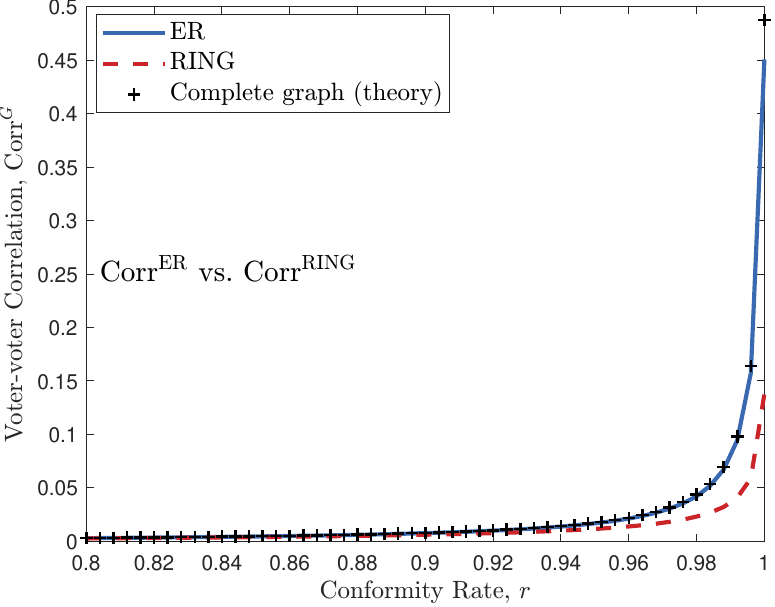}
\end{minipage}

\vspace{0.35em}

\noindent
\begin{minipage}[c]{0.10\textwidth}
\centering
\scriptsize
\makebox[\linewidth][c]{%
\begin{tabular}{@{}r@{}l@{}}
$\alpha_1$&$=2.5$\\[-0.05em]
$\alpha_2$&$=1$
\end{tabular}}
\end{minipage}\hspace{0.01\textwidth}%
\begin{minipage}[c]{0.43\textwidth}
\centering
\includegraphics[width=\linewidth]{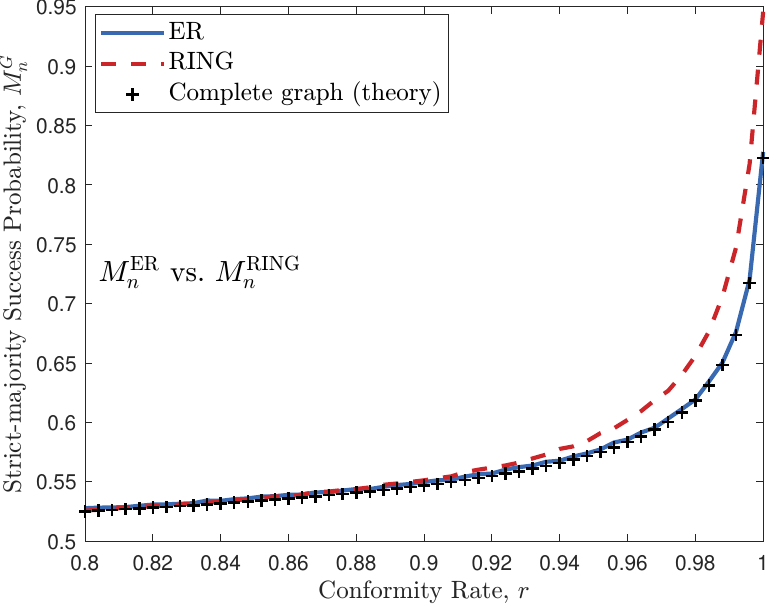}
\end{minipage}\hspace{0.01\textwidth}%
\begin{minipage}[c]{0.43\textwidth}
\centering
\includegraphics[width=\linewidth]{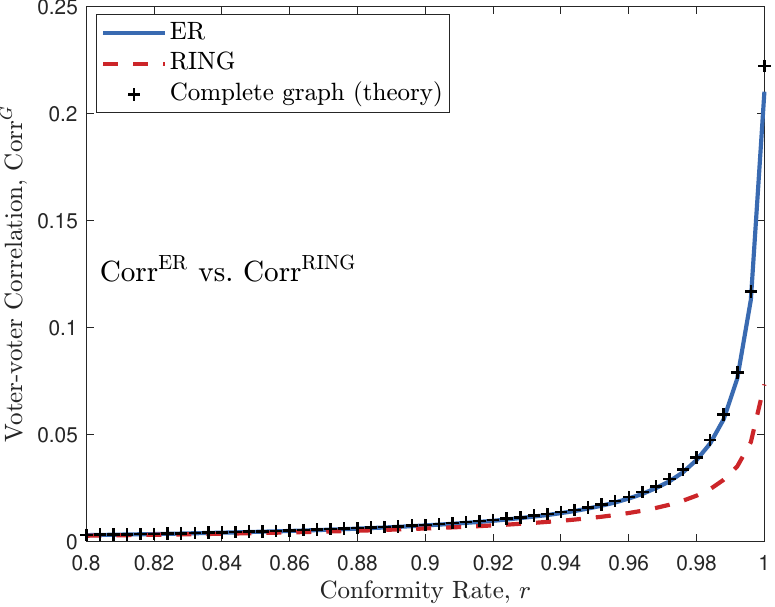}
\end{minipage}

\vspace{0.35em}

\noindent
\begin{minipage}[c]{0.10\textwidth}
\centering
\scriptsize
\makebox[\linewidth][c]{%
\begin{tabular}{@{}r@{}l@{}}
$\alpha_1$&$=5$\\[-0.05em]
$\alpha_2$&$=2$
\end{tabular}}
\end{minipage}\hspace{0.01\textwidth}%
\begin{minipage}[c]{0.43\textwidth}
\centering
\includegraphics[width=\linewidth]{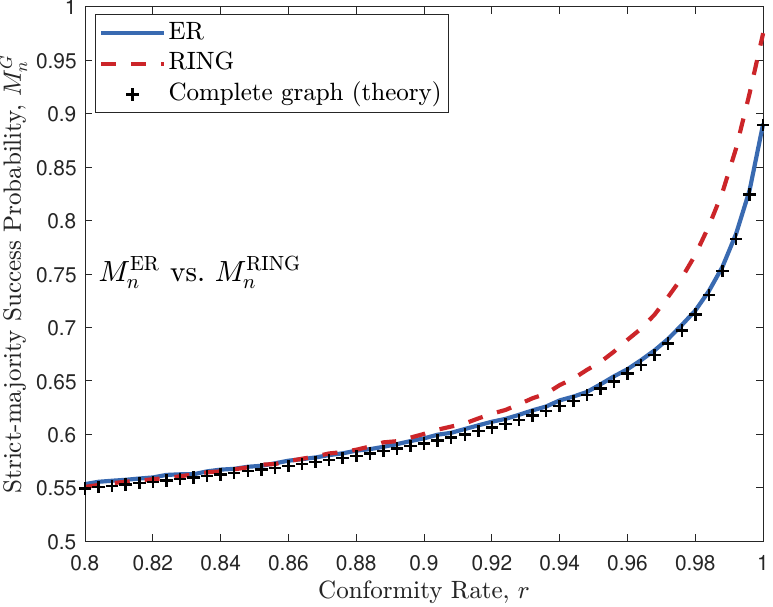}
\end{minipage}\hspace{0.01\textwidth}%
\begin{minipage}[c]{0.43\textwidth}
\centering
\includegraphics[width=\linewidth]{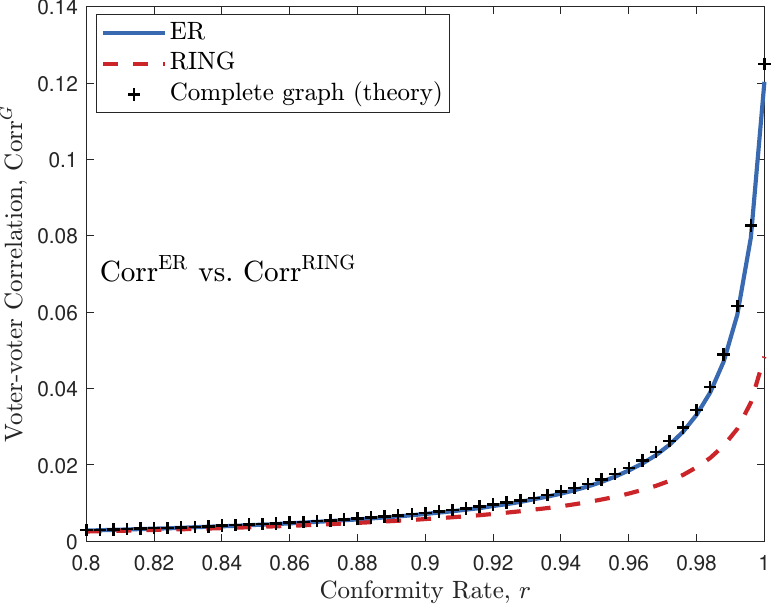}
\end{minipage}

\caption{Comparison of Erd\H{o}s--R\'enyi (ER) and ring networks for the zealot--contrarian voter model, with $n=501$ free voters and conformity rate $r\in[0.8,1]$, where topology-dependent differences are most pronounced. The three rows correspond to the parameter pairs indicated in the left column. In each row, the middle panel shows the simulated strict majority correctness $M_n^G$ for ER and ring networks together with the complete graph benchmark, and the right panel shows the corresponding pairwise voter--voter correlation. Zealot strengths in the network simulations are rescaled according to the ER mean-field mapping from Section~\ref{voter-er}. Across all three parameter pairs, the ER results again closely track the complete graph theory, whereas the ring topology yields higher strict majority correctness and lower positive voter--voter correlation than the ER topology.}
\label{fig:ER_RING_Mn_corr_vs_r}
\end{figure}

We finally compare ER and SW networks, shown in Fig.~\ref{fig:ER_SW_Mn_corr_vs_r}. In these simulations, the SW networks are generated with rewiring probability $p_{\mathrm{rewire}}=0.01$. The layout is again the same as in Figs.~\ref{fig:ER_SF_Mn_corr_vs_r} and \ref{fig:ER_RING_Mn_corr_vs_r}. Fig.~\ref{fig:ER_SW_Mn_corr_vs_r} shows that the SW topology also improves on ER throughout the plotted range. For all three parameter pairs, the SW curves for strict majority correctness lie above the ER curves, while the correlation panels show that the SW networks generate systematically lower pairwise voter--voter correlations than ER. The complete graph benchmark again remains closely aligned with the ER curves, consistent with the ER mean-field reduction from Section~\ref{voter-er}. Together with Figs.~\ref{fig:ER_SF_Mn_corr_vs_r} and \ref{fig:ER_RING_Mn_corr_vs_r}, Fig.~\ref{fig:ER_SW_Mn_corr_vs_r} reinforces a single conclusion: across these topologies, stronger positive dependence is associated with lower collective accuracy, whereas weaker dependence is associated with higher collective accuracy.

\begin{figure}[!htbp]
\centering
\captionsetup{width=0.97\textwidth}

\noindent
\begin{minipage}[c]{0.10\textwidth}
\centering
\scriptsize
\makebox[\linewidth][c]{%
\begin{tabular}{@{}r@{}l@{}}
$\alpha_1$&$=0.75$\\[-0.05em]
$\alpha_2$&$=0.3$
\end{tabular}}
\end{minipage}\hspace{0.01\textwidth}%
\begin{minipage}[c]{0.43\textwidth}
\centering
\includegraphics[width=\linewidth]{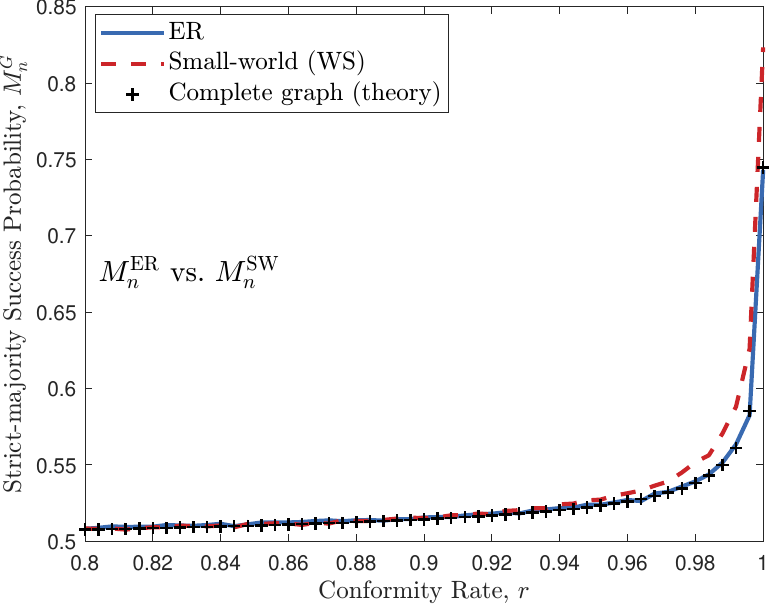}
\end{minipage}\hspace{0.01\textwidth}%
\begin{minipage}[c]{0.43\textwidth}
\centering
\includegraphics[width=\linewidth]{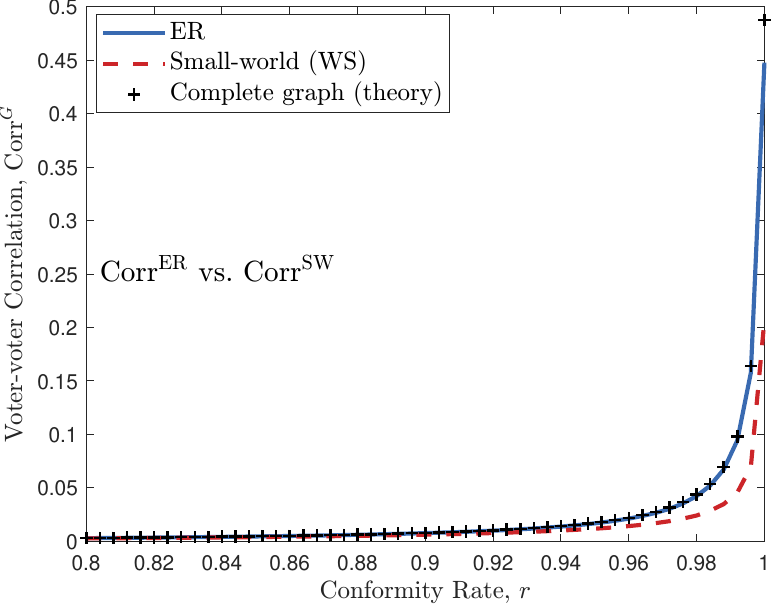}
\end{minipage}

\vspace{0.35em}

\noindent
\begin{minipage}[c]{0.10\textwidth}
\centering
\scriptsize
\makebox[\linewidth][c]{%
\begin{tabular}{@{}r@{}l@{}}
$\alpha_1$&$=2.5$\\[-0.05em]
$\alpha_2$&$=1$
\end{tabular}}
\end{minipage}\hspace{0.01\textwidth}%
\begin{minipage}[c]{0.43\textwidth}
\centering
\includegraphics[width=\linewidth]{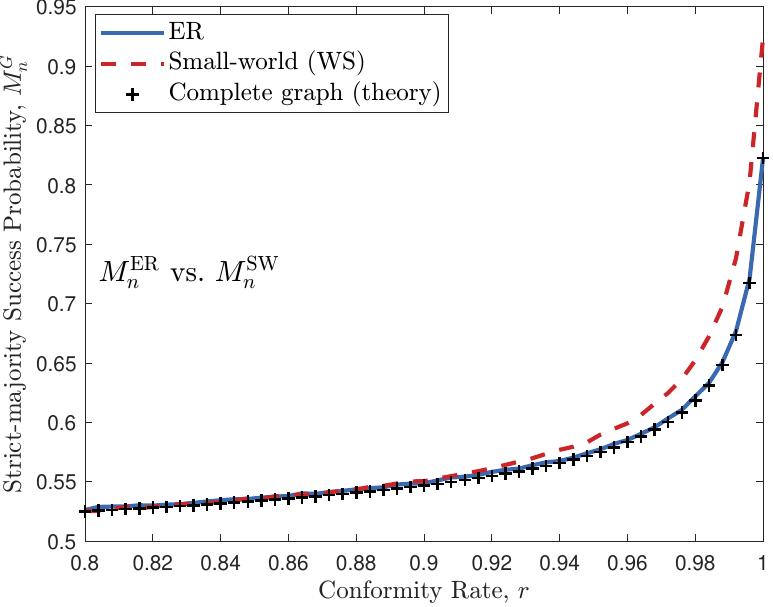}
\end{minipage}\hspace{0.01\textwidth}%
\begin{minipage}[c]{0.43\textwidth}
\centering
\includegraphics[width=\linewidth]{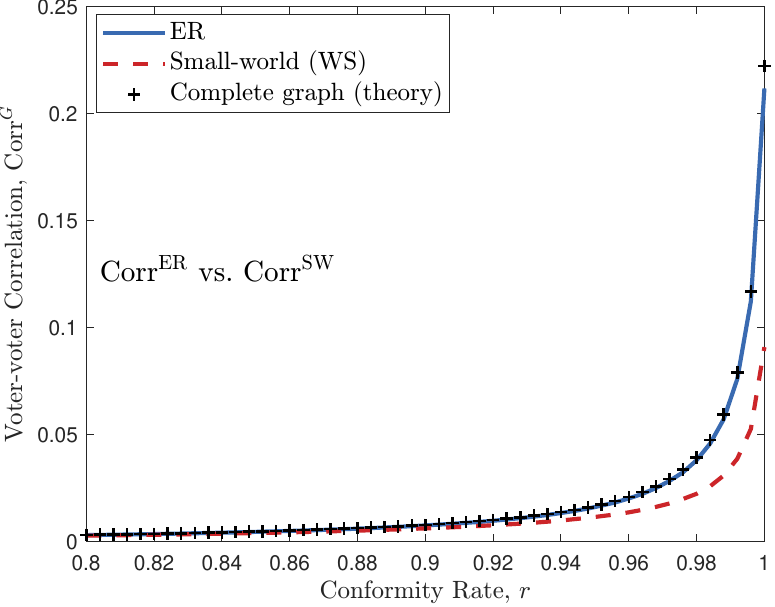}
\end{minipage}

\vspace{0.35em}

\noindent
\begin{minipage}[c]{0.10\textwidth}
\centering
\scriptsize
\makebox[\linewidth][c]{%
\begin{tabular}{@{}r@{}l@{}}
$\alpha_1$&$=5$\\[-0.05em]
$\alpha_2$&$=2$
\end{tabular}}
\end{minipage}\hspace{0.01\textwidth}%
\begin{minipage}[c]{0.43\textwidth}
\centering
\includegraphics[width=\linewidth]{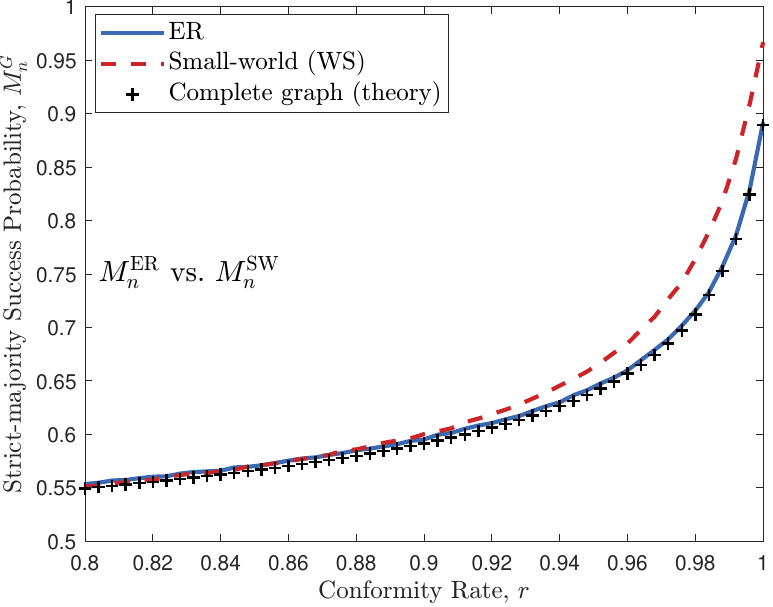}
\end{minipage}\hspace{0.01\textwidth}%
\begin{minipage}[c]{0.43\textwidth}
\centering
\includegraphics[width=\linewidth]{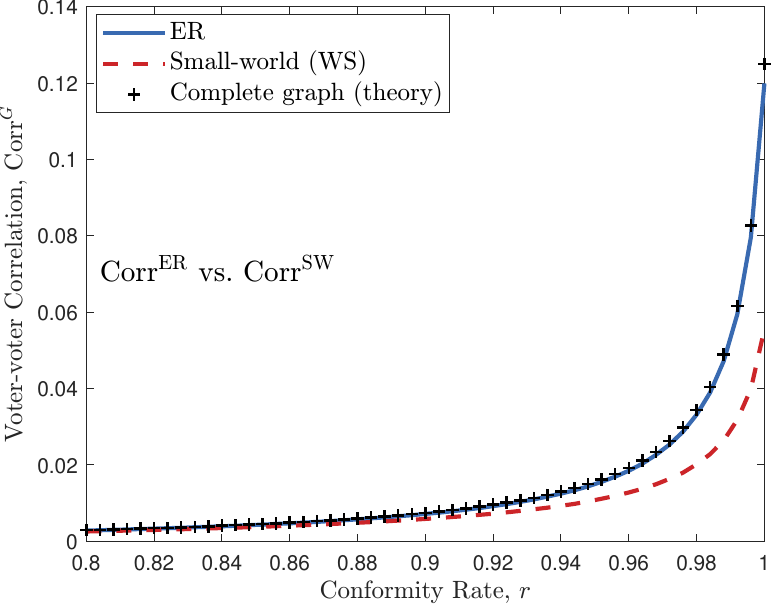}
\end{minipage}

\caption{Comparison of Erd\H{o}s--R\'enyi (ER) and small world (SW) networks for the zealot--contrarian voter model, with $n=501$ free voters, conformity rate $r\in[0.8,1]$, and SW rewiring probability $p_{\mathrm{rewire}}=0.01$. The three rows correspond to the parameter pairs indicated in the left column. In each row, the middle panel shows the simulated strict majority correctness $M_n^G$ for ER and SW networks together with the complete graph benchmark, and the right panel shows the corresponding pairwise voter--voter correlation. Zealot strengths in the network simulations are rescaled according to the ER mean-field mapping from Section~\ref{voter-er}. Across all three parameter pairs, the ER results again closely track the complete graph theory, whereas the SW topology yields higher strict majority correctness and lower positive voter--voter correlation than the ER topology.}
\label{fig:ER_SW_Mn_corr_vs_r}
\end{figure}

To examine the SW family more broadly, Fig.~\ref{fig:ER_SW_rewire_Mn_vs_r} varies the SW rewiring probability across $p_{\mathrm{rewire}}\in\{0.01,0.05,0.10,0.20\}$ while keeping the same three parameter pairs used in Figs.~\ref{fig:ER_SF_Mn_corr_vs_r}--\ref{fig:ER_SW_Mn_corr_vs_r}. Each row corresponds to one pair $(\alpha_1,\alpha_2)$, and each column corresponds to one rewiring probability. Fig.~\ref{fig:ER_SW_rewire_Mn_vs_r} shows that, for all three parameter pairs and for each rewiring probability considered, the SW topology continues to yield higher strict majority correctness than the ER topology throughout the plotted range of $r$. At the same time, the advantage of the SW topology becomes progressively smaller as $p_{\mathrm{rewire}}$ increases from $0.01$ to $0.20$, and the SW curves move closer to the ER curves. This is consistent with the same dependence based mechanism emphasized above. As the rewiring probability increases, the SW network becomes more random and structurally closer to an Erd\H{o}s--R\'enyi network, so the correlation reduction characteristic of low rewiring SW networks weakens. Correspondingly, the gap between $M_n^{\mathrm{SW}}$ and $M_n^{\mathrm{ER}}$ shrinks. Fig.~\ref{fig:ER_SW_rewire_Mn_vs_r} therefore shows that the improvement in strict majority correctness produced by the SW family is strongest at low rewiring and fades as the topology becomes more random.

\begin{figure}[!htbp]
\centering
\captionsetup{width=0.98\textwidth}

\noindent
\begin{minipage}[c]{0.10\textwidth}
\centering
\scriptsize
\end{minipage}\hspace{0.005\textwidth}%
\begin{minipage}[c]{0.21\textwidth}
\centering
\scriptsize
\makebox[\linewidth][c]{\hspace*{0.25\linewidth}$p_{\mathrm{rewire}}=0.01$\hspace*{-0.25\linewidth}}
\end{minipage}\hspace{0.005\textwidth}%
\begin{minipage}[c]{0.21\textwidth}
\centering
\scriptsize
\makebox[\linewidth][c]{\hspace*{0.25\linewidth}$p_{\mathrm{rewire}}=0.05$\hspace*{-0.25\linewidth}}
\end{minipage}\hspace{0.005\textwidth}%
\begin{minipage}[c]{0.21\textwidth}
\centering
\scriptsize
\makebox[\linewidth][c]{\hspace*{0.25\linewidth}$p_{\mathrm{rewire}}=0.10$\hspace*{-0.25\linewidth}}
\end{minipage}\hspace{0.005\textwidth}%
\begin{minipage}[c]{0.21\textwidth}
\centering
\scriptsize
\makebox[\linewidth][c]{\hspace*{0.25\linewidth}$p_{\mathrm{rewire}}=0.20$\hspace*{-0.25\linewidth}}
\end{minipage}

\vspace{0.25em}

\noindent
\begin{minipage}[c]{0.10\textwidth}
\centering
\scriptsize
\makebox[\linewidth][c]{%
\begin{tabular}{@{}r@{}l@{}}
$\alpha_1$&$=0.75$\\[-0.05em]
$\alpha_2$&$=0.3$
\end{tabular}}
\end{minipage}\hspace{0.005\textwidth}%
\begin{minipage}[c]{0.21\textwidth}
\centering
\includegraphics[width=\linewidth]{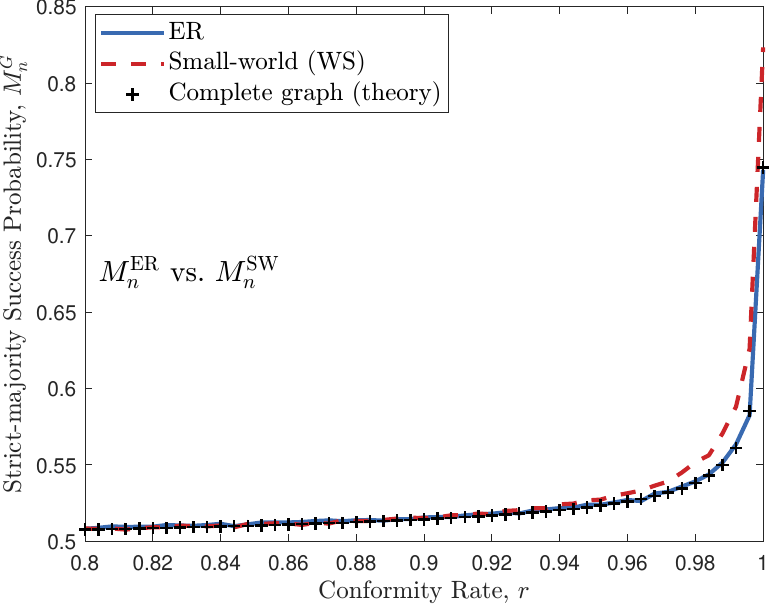}
\end{minipage}\hspace{0.005\textwidth}%
\begin{minipage}[c]{0.21\textwidth}
\centering
\includegraphics[width=\linewidth]{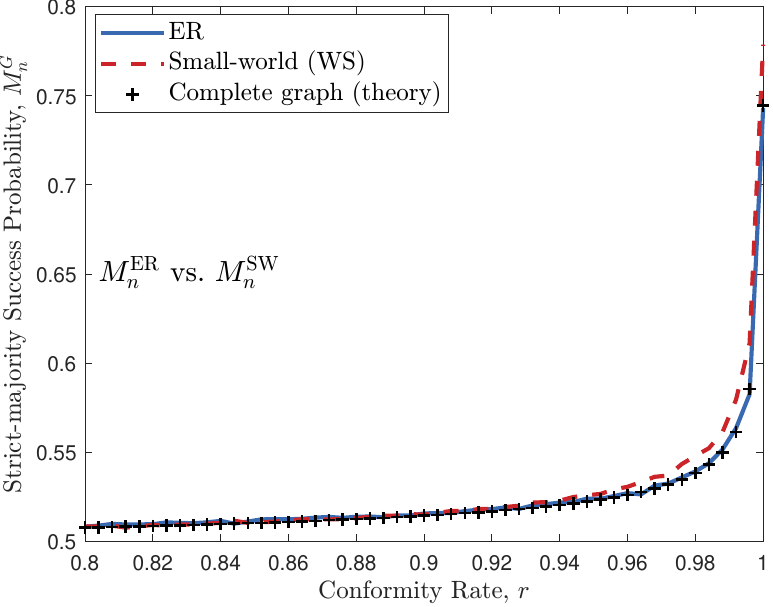}
\end{minipage}\hspace{0.005\textwidth}%
\begin{minipage}[c]{0.21\textwidth}
\centering
\includegraphics[width=\linewidth]{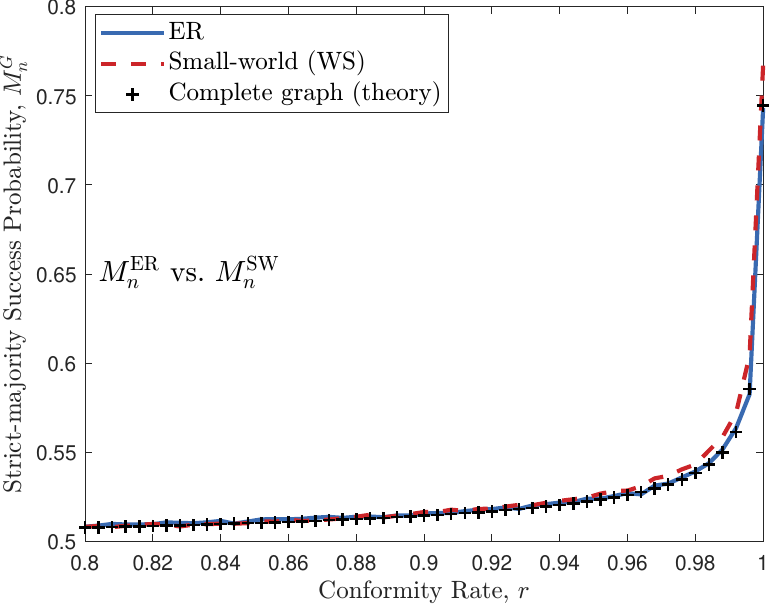}
\end{minipage}\hspace{0.005\textwidth}%
\begin{minipage}[c]{0.21\textwidth}
\centering
\includegraphics[width=\linewidth]{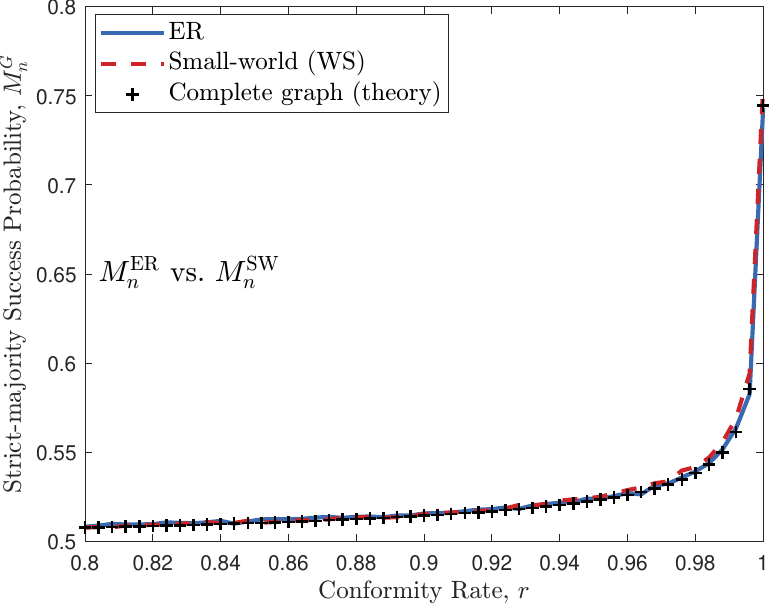}
\end{minipage}

\vspace{0.25em}

\noindent
\begin{minipage}[c]{0.10\textwidth}
\centering
\scriptsize
\makebox[\linewidth][c]{%
\begin{tabular}{@{}r@{}l@{}}
$\alpha_1$&$=2.5$\\[-0.05em]
$\alpha_2$&$=1$
\end{tabular}}
\end{minipage}\hspace{0.005\textwidth}%
\begin{minipage}[c]{0.21\textwidth}
\centering
\includegraphics[width=\linewidth]{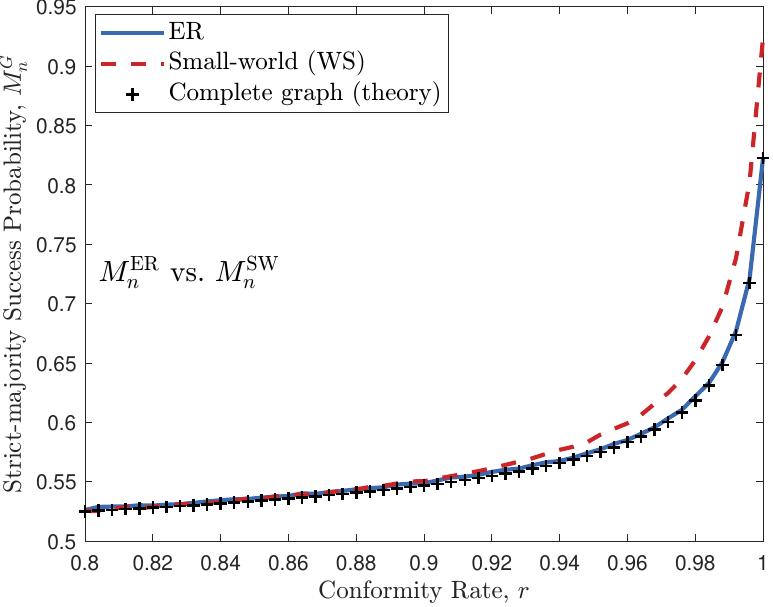}
\end{minipage}\hspace{0.005\textwidth}%
\begin{minipage}[c]{0.21\textwidth}
\centering
\includegraphics[width=\linewidth]{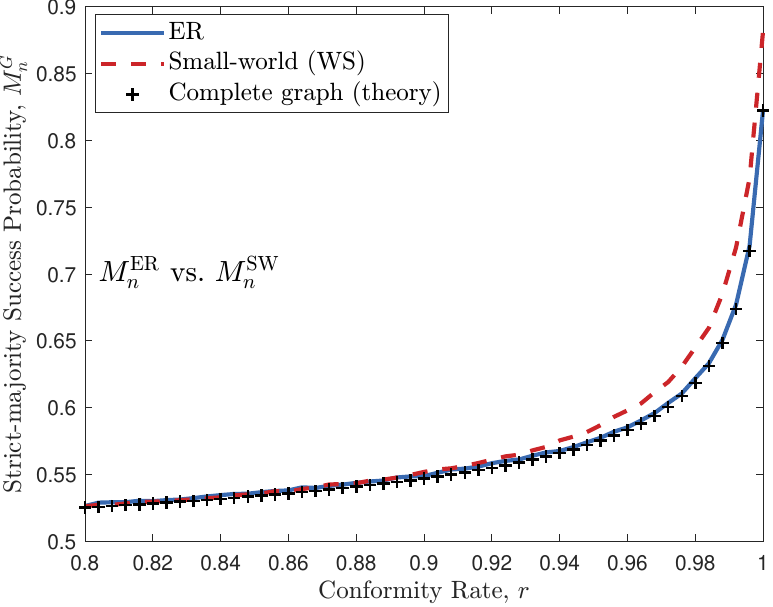}
\end{minipage}\hspace{0.005\textwidth}%
\begin{minipage}[c]{0.21\textwidth}
\centering
\includegraphics[width=\linewidth]{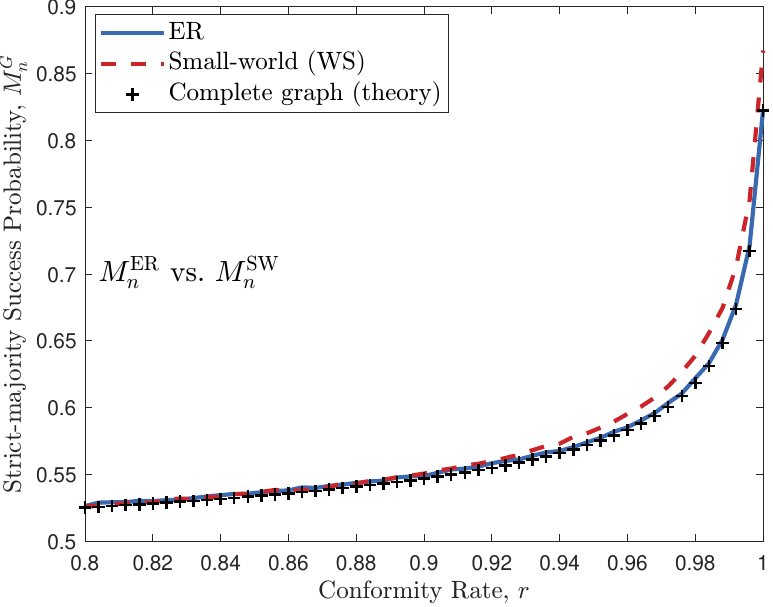}
\end{minipage}\hspace{0.005\textwidth}%
\begin{minipage}[c]{0.21\textwidth}
\centering
\includegraphics[width=\linewidth]{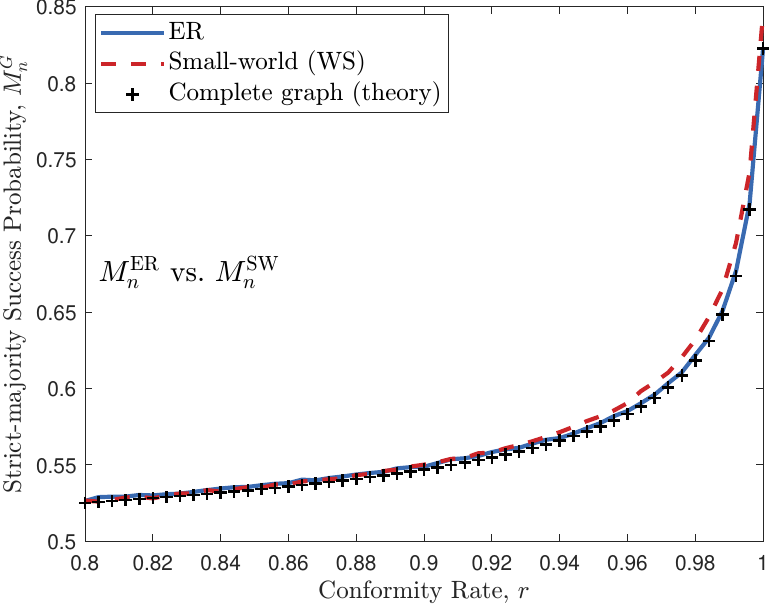}
\end{minipage}

\vspace{0.25em}

\noindent
\begin{minipage}[c]{0.10\textwidth}
\centering
\scriptsize
\makebox[\linewidth][c]{%
\begin{tabular}{@{}r@{}l@{}}
$\alpha_1$&$=5$\\[-0.05em]
$\alpha_2$&$=2$
\end{tabular}}
\end{minipage}\hspace{0.005\textwidth}%
\begin{minipage}[c]{0.21\textwidth}
\centering
\includegraphics[width=\linewidth]{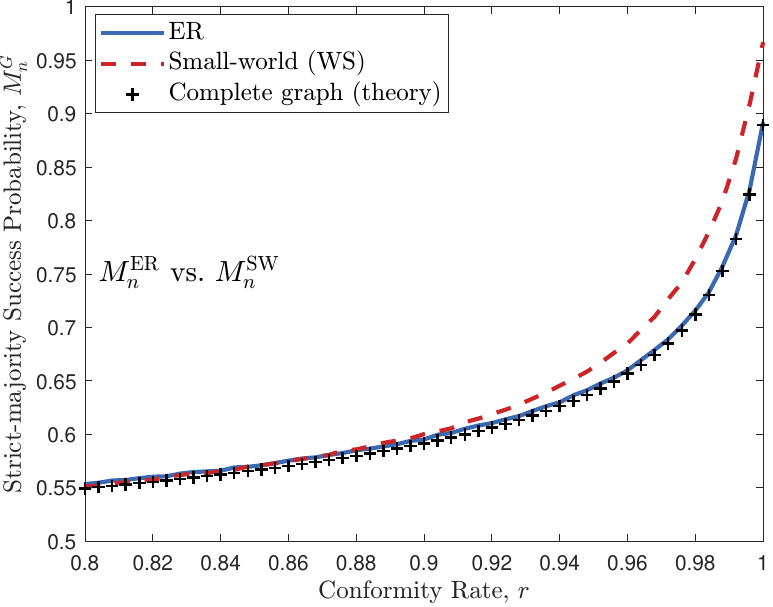}
\end{minipage}\hspace{0.005\textwidth}%
\begin{minipage}[c]{0.21\textwidth}
\centering
\includegraphics[width=\linewidth]{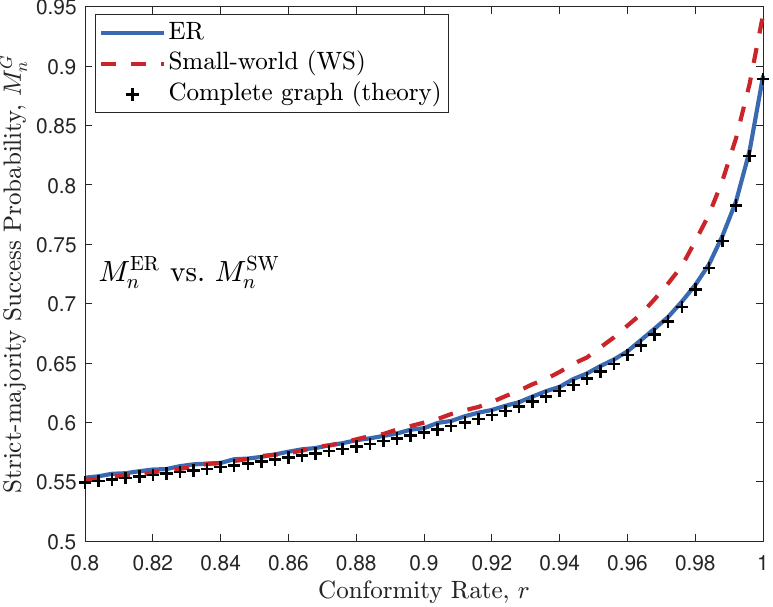}
\end{minipage}\hspace{0.005\textwidth}%
\begin{minipage}[c]{0.21\textwidth}
\centering
\includegraphics[width=\linewidth]{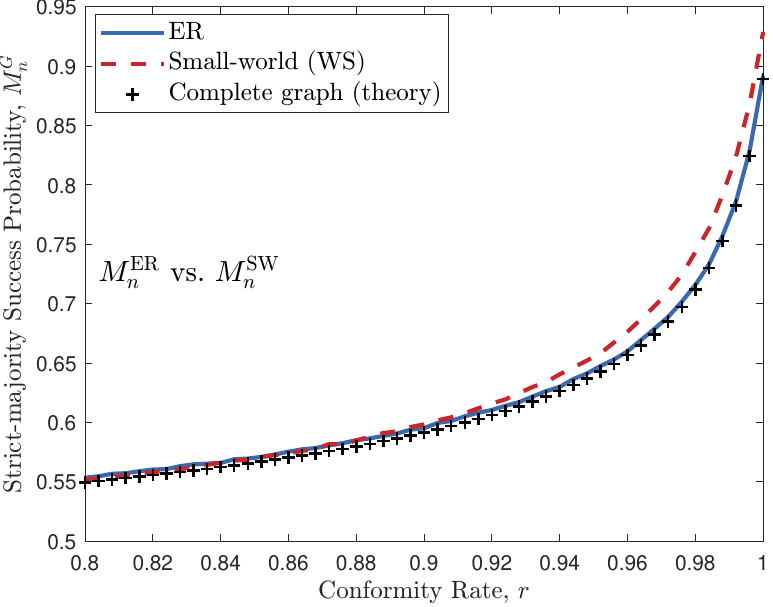}
\end{minipage}\hspace{0.005\textwidth}%
\begin{minipage}[c]{0.21\textwidth}
\centering
\includegraphics[width=\linewidth]{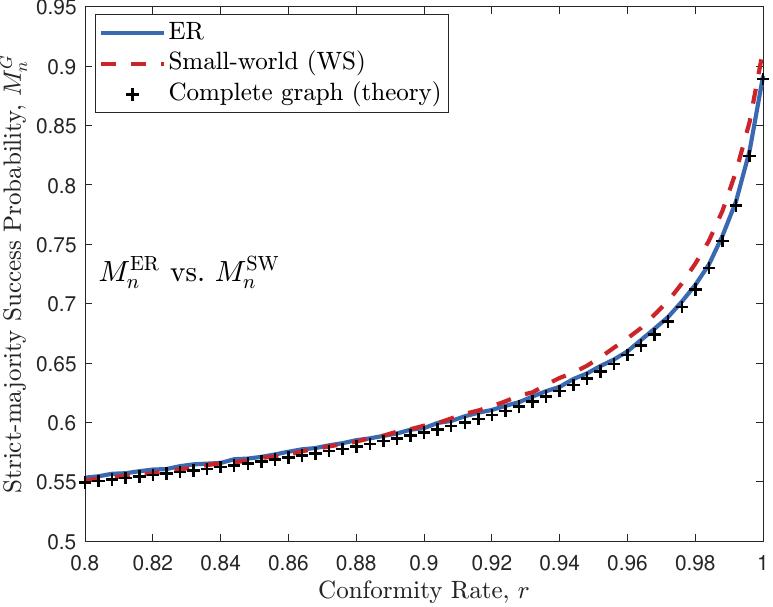}
\end{minipage}

\caption{Strict majority correctness $M_n^G$ for Erd\H{o}s--R\'enyi (ER) and small world (SW) networks as a function of the conformity rate $r$, across a range of rewiring probabilities $p_{\mathrm{rewire}}$. The three rows correspond to the parameter pairs indicated in the left column, and the four columns correspond to $p_{\mathrm{rewire}}=0.01$, $0.05$, $0.10$, and $0.20$, respectively. In every panel, the simulated ER and SW curves are shown together with the complete graph benchmark. Across all three parameter pairs and all four rewiring probabilities, the SW topology yields higher strict majority correctness than ER, but the gap narrows as $p_{\mathrm{rewire}}$ increases.}
\label{fig:ER_SW_rewire_Mn_vs_r}
\end{figure}

We now return to the two Condorcet-type comparisons from Sections~\ref{sec:majority-complete-finite} and \ref{sec:majority-no-deliberation} and examine how they behave on sparse and structured networks. Fig.~\ref{fig:all_topologies_Mn_minus_p} studies the analogue of Theorem~\ref{thm:majority-complete-finite} by plotting $M_n^G-p^G$ for ER, SF, ring, and SW networks, where $p^G$ denotes the stationary marginal correctness of a single free voter on topology $G$. Throughout this comparison we use $(\alpha_1,\alpha_2)=(5,2)$, $n=501$, $d=8$, and, for the SW case, $p_{\mathrm{rewire}}=0.01$. In each panel, the black ``+'' markers denote the complete graph benchmark. Fig.~\ref{fig:all_topologies_Mn_minus_p} shows that the finite electorate Condorcet-type inequality from Section~\ref{sec:majority-complete-finite} remains visible on all four network families considered here. In every panel, $M_n^G-p^G$ stays positive throughout the displayed range of $r$, so the simulated majority remains more accurate than a single post-deliberation free voter. The ER curve is again in excellent agreement with the complete graph theory, providing further numerical support for the mean-field approximation from Section~\ref{voter-er}. The SF, ring, and SW panels deviate somewhat from the complete graph benchmark, but the deviations are modest and do not alter the sign of the comparison. Thus, at least for this parameter regime, the majority advantage established in Theorem~\ref{thm:majority-complete-finite} appears robust to substantial network heterogeneity and structure.

\begin{figure}[!htbp]
\centering
\captionsetup{width=0.95\textwidth}
\includegraphics[width=0.95\textwidth]{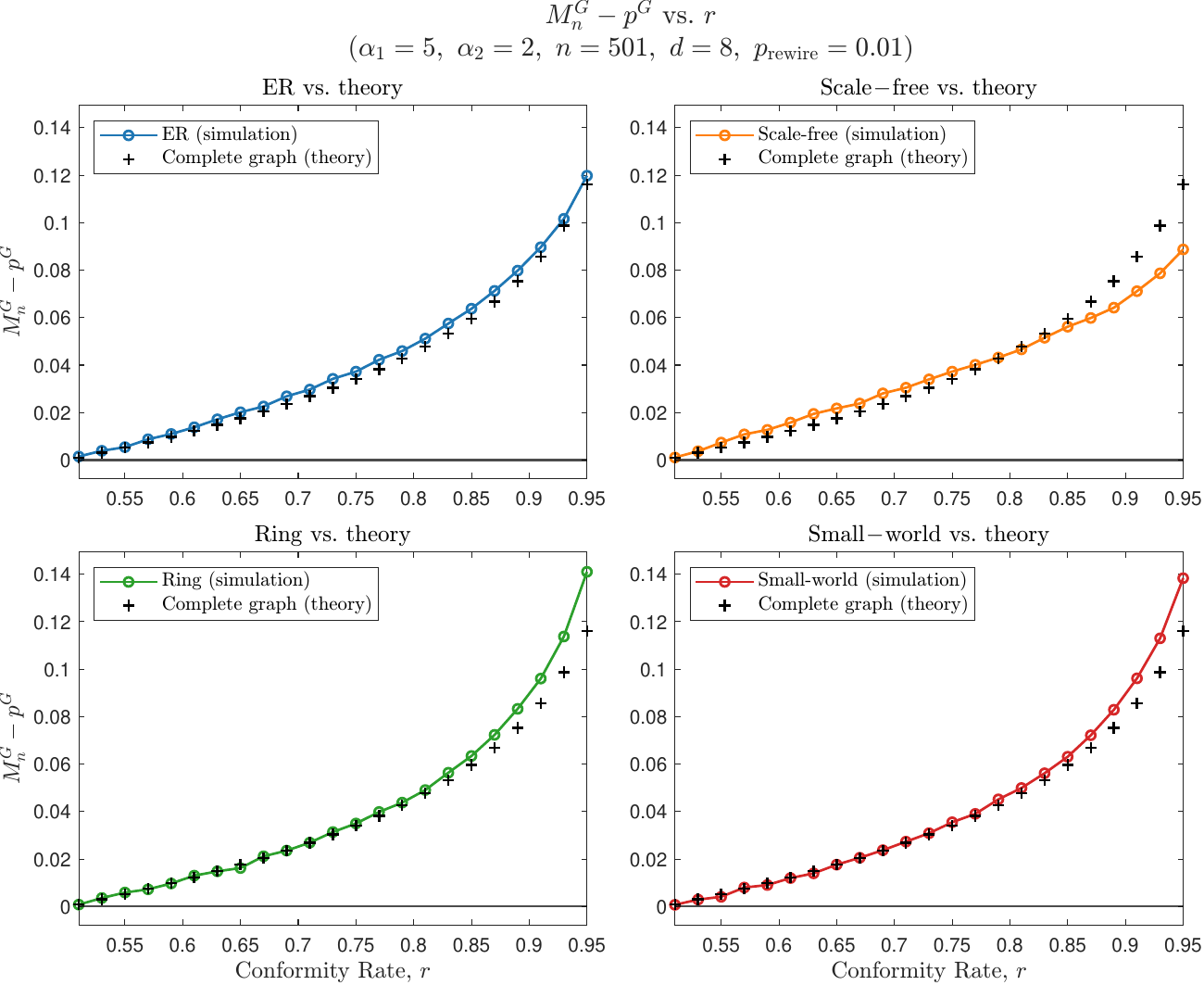}
\caption{Network analogue of the comparison in Theorem~\ref{thm:majority-complete-finite}. The four panels show $M_n^G-p^G$ as a function of the conformity rate $r$ for Erd\H{o}s--R\'enyi (ER), scale free (SF), ring, and small world (SW) networks, with $(\alpha_1,\alpha_2)=(5,2)$, $n=501$, $d=8$, and $p_{\mathrm{rewire}}=0.01$ for the SW networks. Here $M_n^G$ is the simulated strict majority correctness on topology $G$, and $p^G$ is the corresponding stationary marginal correctness of a single free voter after the campaign phase. In each panel, the black ``+'' markers denote the complete graph benchmark. Across all four topologies, the plotted quantity remains positive throughout the displayed range of $r$, numerically extending the inequality from Theorem~\ref{thm:majority-complete-finite} beyond complete graphs.}
\label{fig:all_topologies_Mn_minus_p}
\end{figure}

Fig.~\ref{fig:all_topologies_Mn_minus_MnND} turns to the second Condorcet-type comparison and studies the network analogue of Theorem~\ref{thm:majority-no-deliberation}. It plots $M_n^G-M_n^{\mathrm{ND}}$ for the same four topologies and the same parameter values. Here $M_n^{\mathrm{ND}}$ is the strict majority success probability in the no-deliberation benchmark, where all free--free edges are removed while each free voter remains exposed to the zealots. As in Fig.~\ref{fig:all_topologies_Mn_minus_p}, the black ``+'' markers denote the complete graph benchmark. Fig.~\ref{fig:all_topologies_Mn_minus_MnND} shows that the no-deliberation comparison from Section~\ref{sec:majority-no-deliberation} is likewise robust on networks. For every topology shown, $M_n^G-M_n^{\mathrm{ND}}$ remains negative throughout the plotted range, so endogenous interaction among free voters continues to reduce strict majority correctness relative to the corresponding no-deliberation benchmark. Once again, the ER curve lies very close to the complete graph theory, providing additional support for the mean-field reduction from Section~\ref{voter-er}. The SF, ring, and SW topologies show visible but still moderate departures from the complete graph benchmark, yet all preserve the same qualitative ordering. In this sense, the aggregation failure identified analytically in Theorem~\ref{thm:majority-no-deliberation} appears numerically robust well beyond the fully mixed setting.

\begin{figure}[!htbp]
\centering
\captionsetup{width=0.95\textwidth}
\includegraphics[width=0.95\textwidth]{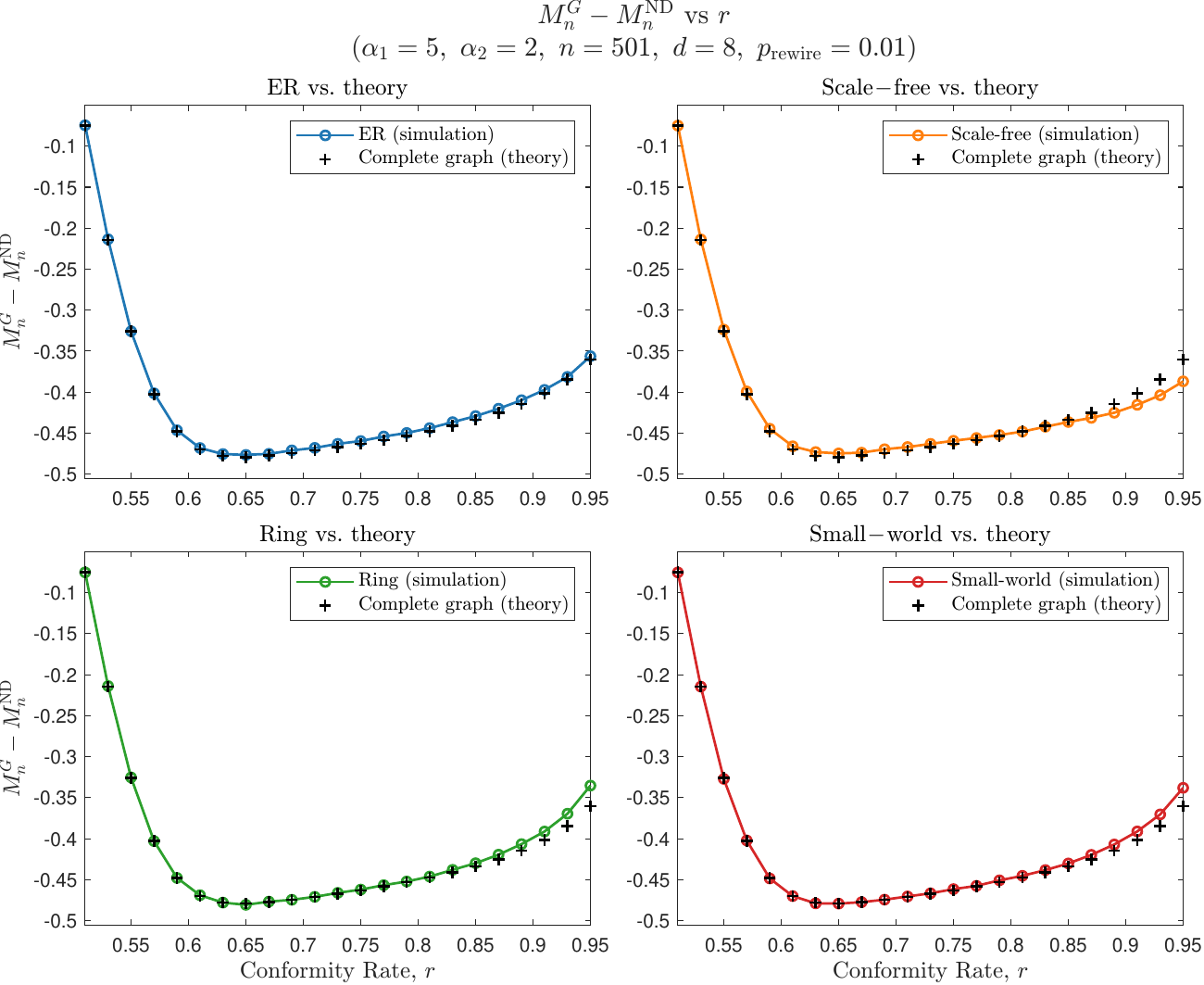}
\caption{Network analogue of the comparison in Theorem~\ref{thm:majority-no-deliberation}. The four panels show $M_n^G-M_n^{\mathrm{ND}}$ as a function of the conformity rate $r$ for Erd\H{o}s--R\'enyi (ER), scale free (SF), ring, and small world (SW) networks, with $(\alpha_1,\alpha_2)=(5,2)$, $n=501$, $d=8$, and $p_{\mathrm{rewire}}=0.01$ for the SW networks. Here $M_n^G$ is the simulated strict majority correctness on topology $G$, and $M_n^{\mathrm{ND}}$ is the strict majority success probability in the no-deliberation benchmark in which free voters do not influence one another. In each panel, the black ``+'' markers denote the complete graph benchmark. Across all four topologies, the plotted quantity remains negative throughout the displayed range of $r$, numerically extending the inequality from Theorem~\ref{thm:majority-no-deliberation} beyond complete graphs.}
\label{fig:all_topologies_Mn_minus_MnND}
\end{figure}

\section{Discussion}
\label{discussion}

This paper studies a simple but consequential departure from the classical Condorcet picture of collective choice. Instead of treating votes as independent draws, we model the campaign phase explicitly and allow repeated pre-vote interaction under the influence of committed leaders, which we call zealots. Across the sparse network settings studied here, a common theme emerges: social interaction changes collective accuracy mainly by changing the dependence structure of votes. In our framework, strict-majority performance is governed by the voter--voter correlations generated jointly by the pre-vote dynamics, the strength and balance of zealot influence, the conformity--contrarian mix, and the structure of the underlying interaction network.

For fully mixed electorates, the model yields an explicit long-run distribution of votes in both the binary and multi-alternative settings, which in turn permits an exact analysis of strict-majority correctness. The finite-electorate results show that when the zealots favor the correct alternative and conformity dominates contrarian updating, a strict majority is more likely to be correct than a single post-deliberation free voter. At the same time, the paper identifies a stronger comparison benchmark. Relative to the no-deliberation counterfactual, in which free voters respond only to zealot influence and are independent in stationarity, endogenous social interaction lowers strict-majority correctness. Put differently, pre-vote interaction can generate a finite electorate Condorcet-type gain relative to an individual after deliberation, while still producing an aggregation failure relative to the corresponding no-deliberation benchmark.

The large electorate result shows that the finite electorate advantage does not persist asymptotically. Whenever there is a persistent tendency toward contrarian behavior, for example through resistance to a view because it is gaining broad support or through persistent noise in social copying, both single-voter correctness and strict-majority correctness converge to $1/2$, the level of random choice, as the electorate grows. This stands in sharp contrast to the independent no-deliberation benchmark, under which strict-majority correctness converges to one. The intuition is that even a small tendency to resist what is becoming popular prevents the early advantage of the correct alternative from accumulating through the electorate. Communication no longer works as a one-way channel that steadily reinforces the informational advantage of the correct alternative. Instead, as support for the correct alternative spreads, it also generates resistance, so the informational gain from social interaction is gradually neutralized rather than amplified. Surprisingly, however, this collapse stops exactly at the boundary case of full conformity. When resistance disappears altogether, social influence no longer offsets its own informational effect; it carries the initial advantage forward in a cumulative way. That is why full conformity emerges as the tipping point: it is the unique case in which the epistemic benefit of pre-vote interaction survives asymptotically.

The simulation results on sparse networks extend this message in a way that is especially relevant for social choice and political behavior. Under the numerically validated Erd\H{o}s--R\'enyi mean-field approximation, a randomly connected electorate behaves much like the fully mixed model after an appropriate parameter rescaling. By contrast, other topologies produce systematically different levels of correlation among votes and therefore systematically different levels of strict-majority correctness. Scale-free networks perform worse than Erd\H{o}s--R\'enyi networks and generate stronger positive voter--voter correlations. A natural interpretation is that highly connected hubs act as secondary influencers: once many free voters are exposed to the same hubs, their votes become more synchronized, and the majority combines fewer independent judgments.  Ring networks and low rewiring small-world networks show the opposite tendency. Because influence remains more local, they weaken long-range coordination, preserve more diversity across neighborhoods, and generate lower voter--voter correlations. In this sense, topology matters because it changes how much effectively independent judgments survives until the vote. The paper therefore contributes to the literature on correlated voting not by imposing correlation statically from the outset, but by showing dynamically how it can arise through repeated communication on a network. 
This perspective also helps position the paper relative to the relevant literature. In social choice, our results complement work on correlated voting (see Section~\ref{literat}) by showing how dependence can arise from sustained interaction before the vote, rather than from a single common influence affecting everyone at once. In the voter model and network dynamics literatures, the analysis adds an explicitly epistemic criterion: the main object of interest is not only consensus, polarization, or stationary vote shares, but the probability that a strict majority selects the correct alternative. More broadly, the paper aligns with the political networks literature, which has emphasized that social relationships are central to politics and that treating political behavior as independent across individuals can obscure important mechanisms of influence and aggregation \citep{lazer2011}.

The connection to real voting systems is suggestive. Actual electorates rarely aggregate fully isolated private judgments. Instead, they are influenced by campaigns, parties, media, advocacy organizations, interpersonal discussion, and digitally mediated flows of information. In that broad sense, the model is relevant to settings such as legislative voting, where party leaders or whips may resemble persistent, zealot-like sources of influence, and where members also influence one another through pre-vote interaction on a network. It is also relevant to deliberative bodies in which discussion occurs within smaller groups before broader aggregation. The zealot--contrarian voter model is not meant as a literal description of any particular institution. Rather, it provides a tractable framework for thinking about how communication structure affects the informational value of majority decisions.

These findings also help clarify the role of correlation. If many voters are repeatedly influenced by the same zealots, hubs, or shared paths of influence , then their votes become more similar, and the resulting majority may reflect fewer effectively independent judgments than the vote count suggests. In that sense, majority rule becomes more fragile as an information aggregation device. By contrast, communication structures that keep interaction partly local, as in ring networks or low rewiring small-world networks, can preserve more variation across neighborhoods and thereby leave more informational content in the final vote. This makes the small-world results especially interesting from an institutional point of view. They suggest that systems with substantial local discussion and only limited long-range shortcuts may sometimes preserve more useful diversity than highly centralized systems. More broadly, they raise the possibility that hierarchical or federated decision processes, in which opinions are first formed within local communities and only then aggregated more widely, may differ epistemically from systems in which influence is rapidly funneled through a small set of highly connected voters.

From this viewpoint, the model suggests a useful distinction for the design of democratic institutions. Reforms that raise average voter competence need not improve collective accuracy if they also sharply increase vote correlation through repeated interaction among free voters, especially when network structure produces highly connected secondary influencers. But the implication is not simply that decentralized networks are better. What matters is the joint effect of influence strength and network architecture on the covariance structure of votes. This is one reason the contrast here among Erd\H{o}s--R'enyi, scale-free, ring, and small-world systems is informative. It suggests an epistemic perspective on campaign communication in which the key question is not only how competent voters are on average, but also how patterns of social interaction compress many nominally separate votes into a smaller number of effectively independent judgments.

The paper also suggests several concrete empirical directions. One is controlled laboratory or online experiments in which participants answer incentivized binary factual or policy questions, interact over assigned communication networks, and are exposed to a small number of experimentally seeded committed participants favoring competing alternatives. By varying network structure, especially the degree of centralization, as well as the persistence of seeded influence, such experiments could test whether majority accuracy changes in the direction predicted here. A second direction is field experiments in deliberative mini-publics, classrooms, civic forums, or citizens’ assemblies, using questions with independently verifiable answers rather than live electoral contests. In these settings, the relevant outcomes would include not only the final majority decision, but also the evolution of vote correlations and the extent to which highly connected participants shape the flow of influence. A third direction is to use high-resolution communication or social-media data to test whether more centralized discussion structures are associated with higher vote covariance and lower majority reliability on questions with verifiable answers.

The broader implication is straightforward. Majority rule should be understood not only as an aggregation rule, but also as the endpoint of a social process. Once the campaign phase is modeled explicitly, the key question is no longer simply whether voters are individually competent, but how pre-vote interaction and network structure jointly determine how much independent information remains at the moment of voting. The results of this paper suggest that this issue belongs naturally at the intersection of social choice theory, political behavior, and network science.

\appendix
\numberwithin{equation}{section}
\renewcommand{\theequation}{\thesection\arabic{equation}}

\section{Proof of Theorem~\ref{thm:stationary-m-contrarian}}
\label{app:proof-thm2}

We provide a detailed balance proof of the stationary distribution for the $m$-alternative zealot--contrarian voter model on a fully mixed population.
As in the binary case, the stationary law does not depend on the inertia parameter $\ell$, since $P=\ell I+(1-\ell)Q$ and any stationary distribution of the social-update kernel $Q$ is also stationary for $P$. We therefore set $\ell=0$ and work with $Q$.

Let $k=(k_1,\dots,k_m)\in\mathcal{S}_n$ and fix distinct $i,j\in\{1,\dots,m\}$. Write $k':=k+e_i-e_j$ for the state obtained by moving one free vote from alternative $j$ to alternative $i$.
Let $\alpha_0:=\sum_{s=1}^m \alpha_s$ and $T:=n-1+\alpha_0$.

\subsection*{Transition probabilities}

In a social update, we first choose a free voter currently supporting alternative $j$ with probability $k_j/n$.
Next we sample a second individual uniformly from the remaining $T$ individuals (free or zealot), so the sampled individual's vote equals $s$ with probability $(k_s+\alpha_s)/T$.
Given a sampled vote $s$, the updating voter adopts vote $i$ with probability $r$ if $s=i$, and with probability $(1-r)/(m-1)$ if $s\neq i$ (a contrarian step chooses uniformly among the $m-1$ alternatives other than $s$).
Hence, for $i\neq j$,
\begin{align}
Q\big(k\to k'\big)
&=\frac{k_j}{n}\left[r\,\frac{k_i+\alpha_i}{T}+\frac{1-r}{m-1}\left(1-\frac{k_i+\alpha_i}{T}\right)\right] \notag\\
&=\frac{k_j}{n}\left[\frac{1-r}{m-1}+\frac{mr-1}{m-1}\cdot\frac{k_i+\alpha_i}{T}\right]. \label{eq:Q_multistate_appendix}
\end{align}
When $r=\tfrac1m$, the bracket equals $1/m$ (independent of $k$), so $Q(k\to k')=(k_j/n)\cdot(1/m)$.

For $r\neq\tfrac1m$, define the shift and shifted parameters as in Theorem~\ref{thm:stationary-m-contrarian}:
\begin{equation}
\delta:=\frac{(1-r)T}{mr-1},
\qquad
\widetilde{\alpha}_i:=\alpha_i+\delta,
\qquad
\widetilde{\alpha}_0:=\sum_{s=1}^m \widetilde{\alpha}_s.
\end{equation}
Then \eqref{eq:Q_multistate_appendix} can be rewritten as
\begin{equation}
\label{eq:Q_multistate_tilde_appendix}
Q\big(k\to k+e_i-e_j\big)=\frac{k_j}{n}\cdot \frac{mr-1}{m-1}\cdot \frac{k_i+\widetilde{\alpha}_i}{T}.
\end{equation}

\subsection*{Proof of part (i): $r=\tfrac1m$}

When $r=\tfrac1m$, the updating voter chooses its new vote uniformly from $\{1,\dots,m\}$, regardless of the sampled vote.
Consequently, at the configuration level (tracking each free voter's vote), the Markov chain is invariant under permutations of voters and alternatives.
In stationarity, each free voter is uniform over the $m$ alternatives, which implies that the induced count vector $X=(X_1,\dots,X_m)$ is multinomial:
\begin{equation}
\Pp(X=k)=\frac{n!}{k_1!\cdots k_m!}\left(\frac1m\right)^n,\qquad k\in\mathcal{S}_n.
\end{equation}

\subsection*{Proof of part (ii): $r\neq\tfrac1m$}

Define, for $k\in\mathcal{S}_n$,
\begin{equation}
\label{eq:pi_candidate_appendix}
\pi(k):=\frac{n!}{k_1!\cdots k_m!}\,
\frac{(\widetilde{\alpha}_1)_{k_1}\cdots(\widetilde{\alpha}_m)_{k_m}}{(\widetilde{\alpha}_0)_n}.
\end{equation}
(When $mr>1$ we have $\widetilde{\alpha}_i>0$ and this is the standard Dirichlet--multinomial distribution; when $mr<1$ the same closed form still arises from detailed balance and remains a valid stationary distribution for the finite Markov chain.)

To verify stationarity, it suffices to check detailed balance on each nearest-neighbor move $k\to k'=k+e_i-e_j$.
Using \eqref{eq:pi_candidate_appendix} and the identities
\begin{equation}
\begin{aligned}
\frac{k_j!}{(k_j-1)!}&=k_j, &
\frac{(k_i+1)!}{k_i!}&=k_i+1,\\
\frac{(\widetilde{\alpha}_i)_{k_i+1}}{(\widetilde{\alpha}_i)_{k_i}}&=\widetilde{\alpha}_i+k_i, &
\frac{(\widetilde{\alpha}_j)_{k_j-1}}{(\widetilde{\alpha}_j)_{k_j}}&=\frac{1}{\widetilde{\alpha}_j+k_j-1},
\end{aligned}
\end{equation}
we obtain
\begin{equation}
\label{eq:pi_ratio_appendix}
\frac{\pi(k')}{\pi(k)}
=\frac{k_j}{k_i+1}\cdot \frac{k_i+\widetilde{\alpha}_i}{k_j-1+\widetilde{\alpha}_j}.
\end{equation}
On the other hand, the transition probabilities \eqref{eq:Q_multistate_tilde_appendix} give
\begin{equation}
\begin{aligned}
\frac{Q(k\to k')}{Q(k'\to k)}
&=
\frac{\frac{k_j}{n}\cdot \frac{mr-1}{m-1}\cdot \frac{k_i+\widetilde{\alpha}_i}{T}}
{\frac{k_i+1}{n}\cdot \frac{mr-1}{m-1}\cdot \frac{k_j-1+\widetilde{\alpha}_j}{T}}\\
&=
\frac{k_j}{k_i+1}\cdot \frac{k_i+\widetilde{\alpha}_i}{k_j-1+\widetilde{\alpha}_j}.
\end{aligned}
\end{equation}
Comparing with \eqref{eq:pi_ratio_appendix} shows that $\pi(k)\,Q(k\to k')=\pi(k')\,Q(k'\to k)$ for every neighboring pair, hence $\pi$ satisfies detailed balance and is stationary for $Q$.

Finally, normalization in \eqref{eq:pi_candidate_appendix} follows from the multivariate Vandermonde identity
\begin{equation}
\sum_{k\in\mathcal{S}_n}\frac{n!}{k_1!\cdots k_m!}\,(\widetilde{\alpha}_1)_{k_1}\cdots(\widetilde{\alpha}_m)_{k_m}
=(\widetilde{\alpha}_0)_n,
\end{equation}
which can be obtained by multiplying the binomial series $(1-z)^{-\widetilde{\alpha}_i}=\sum_{k_i\ge 0}(\widetilde{\alpha}_i)_{k_i}z^{k_i}/k_i!$ over $i=1,\dots,m$ and equating coefficients of $z^n$ after multiplying by $n!$.
Since the chain is finite, irreducible, and aperiodic, this stationary distribution is unique.

\section{A second proof of Theorem~\ref{thm:majority-complete-finite} via the P\'olya--Eggenberger family}
\label{app:berg-majority-proof}

The proof in Section~\ref{sec:majority-complete-finite} proceeds directly from the stationary distribution derived from the
zealot--contrarian dynamics. For completeness, we present a second proof based on a reparameterization into the P\'olya--Eggenberger
family studied by \citet{berg1993condorcet}. The two formulations should nevertheless be kept conceptually distinct.

Berg studies a P\'olya--Eggenberger family written in terms of a marginal success parameter $p$ and a dependence
parameter $\psi$. He then analyzes majority functions within that family. His Condorcet-type inequality is therefore static in nature: once a
vote count distribution belongs to the P\'olya--Eggenberger family, the comparison between strict majority accuracy and individual accuracy
depends only on that distribution, not on the micro-dynamics that generated it.

By contrast, our stationary distribution arises from an explicit opinion dynamics model with parameters $(\alpha_1,\alpha_2,r,n)$. As shown in Theorem~\ref{thm:stationary-binary-contrarian}, it
is written in the form
\begin{equation}
\pi(k)=\binom{n}{k}\frac{(\theta)_k(\beta)_{n-k}}{(\theta+\beta)_n},
\qquad k=0,1,\dots,n,
\label{eq:B1}
\end{equation}
with $(\theta,\beta)$ determined by \eqref{eq:def-theta-beta_maintext}. We show below that every stationary distribution realizable by our
zealot--contrarian dynamics belongs to the P\'olya--Eggenberger family. The converse is not true: our dynamics do not generate all probability
laws covered by the admissible parameter region of the P\'olya--Eggenberger family. The contribution here is therefore dynamic. We derive the
stationary law \eqref{eq:B1} from an explicit voter interaction process and relate $(\alpha_1,\alpha_2,r)$ to the induced parameters $(p,\psi)$.
That correspondence is enough to import Berg's strict majority correctness inequality.

To see the algebraic correspondence, let
\begin{equation}
p=\frac{\theta}{\theta+\beta},
\qquad q=1-p=\frac{\beta}{\theta+\beta},
\qquad \psi=\frac{1}{\theta+\beta}.
\end{equation}
For $m\in\mathbb{Z}_{\ge 0}$, we write the generalized ascending factorial
\begin{equation}
x^{[m,\psi]}:=x(x+\psi)\cdots (x+(m-1)\psi),
\qquad x^{[0,\psi]}:=1
\end{equation}
in terms of the Pochhammer (rising) factorial used in Theorem~\ref{thm:stationary-binary-contrarian}:
\begin{equation}
x^{[m,\psi]}=\psi^m\left(\frac{x}{\psi}\right)_m.
\end{equation}
With $p/\psi=\theta$, $q/\psi=\beta$, and $1/\psi=\theta+\beta$, the P\'olya--Eggenberger distribution can be written as
\begin{align}
b_n(k;p,\psi)
&:=\binom{n}{k}\frac{p^{[k,\psi]}q^{[n-k,\psi]}}{1^{[n,\psi]}} \notag\\
&=\binom{n}{k}\frac{\psi^k(\theta)_k\,\psi^{n-k}(\beta)_{n-k}}{\psi^n(\theta+\beta)_n} \notag\\
&=\binom{n}{k}\frac{(\theta)_k(\beta)_{n-k}}{(\theta+\beta)_n}
=\pi(k).
\end{align}
Thus the stationary distribution of the zealot--contrarian voter model belongs algebraically to the P\'olya--Eggenberger family, with marginal parameter
\begin{equation}
p=\frac{\theta}{\theta+\beta}.
\end{equation}

\begin{proof}[Second proof of Theorem~\ref{thm:majority-complete-finite}]
Assume $r\neq \tfrac12$ and $n=2s+1$ odd. By the above reparameterization, the stationary law of $X$ coincides with Berg's
P\'olya--Eggenberger distribution $b_n(k;p,\psi)$. Berg's ``Condorcet's jury theorem II'' for odd electorates states that the corresponding majority
function satisfies
\begin{equation}
\sum_{k=s+1}^{n} b_n(k;p,\psi)>p
\qquad\text{whenever }p>\tfrac12
\end{equation}
for admissible parameter values, including negative-dependence regimes. Applying this gives
\begin{equation}
M_n=\sum_{k=s+1}^{n}\pi(k)=\sum_{k=s+1}^{n} b_n(k;p,\psi)>p,
\end{equation}
which is exactly the claim of Theorem~\ref{thm:majority-complete-finite}. If $r=\tfrac12$, then the claim reduces to the already noted
identity $M_n=p=\tfrac12$. This appendix proof is only a distributional shortcut: the main theorem does not rely on Berg's formulation,
and the key point is that the stationary vote count distribution generated by the zealot--contrarian dynamics falls into a family for which
Berg's inequality applies.
\end{proof}

\section{Proof of Theorem~\ref{thm:majority-large-n-mixed}}
\label{app:large-n-majority-proof}

This appendix gives the full proof of the large-electorate result for the mixed regime $r\in(\tfrac12,1)$, where conformist updating remains dominant but contrarian updating is still present.

\begin{proof}
Write
\[
a:=2r-1>0,
\]
and recall
\[
\theta_n=\frac{r\alpha_1+(1-r)(n-1+\alpha_2)}{a},
\qquad
\beta_n=\frac{(1-r)\alpha_1+r(n-1+\alpha_2)}{a}-n+1.
\]
By Theorem~\ref{thm:stationary-binary-contrarian}, $X_n$ has the beta--binomial law
\[
\Pp(X_n=k)=\binom{n}{k}\frac{(\theta_n)_k(\beta_n)_{n-k}}{(\theta_n+\beta_n)_n},
\qquad k=0,1,\dots,n.
\]
Since $r>\tfrac12$, this is an ordinary beta--binomial distribution, so there exists a mixing variable
\[
P_n\sim \mathrm{Beta}(\theta_n,\beta_n)
\]
such that
\[
X_n\mid P_n\sim \mathrm{Binomial}(n,P_n).
\]
The parameters satisfy
\begin{equation}
\label{eq:sumdiff-large-app}
\theta_n+\beta_n=\frac{\alpha_1+\alpha_2+2(1-r)(n-1)}{a},
\qquad
\theta_n-\beta_n=\alpha_1-\alpha_2.
\end{equation}

The proof separates the fluctuation of the latent support level $P_n$ from the conditional binomial fluctuation around that latent level.

We begin with the asymptotics of the beta mixing variable. Let
\[
c=\frac{1-r}{a}>0.
\]
Then \eqref{eq:sumdiff-large-app} implies
\[
\frac{\theta_n}{n}\longrightarrow c,
\qquad
\frac{\beta_n}{n}\longrightarrow c.
\]
Because $P_n\sim\mathrm{Beta}(\theta_n,\beta_n)$ and both shape parameters diverge linearly in $n$, the standard large-parameter normal approximation for the beta distribution gives
\[
\frac{P_n-\mu_n}{\sigma_n}
\xrightarrow{d}
\mathcal{N}(0,1),
\]
where
\[
\mu_n=\frac{\theta_n}{\theta_n+\beta_n},
\qquad
\sigma_n^2=\frac{\theta_n\beta_n}{(\theta_n+\beta_n)^2(\theta_n+\beta_n+1)}.
\]
Using \eqref{eq:sumdiff-large-app},
\[
\mu_n
=
\frac12+\frac{\alpha_1-\alpha_2}{2(\theta_n+\beta_n)}
=
\frac12+O\!\left(\frac1n\right),
\]
while a direct calculation from $\theta_n/n\to c$ and $\beta_n/n\to c$ yields
\[
n\sigma_n^2
\longrightarrow
\frac{1}{8c}
=
\frac{2r-1}{8(1-r)}.
\]
Now write
\[
A_n=\sqrt n\!\left(P_n-\frac12\right)
=
(\sqrt n\,\sigma_n)\frac{P_n-\mu_n}{\sigma_n}
+
\sqrt n\!\left(\mu_n-\frac12\right).
\]
The first factor converges in distribution to $\mathcal N(0,1)$, while
\[
\sqrt n\,\sigma_n \longrightarrow \sqrt{\frac{2r-1}{8(1-r)}}
\qquad\text{and}\qquad
\sqrt n\!\left(\mu_n-\frac12\right)\longrightarrow 0.
\]
Therefore, by Slutsky's theorem,
\begin{equation}
\label{eq:mixing-clt-large-app}
A_n
\xrightarrow{d}
\mathcal{N}\!\left(0,\frac{2r-1}{8(1-r)}\right).
\end{equation}
In particular,
\begin{equation}
\label{eq:mixing-prob-large-app}
P_n\longrightarrow \frac12
\qquad\text{in probability.}
\end{equation}

Next we control the conditional binomial fluctuation. Fix $\delta\in(0,\tfrac12)$ and $t\in\mathbb{R}$. For $p\in[\delta,1-\delta]$, let
\[
Z_{n,p}=\frac{B_{n,p}-np}{\sqrt n},
\qquad
B_{n,p}\sim \mathrm{Binomial}(n,p).
\]
The classical de Moivre--Laplace theorem implies that, for each fixed $p\in(0,1)$,
\[
Z_{n,p}\xrightarrow{d}\mathcal{N}(0,p(1-p)).
\]
Because $p$ is restricted to the compact interval $[\delta,1-\delta]$, the standard characteristic-function proof gives this approximation uniformly in $p$. In particular, uniformly for $p\in[\delta,1-\delta]$,
\begin{equation}
\label{eq:de Moivre--Laplace}
\E\!\left[e^{itZ_{n,p}}\right]
=
\exp\!\left(-\frac{t^2}{2}p(1-p)\right)+o(1).
\end{equation}
Now let
\[
Z_n=\frac{X_n-nP_n}{\sqrt n}.
\]
Then
\begin{equation}
\label{eq:decomposition-large-app}
\frac{X_n-\frac{n}{2}}{\sqrt{n}}
=
\underbrace{\sqrt{n}\left(P_n-\frac12\right)}_{A_n}
+
\underbrace{\frac{X_n-nP_n}{\sqrt{n}}}_{Z_n}.
\end{equation}
To identify the joint limit of $(A_n,Z_n)$, fix $s,t\in\mathbb{R}$ and write
\[
E_n=\{P_n\in[\delta,1-\delta]\}.
\]
We decompose the joint characteristic function according to this event:
\begin{align}
\E\!\left[e^{isA_n+itZ_n}\right]
&=
\E\!\left[e^{isA_n+itZ_n}\mathbf{1}_{E_n}\right]
+
\E\!\left[e^{isA_n+itZ_n}\mathbf{1}_{E_n^c}\right]
\nonumber\\
&=
\E\!\left[\E\!\left[e^{isA_n+itZ_n}\mathbf{1}_{E_n}\mid P_n\right]\right]
+
\E\!\left[e^{isA_n+itZ_n}\mathbf{1}_{E_n^c}\right]
\nonumber\\
&=
\E\!\left[
e^{isA_n}\mathbf{1}_{E_n}\E\!\left[e^{itZ_n}\mid P_n\right]
\right]
+
\E\!\left[e^{isA_n+itZ_n}\mathbf{1}_{E_n^c}\right],
\label{eq:split-large-app}
\end{align}
where in the last line we used that $A_n$ and $\mathbf{1}_{E_n}$ are measurable functions of $P_n$. By \eqref{eq:mixing-prob-large-app}, $\Pp(E_n^c)\to0$, and therefore the second term in \eqref{eq:split-large-app} tends to zero.

On $E_n$, the conditional law of $Z_n$ given $P_n=p$ is the same as that of $Z_{n,p}$. Hence \eqref{eq:de Moivre--Laplace} implies
\[
\E\!\left[e^{itZ_n}\mid P_n\right]
=
\exp\!\left(-\frac{t^2}{2}P_n(1-P_n)\right)+o(1),
\]
where the $o(1)$ term is uniform on $E_n$. Substituting into \eqref{eq:split-large-app} gives
\[
\E\!\left[e^{isA_n+itZ_n}\right]
=
\E\!\left[
e^{isA_n}\exp\!\left(-\frac{t^2}{2}P_n(1-P_n)\right)\mathbf{1}_{E_n}
\right]
+o(1).
\]
Since $\Pp(E_n^c)\to0$ and the integrand is bounded by $1$ in modulus, we may remove the indicator and obtain
\begin{equation}
\label{eq:first-reduction-large-app}
\E\!\left[e^{isA_n+itZ_n}\right]
=
\E\!\left[
e^{isA_n}\exp\!\left(-\frac{t^2}{2}P_n(1-P_n)\right)
\right]
+o(1).
\end{equation}

Because $P_n\to\tfrac12$ in probability, we have
\[
P_n(1-P_n)\longrightarrow \frac14
\qquad\text{in probability,}
\]
and thus
\[
\exp\!\left(-\frac{t^2}{2}P_n(1-P_n)\right)
\longrightarrow
e^{-t^2/8}
\qquad\text{in probability.}
\]
Since the difference is bounded by $2$, it also converges to zero in $L^1$, and therefore
\[
\E\!\left[
e^{isA_n}\exp\!\left(-\frac{t^2}{2}P_n(1-P_n)\right)
\right]
-
e^{-t^2/8}\E\!\left[e^{isA_n}\right]
\longrightarrow 0.
\]
Combining this with \eqref{eq:first-reduction-large-app} and \eqref{eq:mixing-clt-large-app}, we obtain
\begin{equation}
\label{eq:joint-limit-large-app}
\E\!\left[e^{isA_n+itZ_n}\right]
\longrightarrow
\exp\!\left(
-\frac{s^2}{2}\cdot\frac{2r-1}{8(1-r)}
-\frac{t^2}{8}
\right).
\end{equation}
Thus $(A_n,Z_n)$ converges jointly to a centered bivariate normal vector $(A,Z)$ with independent components,
\[
A\sim \mathcal{N}\!\left(0,\frac{2r-1}{8(1-r)}\right),
\qquad
Z\sim \mathcal{N}\!\left(0,\frac14\right).
\]

By \eqref{eq:decomposition-large-app} and \eqref{eq:joint-limit-large-app},
\[
\frac{X_n-n/2}{\sqrt n}
\xrightarrow{d}
A+Z.
\]
Since $A$ and $Z$ are independent centered normals, the limit is centered normal with variance
\[
\frac{2r-1}{8(1-r)}+\frac14
=
\frac{1}{8(1-r)}.
\]
Therefore
\[
\frac{X_n-n/2}{\sqrt n}
\xrightarrow{d}
\mathcal{N}\!\left(0,\frac{1}{8(1-r)}\right).
\]
For odd $n$,
\[
M_n
=
\Pp\!\left(X_n>\frac n2\right)
=
\Pp\!\left(\frac{X_n-n/2}{\sqrt n}>0\right).
\]
Because the limiting normal distribution is continuous and symmetric about $0$,
\[
M_n\longrightarrow \frac12.
\]

Finally, if $\alpha_1>\alpha_2$, then \eqref{eq:p-mean} gives
\[
p_n-\frac12
=
\frac{(2r-1)(\alpha_1-\alpha_2)}
{2\bigl[\alpha_1+\alpha_2+2(1-r)(n-1)\bigr]}
>0,
\]
so $p_n>\tfrac12$ for every $n$, and the same expression shows that $p_n\to\tfrac12$ as $n\to\infty$.
\end{proof}

\vspace{1em}
\noindent\textbf{Funding} The authors received no financial support for the research, authorship, or publication of this article.

\bibliographystyle{chicago}
\bibliography{discussion_newrefs}

\end{document}